\documentclass[12pt]{article}

\newcommand{\blind}{1}

\addtolength{\oddsidemargin}{-.5in}%
\addtolength{\evensidemargin}{-1in}%
\addtolength{\textwidth}{1in}%
\addtolength{\textheight}{1.7in}%
\addtolength{\topmargin}{-1in}%

\usepackage{graphicx}
\usepackage{enumerate}
\usepackage{url} 
\usepackage{longtable}
\usepackage[section]{placeins}
\usepackage[top=1in, left=1in, right=1in, bottom=1in]{geometry}
\geometry{letterpaper}	
\usepackage[parfill]{parskip}	
\usepackage{caption}
\usepackage[normalem]{ulem}
\usepackage{subfigure}
\usepackage{subcaption} 
\usepackage{mwe} 
\usepackage{cancel} 
\usepackage{soul} 
\usepackage[export]{adjustbox}
\usepackage{amssymb}
\usepackage{epstopdf}
\DeclareSymbolFont{rsfs}{U}{rsfs}{m}{n}
\DeclareSymbolFontAlphabet{\mathscrsfs}{rsfs}
\DeclareGraphicsRule{.tif}{png}{.png}{`convert #1 `dirname #1`/`basename #1 .tif`.png}
\usepackage{bbm}
\usepackage{bm}
\usepackage{amssymb}
\usepackage{mathtools}
\usepackage{xcolor}
\usepackage{amsfonts}
\usepackage{upgreek}

\usepackage{lscape}
\usepackage{enumitem}
\usepackage{rotating}
\usepackage{setspace}
\usepackage{threeparttable}
\usepackage{booktabs}
\usepackage[compact]{titlesec}
\usepackage{lmodern}
\fontfamily{lmtt}\selectfont
\usepackage[T1]{fontenc}

\usepackage[nocitation]{apacite}
\usepackage{natbib}
\bibpunct{(}{)}{;}{a}{}{,}
\usepackage{hyperref}
\usepackage{titling}
\usepackage{pifont}
\usepackage[capposition=top]{floatrow}
\newcommand{\subtitle}[1]{%
  \posttitle{%
    \par\end{center}
    \begin{center}\large#1\end{center}
    \vskip0.5em}%
}
\usepackage{amsthm}
\usepackage{amsmath} 
\usepackage{multirow}
\usepackage{mathrsfs}
\usepackage{float}
\usepackage[toc,page]{appendix}


\newcommand\independent{\protect\mathpalette{\protect\independenT}{\perp}}
\def\independenT#1#2{\mathrel{\rlap{$#1#2$}\mkern2mu{#1#2}}}

\newcommand{\Var}{\textrm{Var}}
\newtheorem{theorem}{Theorem}[section]
\newtheorem{lemma}[theorem]{Lemma}

\newtheorem{corollary}[theorem]{Corollary}

\newtheorem{definition}{Definition}[section]
\newtheorem{remark}{Remark}[section]

\graphicspath{ {./graphics/} }

\usepackage{mathrsfs}

\usepackage{color}


\newcommand{\rn}[1]{%
  \textup{\expandafter{\romannumeral#1}}%
}
\newcommand{\RN}[1]{%
  \textup{\uppercase\expandafter{\romannumeral#1}}%
}

\newcommand{\E}{\mathbb{E}}
\newcommand{\Pb}{\mathbb{P}}

\newcommand{\var}{\text{var}}

\newcommand{\Pn}{\mathbb{P}_n}
\newcommand{\Gn}{\mathbb{G}_n}
\newcommand{\R}{\mathbb{R}}

\makeatletter
\newcommand{\customlabel}[2]{%
   \protected@write \@auxout {}{\string \newlabel {#1}{{#2}{\thepage}{#2}{#1}{}} }%
   \hypertarget{#1}{#2}
}
\makeatother


\newenvironment{assumptionp}[1]{
  
  \assumptionalt
}{\endassumptionalt}

\newcommand{\argmin}{\mathop{\mathrm{argmin}}}
\makeatletter


\begin{document}

\def\spacingset#1{\renewcommand{\baselinestretch}%
{#1}\small\normalsize} \spacingset{1}


\if1\blind
{
  \title{\bf Causal K-Means Clustering}
  \author{Kwangho Kim\thanks{Department of Statistics, Korea University, 145 Anam-ro, Seongbuk-gu, Seoul 02841, Korea; email: \url{kwanghk@korea.ac.kr}.}  \hspace{2cm}
    Jisu Kim\thanks{Department of Statistics, Seoul National University, 1 Gwanak-ro, Gwanak-gu, Seoul 08826, Korea; email: \url{jkim82133@snu.ac.kr}.} \hspace{2cm}
    Edward H. Kennedy\thanks{Department of Statistics and Data Science, Carnegie Mellon University, 5000 Forbes Ave, Pittsburgh, PA 15213, USA; email: \url{edward@stat.cmu.edu}.}\\
    }
  \date{}
  \maketitle
} \fi

\if0\blind
{
  \bigskip
  \bigskip
  \bigskip
  \begin{center}
    {\LARGE\bf Causal k-Means Clustering}
\end{center}
  \medskip
} \fi

\bigskip
\begin{abstract}
Causal effects are often characterized with population summaries. These might provide an incomplete picture when there are heterogeneous treatment effects across subgroups. Since the subgroup structure is typically unknown, it is more challenging to identify and evaluate subgroup effects than population effects. We propose a new solution to this problem: \emph{Causal k-Means Clustering}, which leverages the k-means clustering algorithm to uncover the unknown subgroup structure. Our problem differs significantly from the conventional clustering setup since the variables to be clustered are unknown counterfactual functions. We present a plug-in estimator which is simple and readily implementable using off-the-shelf algorithms, and study its rate of convergence. We also develop a new bias-corrected estimator based on nonparametric efficiency theory and double machine learning, and show that this estimator achieves fast root-n rates and asymptotic normality in large nonparametric models. Our proposed methods are especially useful for modern outcome-wide studies with multiple treatment levels. Further, our framework is extensible to clustering with generic pseudo-outcomes, such as partially observed outcomes or otherwise unknown functions. Finally, we explore finite sample properties via simulation, and illustrate the proposed methods using a study of mobile-supported self-management for chronic low back pain.
\end{abstract}

\noindent%
{\it Keywords:} Causal inference; Heterogeneous treatment effect; Personalization; Subgroup analysis; Observational studies
\vfill

\newpage
\spacingset{1.25} 

\section{Introduction} \label{sec:introduction}

\subsection{Heterogeneity in Treatment Effects}
Statistical causal inference is concerned with how an outcome would change under an intervention on a cause of interest. Among causal estimands, the average treatment effect (ATE) is one of the most fundamental and extensively studied. For a binary treatment $A \in \{0,1\}$, the ATE is defined by
\begin{align} \label{def:ATE}
    \E(Y^1 - Y^0),
\end{align}
where $Y^a$ is the potential outcome that would have been observed under treatment $A = a$ \citep{rubin1974estimating}. For each unit, only one of $Y^0$ or $Y^1$ is observed, while the other remains unobserved and counterfactual. There has been lots of work concerning efficient and flexible estimation of the ATE and its analogs \citep[][]{kennedy2022semiparametric}. 

However, treatment effects often vary across subgroups in both magnitude and direction; some subgroups may experience larger effects than others, and a treatment may even benefit some subgroups while harming others. A potential shortcoming of the ATE is that it can mask this effect heterogeneity. Identifying treatment effect heterogeneity and the subgroups in which it arises plays an essential role in fields such as policy evaluation, drug development, and health care, and has received growing attention. For example, patients with different subtypes of cancer often react differently to the same treatment; however, our understanding of cancer subtypes at the molecular level is limited, and there is little consensus about which treatments are most effective for which patients \citep{kravitz2004evidence, hayden2009personalized}. Typically, the functional relationship between treatment effects and unit attributes is unknown a priori, so such heterogeneity must be explored using data-driven methods. Despite a growing body of recent work, this area remains relatively underexplored compared to other branches of causal inference \citep{kennedy2020optimal}, with several key challenges yet to be addressed.

To better understand treatment effect heterogeneity, investigators often target to estimate the conditional average treatment effect (CATE):
\begin{equation} \label{eqn:CATE}
\tau(X) = \E[Y^1-Y^0 \mid X],
\end{equation}
where $X\in\mathcal{X}$ is a vector of observed covariates. The CATE provides an individualized map of treatment effects over the covariate space, thereby enabling causal effect estimates tailored to each individual’s characteristics. Many methods have been proposed for CATE estimation, with recent work focusing on nonparametric and machine learning approaches that allow smooth, complex variation in effects across $X$. For example, existing approaches include loss-based super learning \citep{van2015targeted}, recursive partitioning \citep{athey2016recursive,zhang2017mining}, random forests \citep{foster2011subgroup,wager2018estimation}, support vector machine \citep{imai2013estimating}, weighted ensembles \citep{grimmer2017estimating}, neural network based on integral probability metrics \citep{shalit2017estimating}, meta-learning for unbalanced designs \citep{kunzel2017meta}, and reproducing kernel Hilbert space methods with oracle-efficiency guarantees \citep{nie2017quasi}. More recently, \citet{kennedy2020optimal} derived model-free error bounds and proposed an estimator attaining minimax-optimal convergence rates under smoothness conditions.

Some studies instead adopt a more structured framework for directly modeling effect heterogeneity \citep[e.g.,][]{shahn2017latent,suk2021hybridizing}. Rather than seeking to recover a fully individualized effect surface over $\mathcal{X}$, these approaches represent heterogeneity through a low-dimensional latent subgroup structure with subgroup-specific treatment effects. Accordingly, their primary objective is not fine-grained prediction for each covariate profile, but parsimonious identification of latent classes that account for systematic variation in treatment response.

\subsection{Understanding Heterogeneity via Cluster Analysis}

Existing work on treatment effect heterogeneity has largely emphasized supervised learning methods designed for accurate CATE estimation. While powerful for individualized prediction, such approaches are not primarily designed to reveal the underlying subgroup structure of treatment response. At the same time, existing parametric latent class approaches impose restrictive assumptions on the underlying structure of heterogeneity. In contrast, we study treatment effect heterogeneity from an unsupervised learning perspective. We propose \emph{causal clustering}, a new approach that uses tools from cluster analysis to identify subgroups with similar treatment effects in a flexible, nonparametric way. Our framework is therefore primarily descriptive and discovery-oriented, filling an important gap in the literature. 

\begin{figure}[t!]
\centering
\subfigure[]{\includegraphics[width=0.32\textwidth]{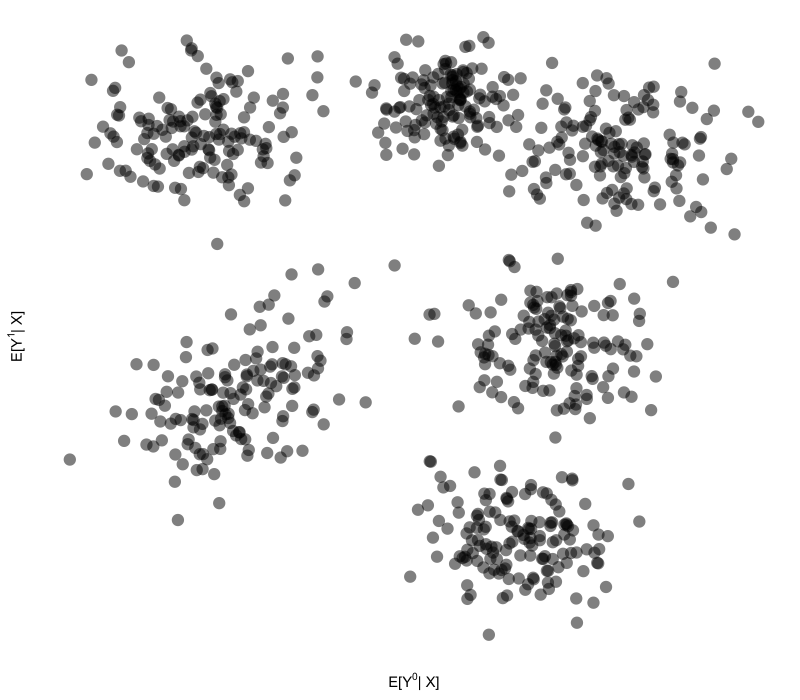}} 
\hfill%
\subfigure[]{\includegraphics[width=0.32\textwidth]{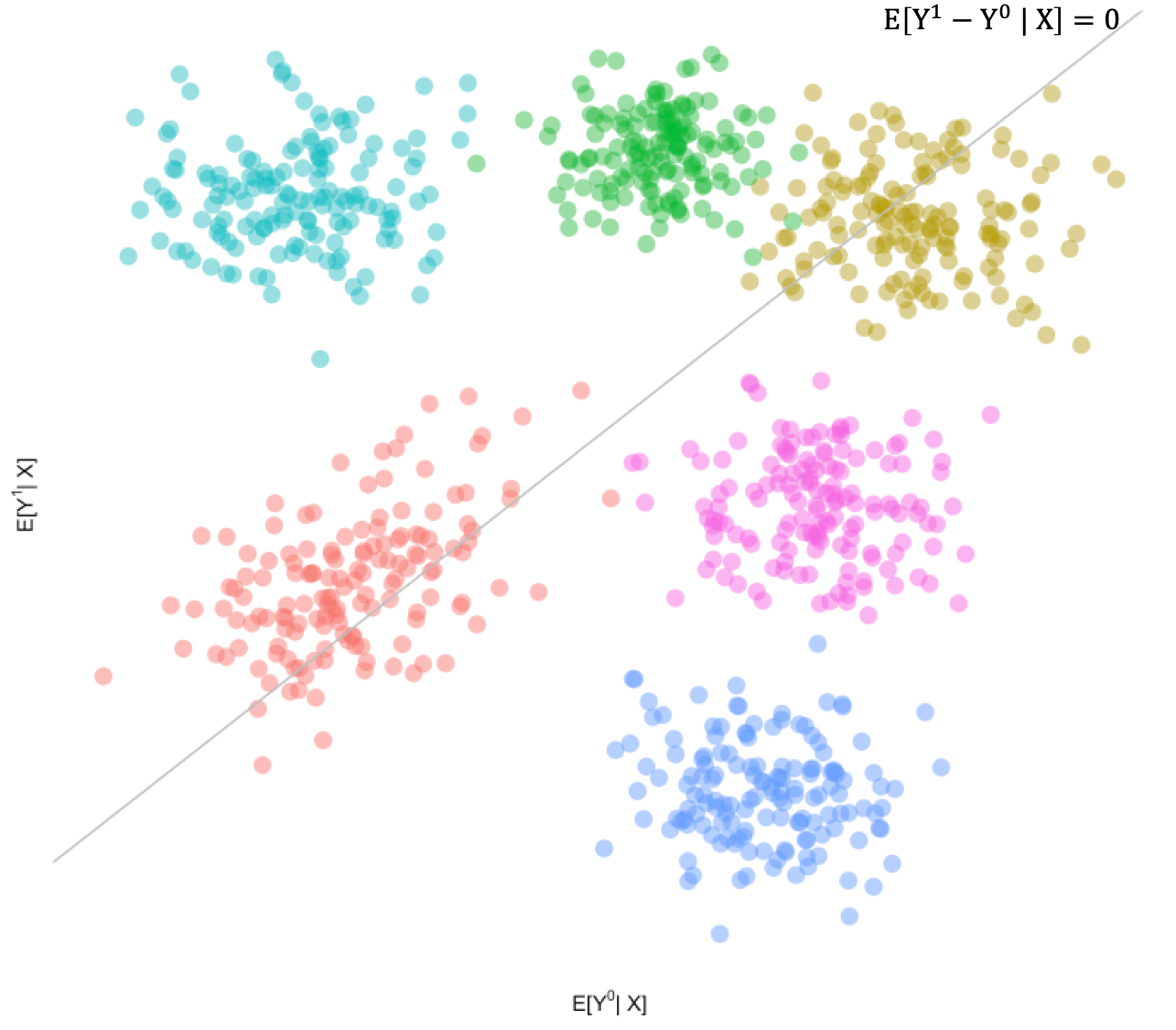}}
\hfill%
\subfigure[]{\includegraphics[width=0.32\textwidth]{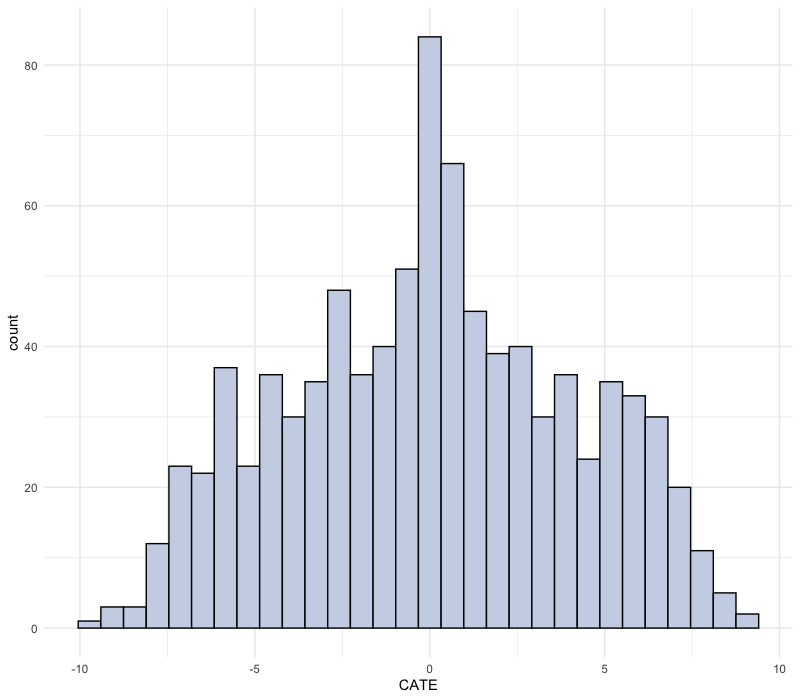}}
\hfill%
\caption{Illustration of causal clustering under binary treatment. (a) 600 points representing $(\mathbb{E}[Y^0 \mid X], \mathbb{E}[Y^1 \mid X])$, each corresponding to a unique covariate $X$, are generated with zero ATE; (b) We aim to uncover true subgroup structure with six clusters, with units within each cluster being more homogeneous in terms of the CATE; (c) The histogram fails to reveal the details about the true subgroup structure.
}
\label{fig:causal-cluster-illustration}
\end{figure}

We illustrate the idea of causal clustering through the case of binary treatments in Figure \ref{fig:causal-cluster-illustration}. We generate a sample where a projection $(\E[Y^0 \mid X], \E[Y^1 \mid X])$ of each observation is drawn from a mixture of six Gaussian distributions with different means and covariance functions, with the overall ATE set to zero. By construction, there are six clusters, with units within each cluster being more homogeneous in terms of the CATE. When it comes to analyzing the heterogeneity of treatment effects, people often rely on the histogram of the CATE as in Figure \ref{fig:causal-cluster-illustration}-(c). However, in this case, the histogram fails to reveal the details about the true subgroup structure. Adapting the idea of cluster analysis, we seek to identify subgroups whose treatment responses differ substantially from those of other subgroups while remaining relatively homogeneous within each subgroup, as illustrated in Figure \ref{fig:causal-cluster-illustration}-(b). This provides a new way to study the subgroup structure of treatment effect heterogeneity. To our knowledge, clustering methods have not been explicitly developed for this purpose in the causal inference literature.

Our problem differs significantly from the conventional clustering setup since the variable to be clustered consists of unknown functions (i.e., potential outcome regression functions) that must be estimated. Clustering with these unknown ``pseudo-outcomes'' has not received as much attention as clustering on standard fully observed data. Prior work has considered clustering with partially observed or noisy data, but still in fixed-dimensional vector settings. For example, \citet{serafini2020handling} studied clustering with missing data, \citet{haviland2011group} considered group-based trajectory modeling under nonrandom dropout, and \citet{su2018clustering} examined clustering with measurement error. In a related vein, \citet{kumar2007clustering} studied clustering based on unknown model parameters, albeit without theoretical guarantees. To the best of our knowledge, however, existing clustering methods have not addressed nonparametric clustering based on unknown functions. In our setting, we show that when the nuisance estimation error for these unknown functions is sufficiently small, the excess clustering risk is correspondingly small. In this sense, our work is conceptually analogous to the classification-versus-regression distinction in statistical learning \citep[][Theorem 2.2]{devroye2013probabilistic}.

In addition to existing supervised learning approaches, our framework provides a complementary tool for identifying subgroups with substantially different treatment responses. Our proposed methods are particularly useful in outcome-wide studies with multiple treatment levels \citep{vanderweele2017outcome, vanderweele2016association}; instead of probing a high-dimensional CATE surface, one may attempt to uncover lower-dimensional clusters with similar responses to a given treatment set. In addition, examining empirical covariate distributions within and across clusters can further clarify effect heterogeneity by highlighting baseline covariates most strongly associated with variation in treatment effects.

The remainder of the paper is structured as follows. In Section 2, we formalize the idea of causal clustering based on the k-means algorithm. In Section 3, we present a plug-in estimator, which is simple and readily implementable yet will in general not be $\sqrt{n}$-consistent. In Section 4, we develop an efficient bias-corrected estimator for k-means causal clustering under a margin condition, which attains fast $\sqrt{n}$ rates and asymptotic normality under weak nonparametric conditions. In section 5, we illustrate our approach using simulations and real  data on effects of treatment programs for substance abuse. Section 6 concludes with a discussion.\\

\section{Setup and estimands}
\label{sec:framework}

Consider a random sample $(Z_{1}, ... , Z_{n})$ of $n$ tuples $Z=(Y,A,X) \sim \Pb$, where $Y \in \R$ represents the outcome, $A \in \mathcal{A}= \{1,...,p\}$ denotes an intervention, and $X \in \mathcal{X} \subseteq \R^d$ comprises observed covariates. For simplicity, we focus on univariate outcomes, although the proposed methodology extends naturally to multivariate outcomes.
Throughout, we rely on the following widely-used identification assumptions \citep[e.g.,][Chapter 12]{imbens2015causal}:
\begin{assumptionp}{C1}[consistency] \label{assumption:A1-consistency}
$Y = Y^a$ if $A=a$. 
\end{assumptionp}
\begin{assumptionp}{C2}[no unmeasured confounding] \label{assumption:A2-no-unmeasured-confounding} 
 $A \independent Y^{a} \mid X$.
\end{assumptionp}
\begin{assumptionp}{C3}[positivity] \label{assumption:A3-positivity}
$\Pb(A=a \mid X)$ is bounded away from 0 a.s. $[\Pb]$.
\end{assumptionp}
For $a \in \mathcal{A}$, let the outcome regression function be denoted by
\begin{align*} 
\mu_a(X) &\equiv \E(Y^a \mid X) =  \E(Y \mid X, A=a).
\end{align*}
For $\forall a,a' \in \mathcal{A}$, one may define the pairwise CATE by
\begin{equation} \label{eqn::est-CATE}
\begin{aligned}
\tau_{aa'}(X) \equiv \E(Y \mid X, A=a) - \E(Y \mid X, A=a') 
= \mu_a(X) - \mu_{a'}(X)
\end{aligned}
\end{equation}

Then, we define the \textit{conditional counterfactual mean vector} $\mu: \mathcal{X} \to \R^p$ as
\begin{align} \label{eqn:CCMV-map}
    \mu(X) = \left[\E(Y^1 \mid X), \ldots ,\E(Y^{p} \mid X)\right]^\top.
\end{align}
If all coordinates of a point $\mu(X)$ were the same, there would be no treatment effect on the conditional mean scale. Also, adjacent units in the conditional counterfactual mean vector space would have similar responses to a given set of treatments, since for two units $i, j$, 
\[
\mu(X_{i}) \,\text{ close to } \, \mu(X_{j}) \Rightarrow \tau_{aa'}(X_{i}) \,\text{ close to } \, \tau_{aa'}(X_{j}) \quad \text{for all }\ a,a'\in \mathcal{A}.
\]
This provides vital motivation for uncovering subgroup structure via cluster analysis on projections of a sample onto the conditional counterfactual mean vector space via \eqref{eqn:CCMV-map}, whereby each cluster corresponds to a subgroup exhibiting similar treatment effects. Crucially, standard clustering theory is limited here since the variable to be clustered is $\mu(X)$, a vector of unknown regression functions, which themselves have to be estimated. Throughout, we suppress the explicit dependence of $\mu(X)$ on $X$ and denote it simply as $\mu(X) \equiv \mu$ when referring to it as a regression function.

\begin{figure}[t!]
\centering
\subfigure[]{\includegraphics[width=0.495\textwidth]{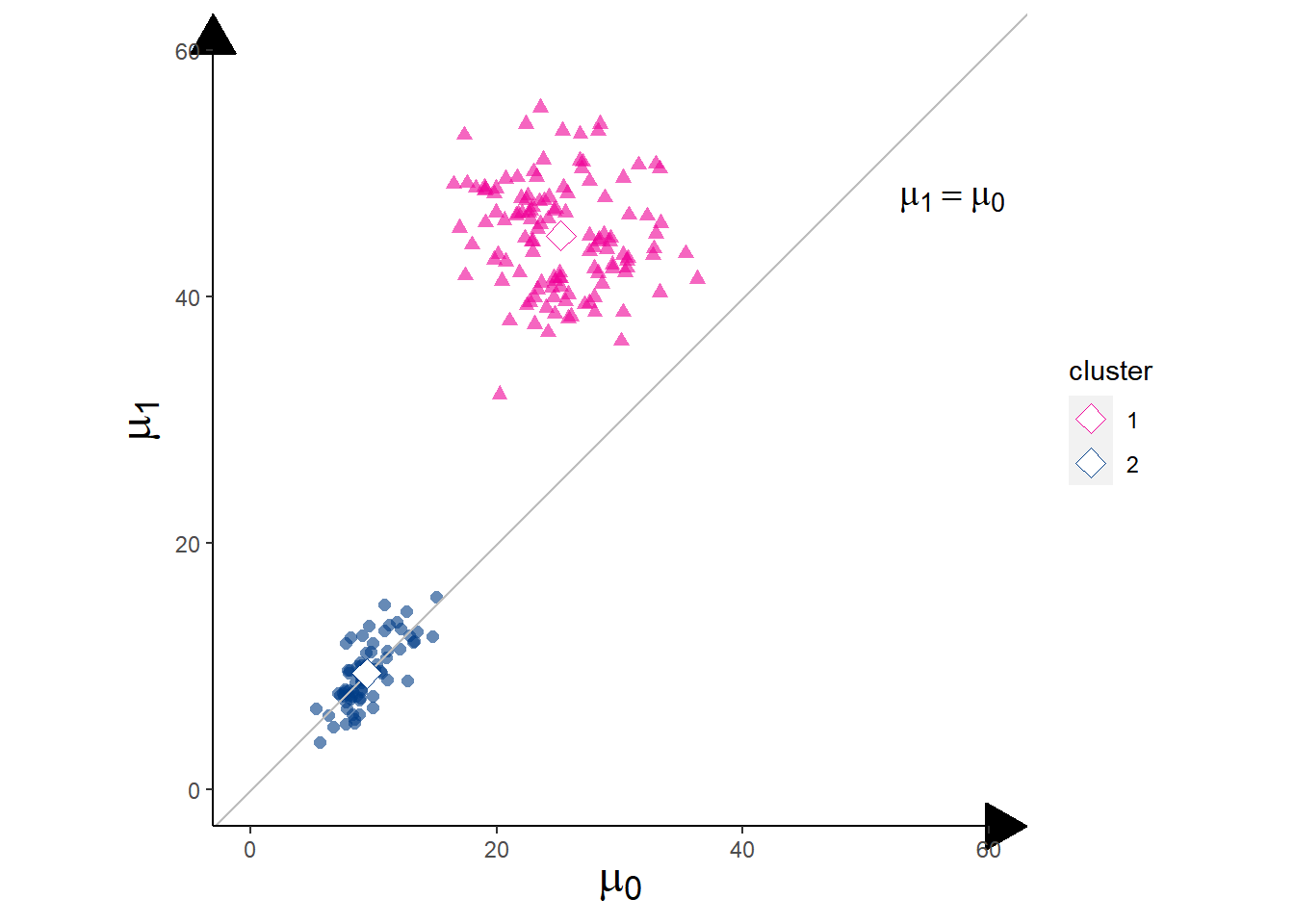}} 
\hfill%
\subfigure[]{\includegraphics[width=0.495\textwidth]{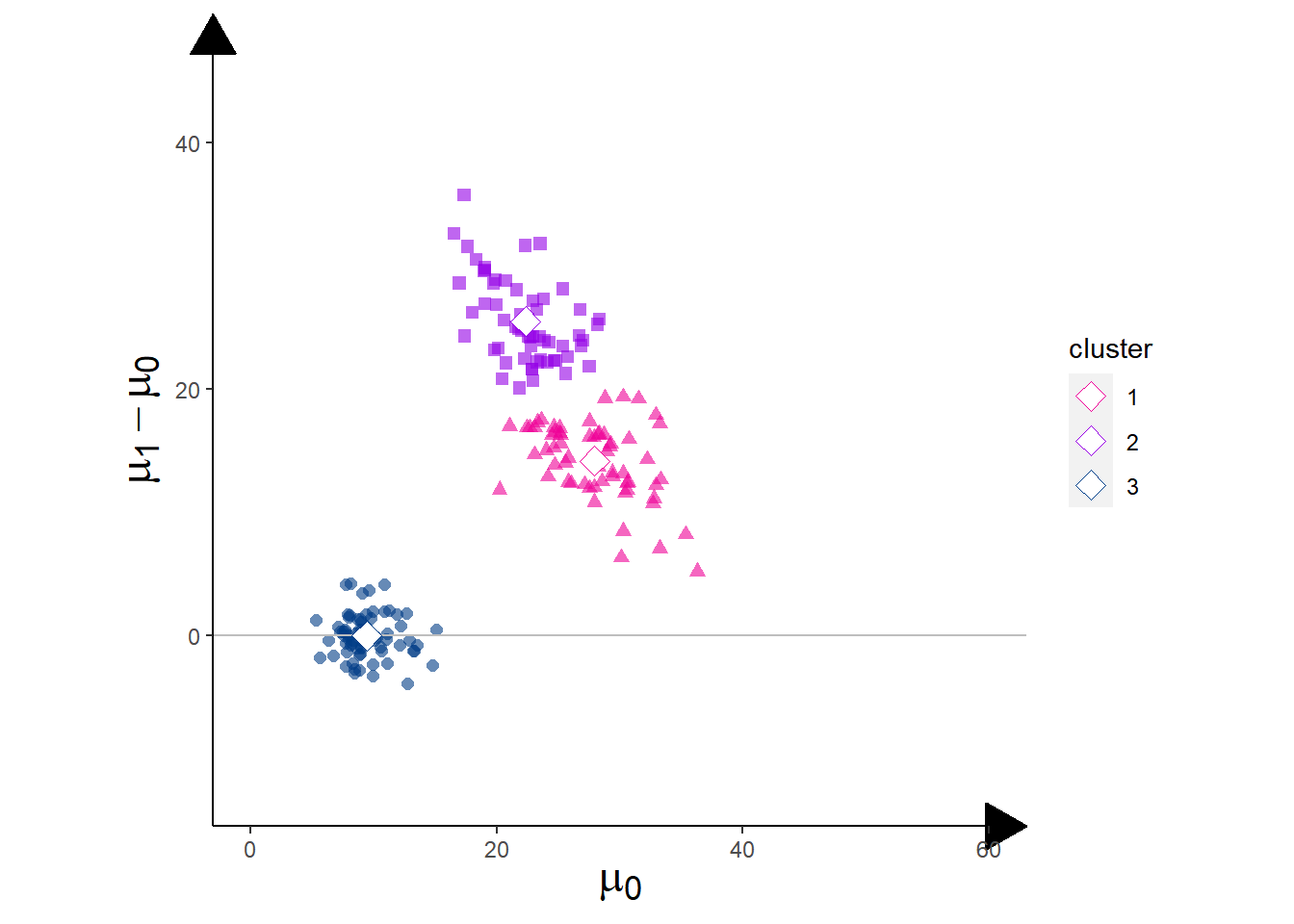}}
\caption{Consider a scenario where a treatment is ineffective for the low-risk patients but beneficial for those whose baseline risk $\mu_0$ exceeds a certain threshold. For example, the treatment effect could be near-zero for a group with $\mu_0\approx 10$, highly beneficial for $\mu_0\approx 20$, and moderately beneficial for $\mu_0\approx 30$. In this case, given the same data, cluster analysis with the parametrization $\mu = (\mu_0, \mu_1 - \mu_0)$ in (b) makes it easier to understand how treatment effects vary with the baseline risk than $\mu = (\mu_0, \mu_1)$ in (a).
}
\label{fig:alternative-parametrization}
\end{figure}

\begin{remark}
    The conditional counterfactual mean vector in \eqref{eqn:CCMV-map} can be readily reparametrized for a specific purpose without affecting our subsequent results. With $\mathcal{A} = \{0,1\}$, for instance, one may consider $\mu = (\mu_0, \mu_1 - \mu_0)$ with $A=0$ untreated and $\mu_0$ as a baseline risk instead of $\mu = (\mu_0, \mu_1)$. As illustrated in Figure \ref{fig:alternative-parametrization}, this may be more useful for exploring the relationship between the baseline risk and the treatment effect. As shown in the heterogeneous treatment effect literature, contrasts of regression functions are often structurally simpler than the regression functions themselves \citep[e.g.,][]{chernozhukov2018generic,kennedy2020optimal}. Accordingly, alternative parametrizations may better exploit meaningful structure in the CATE functions, such as smoothness or sparsity. For instance, when baseline risk is not of primary interest, clustering based on treatment contrasts, $\mu = (\mu_1-\mu_0,\mu_2-\mu_0,\ldots)$, may be more effective than clustering on $\mu = (\mu_0,\mu_1,\mu_2,\ldots)$. If we are interested in how a treatment shifts the quantiles \citep[e.g.][]{chernozhukov2005iv, zhang2012causal}, we can redefine our conditional counterfactual mean vector by $\mu=(Q_0(q), Q_1(q))$ for some prespecified $q \in (0,1)$ (for median, $q=1/2$), where $Q_a(q)$ is the quantile function of our potential outcome $Y^a$, i.e., $Q_a(q) = \inf\left\{ y\in \mathbb{R} : q \le F_{Y^a}(y) \right\}$ for $F_{Y^a} = \Pb(Y^a \leq y \mid X)$. 
\end{remark}

In this work, we develop a causal clustering framework based on $k$-means. The $k$-means algorithm, also known as vector quantization, is one of the oldest and most widely used clustering methods. It identifies $k$ representative points, or cluster centers, that induce a Voronoi partition of the space. The method has been extensively studied in the clustering literature; see \citet{jain2010data} for a review and \citet{graf2007foundations} for a thorough account. Among algorithmic clustering methods, $k$-means has received especially strong theoretical attention, due in part to its close connection to principal components analysis \citep{ding2004k}.
	
We call a set of $k$ representative points a \emph{codebook} $C= \{c_1, ... , c_k\}$ where each $c_j \in \mathbb{R}^p$. Let $\Pi_C(x)$ be the projection of $x \in \mathbb{R}^p$ onto $C$:
$$
\Pi_C(x) = \underset{c\in C}{\argmin} \Vert c - x \Vert_2^2.
$$
Then we define the \textit{population clustering risk} $R(C)$ with respect to $\mu$ by
\begin{align} \label{eqn:population-clustering-risk}
R(C) = \E \Vert \mu - \Pi_C(\mu)  \Vert_2^2,
\end{align}
and the corresponding optimal codebook $C^*$ by
\begin{align} \label{eqn:population-optimal-codebook}
C^* = \underset{C\in \mathcal{C}_k}{\argmin} R(C),
\end{align}
where $\mathcal{C}_k$ denotes all codebooks of length $k$ in the image of $\mu$. When $C$ is fixed, the population clustering risk \eqref{eqn:population-clustering-risk} can be viewed as a real-valued functional on a nonparametric model. Each cluster center $c_j \in C$ corresponds to the vector of subgroup average potential outcomes $\{\mathbb E(Y^a \mid X \in R_j)\}_{a\in\mathcal{A}}$, where $R_j \subset \mathcal{X}$ denotes the covariate region assigned to cluster $j$. Importantly, $R(C)$ is a nonsmooth functional of the observed data distribution, and thus standard semiparametric efficiency theory is not directly applicable. In Section \ref{sec:efficient-causal-clustering}, we develop an efficient estimator of $R(C^*)$ under a margin condition.

\textbf{Notation.} 
In the sequel, we use the shorthand $\mu_{(i)} \equiv \mu(X_i) = \left[\mu_1(X_i), ... , \mu_p(X_i)\right]^\top$ and $\widehat{\mu}_{(i)} \equiv \widehat{\mu}(X_i) = \left[\widehat{\mu}_1(X_i), ... , \widehat{\mu}_p(X_i)\right]^\top$. We let $\Vert x \Vert_q$ denote $L_q$ norm for any fixed vector $x$. For a given function $f$, we use the notation
$\Vert f \Vert_{\Pb,q} = \left[\Pb (\vert f \vert^q) \right]^{1/q} = \left[\int \vert f(z)\vert^q d\Pb(z)\right]^{1/q}$
as the $L_{q}(\mathbb{P})$-norm of $f$. Also, we let ${\Pb}$ denote the conditional expectation given the sample operator $\hat{f}$, as in $\mathbb{P}(\hat{f})=\int\hat{f}(z)d\mathbb{P}(z)$. Notice that $\Pb(\hat{f})$ is random only if $\hat{f}$ depends on samples, in which case $\Pb(\hat{f}) \neq \E(\hat{f})$. Otherwise $\Pb$ and $\E$ can be used exchangeably. For example, if $\hat{f}$ is constructed on a separate (training) sample $\mathsf{D}^n = (Z_1,...,Z_n)$, then ${\Pb}\left\{\hat{f}(Z)\right\} = \E\left\{\hat{f}(Z) \mid \mathsf{D}^n \right\}$ for a new observation $Z \sim \Pb$. We let $\Pn$ denote the empirical measure as in $\Pn(f)=\Pn\{(f(Z)\}=\frac{1}{n}\sum_{i=1}^n f(Z_i)$. Lastly, we use the shorthand $a_n \lesssim b_n$ to denote $a_n \leq \mathsf{c} b_n$ for some universal constant $\mathsf{c} > 0$. \\

\section{Plug-in Estimator}
\label{sec:plug-in-estimator}

If the $\{\mu_{(i)}\}$ were all known, then for a fixed number of clusters $k$, the optimal codebook $C^*$ could be estimated by minimizing the empirical clustering risk, just as in standard $k$-means clustering:
\begin{equation} \label{eqn:conventional-codebook-estimator}
\begin{gathered} 
\widehat{C}^* = \underset{C\in \mathcal{C}_k}{\argmin} R_n(C), \\
\text{where } \quad R_n(C) = \frac{1}{n} \sum_{i=1}^{n} \Vert \mu_{(i)} - \Pi_C(\mu_{(i)})  \Vert_2^2.
\end{gathered}
\end{equation}

The common method used to compute $\widehat{C}^*$ is Lloyd's algorithm \citep{lloyd1982least,kanungo2002efficient}, yet there are other recent developments as well \citep{leskovec2020mining}. A solution of such algorithms normally depends on the starting values. Some popular methods for choosing good starting values are discussed in, for example, \citet{tseng2005tight, arthur2007k}.

The problem of assessing how well $\widehat{C}^*$ approximates the true $C^*$ has been extensively studied. \citet{pollard1981strong} proved strong consistency of k-means clustering in the sense that $\widehat{C}^* \xrightarrow{a.s.} C^*$ as well as $R(\widehat{C}^*) - R(C^*) \xrightarrow{a.s.} 0$. Borrowing techniques from statistical learning theory, \citet{linder1994rates} and \citet{biau2008performance} showed that when an input vector is almost surely bounded, the expected excess risk may decay at $O(\sqrt{\log n/n})$ and $O(1/\sqrt{n})$ rates, respectively. More recently, it has been shown that faster $O(\log n/n)$ or $O(1/n)$ rates can be attained under a margin condition on the source distribution \citep[][]{levrard2015nonasymptotic, levrard2018quantization}; we shall go over this margin condition in detail shortly.

However, in our setting, the estimator $\widehat{C}^*$ in \eqref{eqn:conventional-codebook-estimator} is not available because  $\{\mu_{(i)}\}$ are unobserved. Instead, we propose the following plug-in estimator 
\begin{equation} \label{def:plug-ins-k-means-codebook}
\begin{gathered} 
\widehat{C} = \underset{C\in \mathcal{C}_k}{\argmin} \widehat{R}_n(C), \\
\text{where } \quad \widehat{R}_n(C) = \frac{1}{n} \sum_{i=1}^{n} \Vert \widehat{\mu}_{(i)} - \Pi_C(\widehat{\mu}_{(i)})  \Vert_2^2,
\end{gathered}
\end{equation}
where $\widehat{\mu}$ is some initial estimator of the outcome regression functions in \eqref{eqn:CCMV-map}. We will use sample splitting to avoid imposing empirical process conditions on the function class of $\mu$ \citep[e.g.,][]{kennedy2016semiparametric,Chernozhukov17}. For now, we suppose that $\widehat{\mu}$ are constructed on a separate, independent sample; this will be discussed in more detail in the following section. 

Due to the non-smoothness of the projection function $\Pi_C(\cdot)$, in general we would not expect the proposed plug-in estimator \eqref{def:plug-ins-k-means-codebook} to inherit the rate of convergence of $\widehat{\mu}$. To resolve this, we shall assume that the source distribution $\Pb$ is concentrated around $C^*$, in the spirit of \citet{levrard2015nonasymptotic, le2018notion, levrard2018quantization}, as made precise shortly.

In the sequel, the set of minimizers of the clustering risk will be denoted by $\mathcal{C}^*_k$, i.e., $\mathcal{C}^*_k = \{C^* \in \mathcal{C}_k : R(C^*)=\underset{C\in \mathcal{C}_k}{\min}R(C) \}$. For $C^* \in \mathcal{C}^*_k$, we define the \emph{Voronoi cell} associated with the cluster center $c^*_i$ as the closed set by
$$
V_i(C^*) = \left\{ \mu \mid  \Vert \mu - c^*_i \Vert_2 \leq \Vert \mu - c^*_j \Vert_2, \forall j \neq i \right\},
$$
and its boundary by
\[
\partial V_i(C^*) = \left\{ \mu \mid \Vert \mu - c^*_i \Vert_2 = \Vert \mu - c^*_j \Vert_2, \forall j \neq i \right\}.
\]
And we write the entire boundaries induced from $C^*$ as
$$
\partial C^* = \underset{i}{\bigcup} \partial V_i(C^*).
$$
Next, for any $C^*$ and some $t>0$, define
\[
N_{C^*}(t) = \underset{j}{\bigcup} \left\{ \mu \in  V_j(C^*) \Bigm\vert \left\vert \Vert \mu - c^*_j \Vert_2 - \underset{i \neq j}{\min}  \Vert \mu - c^*_i \Vert_2 \right\vert  \leq t \right\}.
\]
\begin{figure}[t!]
    \centering
    \includegraphics[width=.5\linewidth]{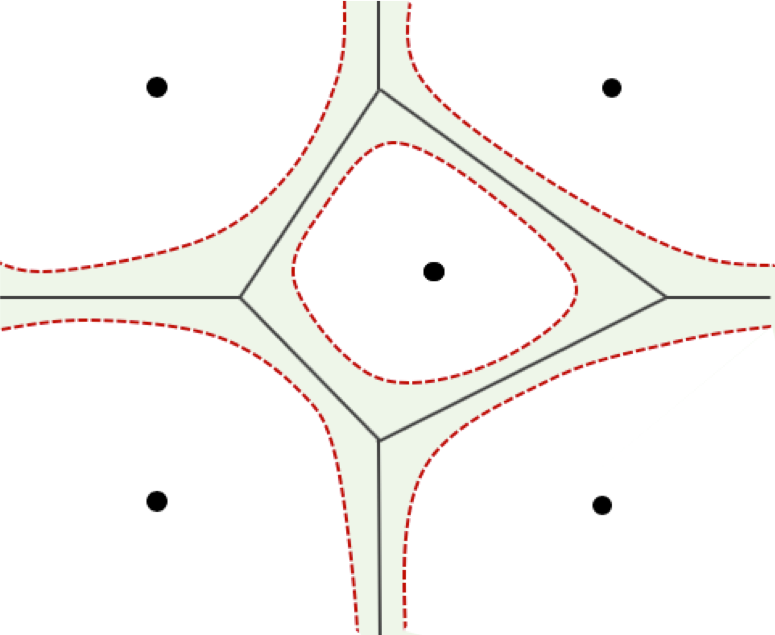}
    \caption{Illustration of the margin condition in Definition \ref{def:margin-condition}, where we control the probability mass in the shaded area within the red-dashed lines specified by $\kappa$.}
    \label{fig:two-margin-conditions}
\end{figure}
The set $N_{C^*}(t)$ may be viewed as a neighborhood of $\partial C^*$ consisting of points $\mu$ for which the distances to the two nearest cluster centers differ by at most $t$. For example, in two-dimensional Euclidean space, that is, when $p=2$, the set $N_{C^*}(t)$ consists of regions bounded by hyperbolas symmetric about the common Voronoi boundaries $\{\partial V_i(C^*) \cap \partial V_j(C^*) \mid i \neq j,\; i,j \in \{1,\ldots,k\}\}$, as illustrated in Figure \ref{fig:two-margin-conditions}. Now we introduce the following \textit{margin condition}.
\begin{definition}[Margin condition] \label{def:margin-condition} 
	A distribution $\Pb$ satisfies a margin condition with radius $\kappa>0$ and rate $\alpha>0$ if and only if for all $0 \leq t \leq \kappa$,
	\begin{align*}
	\underset{C^*  \in \mathcal{C}^*_k}{\sup} \Pb(\mu  \in N_{C^*}(t)) \lesssim  t^\alpha.
	\end{align*}
\end{definition}
The margin condition above requires local control of the probability mass near $\partial C^*$ for each $C^* \in \mathcal{C}_k^*$, and thus ensures that every optimal codebook induces a natural classification. A larger $\alpha$ indicates that $\Pb$ is ``more structured", facilitating the formation of such a natural classifier, whereas a smaller $\alpha$ suggests that a natural classifier is less likely to exist. When $\alpha<1$, probability mass concentrates near $\partial C^*$ at a rate faster than would be implied by a bounded density, implying that the distribution is not well separated there. The strong margin condition with exponent $\alpha = 1$ has been employed in standard k-means clustering to obtain fast $O(1/n)$ rates of convergence for the excess risk \citep{levrard2015nonasymptotic, levrard2018quantization}, or to establish strong stability for Lloyd’s algorithm \citep{le2018notion}. This type of margin condition, which controls the probability mass near the critical region, is common in causal inference problems involving nonsmooth target parameters \citep[e.g.,][]{van2015targeted,luedtke2016statistical,kennedy2018sharp,levis2023covariate,pmlr-v202-kim23ab}. We next introduce the following mild boundedness and consistency assumptions as well.

\begin{assumptionp}{A1} 
\label{assumption:A1-boundedness}
$\Vert \mu_a \Vert_\infty , \Vert \widehat{\mu}_a \Vert_\infty \leq B < \infty$ a.s.
\end{assumptionp}
\begin{assumptionp}{A2} 
\label{assumption:A2-consistency}
$\max_a \left\Vert \widehat{\mu}_{a}-\mu_{a}\right\Vert_{\infty} = o_\Pb(1)$.
\end{assumptionp}

In the next theorem, we give upper bounds of the excess risk, showing that the proposed plug-in estimator \eqref{def:plug-ins-k-means-codebook} is risk consistent. 

\begin{theorem}
    \label{thm:k-means} 
    Suppose $\Pb$ satisfies the margin condition with some $\kappa > 0$, $\alpha > 0$. Let
    \[
    R_{1,n} = \max_a \left\Vert  \widehat{\mu}_a - \mu_a \right\Vert_{\Pb,1} + \max_a \Vert \widehat{\mu}_a - \mu_a \Vert_{\infty}^{\alpha + 1}  + \frac{1}{\kappa}\max_a \left(\Vert \widehat{\mu}_a - \mu_a \Vert_{\infty} \left\Vert \widehat{\mu}_a - \mu_a \right\Vert_{\Pb,1}\right).
    \]    
    Then under Assumptions \ref{assumption:A1-boundedness}, \ref{assumption:A2-consistency}, we have
    \begin{align*}    
    \Pb\left\{R(\widehat{C})-R(C^{*})\right\} = O\left( \frac{1}{\sqrt{n}} + R_{1,n} \right) \quad \text{and} \quad
    R(\widehat{C})-R(C^{*}) = O_\Pb\left( \sqrt{\frac{\log n}{n}} + R_{1,n} \right), 
    \end{align*}
    whenever $\widehat{\mu}$ is constructed from a separate independent sample. 
\end{theorem}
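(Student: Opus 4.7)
The plan is to decompose the excess clustering risk into an empirical-process contribution and a plug-in bias contribution, and bound each separately. Writing $\widehat{R}(C) = \Pb\|\widehat{\mu} - \Pi_C(\widehat{\mu})\|_2^2$ for the population risk under the estimated regression (conditional on the training sample used to build $\widehat{\mu}$), the minimizer property $\widehat{R}_n(\widehat{C}) \le \widehat{R}_n(C^*)$ gives
\begin{align*}
R(\widehat{C}) - R(C^*) &\le \bigl[R(\widehat{C}) - \widehat{R}(\widehat{C})\bigr] + \bigl[\widehat{R}(\widehat{C}) - \widehat{R}_n(\widehat{C})\bigr] \\
&\quad + \bigl[\widehat{R}_n(C^*) - \widehat{R}(C^*)\bigr] + \bigl[\widehat{R}(C^*) - R(C^*)\bigr].
\end{align*}
The two middle ``variance'' terms I bound uniformly by $2\sup_{C\in\mathcal{C}_k}|\widehat{R}(C) - \widehat{R}_n(C)|$, while the first and last ``bias'' terms require different arguments; the margin condition enters only through the last.

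Conditional on the training sample, $\widehat{\mu}$ is fixed and $\{\widehat{\mu}(X_i)\}$ are i.i.d.\ uniformly bounded by Assumption~\ref{assumption:A1-boundedness}, so $\sup_C |\widehat{R}(C) - \widehat{R}_n(C)|$ is exactly the standard $k$-means empirical process applied to $\widehat{\mu}(X)$. Classical bounds \citep{linder1994rates, biau2008performance} together with iterated expectation yield $O(n^{-1/2})$ in expectation and $O_\Pb(\sqrt{\log n/n})$ in probability, supplying the first summand of each bound in the theorem. For the first ``bias'' term, I would use a uniform Lipschitz argument: $\sqrt{f_C}(x) := \|x - \Pi_C(x)\|_2$ is 1-Lipschitz in $x$ and $f_C$ is uniformly bounded, so $|f_C(\mu) - f_C(\widehat{\mu})| \lesssim \|\mu - \widehat{\mu}\|_2$ uniformly in $C$, and hence $|R(\widehat{C}) - \widehat{R}(\widehat{C})| \le \sup_C |R(C) - \widehat{R}(C)| \lesssim \Pb\|\widehat{\mu} - \mu\|_2 \lesssim \max_a \|\widehat{\mu}_a - \mu_a\|_{\Pb,1}$, producing the first summand of $R_{1,n}$.

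The technical core is the last ``bias at $C^*$'' term $\widehat{R}(C^*) - R(C^*) = \Pb[f_{C^*}(\widehat{\mu}) - f_{C^*}(\mu)]$, which I split by whether $\Pi_{C^*}(\widehat{\mu}) = \Pi_{C^*}(\mu)$. The key geometric observation is that if $\Pi_{C^*}(\widehat{\mu}) \neq \Pi_{C^*}(\mu)$, two triangle inequalities force $\mu \in N_{C^*}(2\|\widehat{\mu} - \mu\|_2)$. On the same-cell event with shared projection $c$, expanding $f_{C^*}(\widehat{\mu}) - f_{C^*}(\mu) = 2\langle \widehat{\mu} - \mu, \mu - c\rangle + \|\widehat{\mu} - \mu\|_2^2$ and using boundedness gives a contribution of order $\max_a \|\widehat{\mu}_a - \mu_a\|_{\Pb,1}$. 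On the different-cell event I combine the Lipschitz bound $|f_{C^*}(\widehat{\mu}) - f_{C^*}(\mu)| \lesssim \|\widehat{\mu}-\mu\|_\infty$ with the indicator $\mathbb{1}\{\mu \in N_{C^*}(2\|\widehat{\mu}-\mu\|_\infty)\}$. Setting $\epsilon := \max_a \|\widehat{\mu}_a - \mu_a\|_\infty$, the margin condition gives $\Pb(\mu \in N_{C^*}(2\epsilon)) \lesssim \epsilon^\alpha$ whenever $2\epsilon \le \kappa$, contributing $\epsilon^{\alpha+1}$. In the overflow regime $2\epsilon > \kappa$, the margin bound is not directly applicable, but instead $\mathbb{1}\{\text{differ}\} \le 1 \le 2\epsilon/\kappa$ paired with the $L^1$ form of the Lipschitz bound gives a contribution $\lesssim \kappa^{-1}\epsilon \cdot \max_a \|\widehat{\mu}_a - \mu_a\|_{\Pb,1}$, which is exactly the third summand of $R_{1,n}$.

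The main obstacle is precisely this last case split: one must apply the margin condition only on its valid range $t \le \kappa$ and absorb the overflow regime via the crude $1 \le 2\epsilon/\kappa$ bound, packaging the sharp margin-driven $\epsilon^{\alpha+1}$ piece together with the soft $\kappa^{-1}$ fallback into a single expression that does not require knowing a priori whether $\epsilon$ sits above or below $\kappa$. Everything else -- reducing the data-dependent bias at $\widehat{C}$ to a uniform bias, the empirical-process rate for $k$-means, Lipschitzness of $\sqrt{f_C}$, and Assumption~\ref{assumption:A2-consistency} ensuring $\epsilon = o_\Pb(1)$ so that the margin regime is eventually reached -- is routine. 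Summing the four contributions then delivers both claims of the theorem.
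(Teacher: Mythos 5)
Your proof is correct, and while its skeleton (a uniform empirical-process term plus nuisance-bias terms, with the margin condition applied only on its valid range $t\le\kappa$ and a crude $1\lesssim\epsilon/\kappa$ fallback for the overflow regime) matches the paper, the mechanics differ in two places worth noting. First, your four-term decomposition through the intermediate population risk $\widehat R(C)=\Pb\,f_C(\widehat\mu)$ lets you apply the Linder/Biau bounds conditionally to the i.i.d.\ bounded vectors $\widehat\mu(X_i)$, so both middle terms reduce to a single standard $k$-means deviation; the paper instead uses the factorization \eqref{eq:k-means_factorization}, quotes \eqref{eq:k-means_first} for $\sup_C|R(C)-R_n(C)|$ with the true $\mu$, and controls $\widehat R_n(C^*)-R(C^*)$ via Lemma \ref{lem:bias-plug-in-risk}, i.e.\ a CLT at the fixed $C^*$ plus the sample-splitting lemma of \citet{kennedy2018sharp} for $(\Pn-\Pb)\{f_{C^*}(\widehat\mu)-f_{C^*}(\mu)\}$. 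Your route in fact sidesteps a small mismatch in the paper's write-up, where the supremum in \eqref{eq:k-means_factorization} involves $\widehat R_n$ while the quoted bound \eqref{eq:k-means_first} concerns $R_n$. Second, for the bias at $C^*$ you argue geometrically that disagreement of the projections forces $\mu\in N_{C^*}(2\Vert\widehat\mu-\mu\Vert_2)$ and then split on the margin regime, whereas the paper (Lemma \ref{lem:f_C-bias-alphap1} via the argument of Lemma \ref{lem:phi-bias-alphap1}) decomposes over Voronoi indices with the indicators $\mathbbm{1}(\widehat d=j)-\mathbbm{1}(d=j)$ and uses H\"older plus Markov for the overflow event; both arguments deliver the same three terms of $R_{1,n}$. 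Two cosmetic points: the enlargement radius becomes $2\sqrt{p}\,\max_a\Vert\widehat\mu_a-\mu_a\Vert_\infty$ once you pass from $\Vert\cdot\Vert_2$ to coordinatewise sup norms (only constants change), and your overflow bound yields $\kappa^{-1}\bigl(\max_a\Vert\widehat\mu_a-\mu_a\Vert_\infty\bigr)\bigl(\max_a\Vert\widehat\mu_a-\mu_a\Vert_{\Pb,1}\bigr)$, the product of maxima rather than the maximum of products appearing in $R_{1,n}$ --- the same slack present in the paper's own display \eqref{eqn:app-phi_bias_2-2} --- which is immaterial for the stated rates.
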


A proof of the above theorem and all subsequent proofs can be found in Web Appendix \ref{appendix-proofs}. The term $\Vert \widehat{\mu}_a - \mu_a \Vert_{\infty}^{\alpha + 1}$ in $R_{1,n}$ is standard in the literature on estimation of nonsmooth functionals under margin conditions, including the works cited above. The term $\frac{1}{\kappa} \Vert \widehat{\mu}_a - \mu_a \Vert_{\infty} \left\Vert \widehat{\mu}_a - \mu_a \right\Vert_{\Pb,1}$ arises because the margin condition in Definition \ref{def:margin-condition} imposes only local control over the neighborhood $N_{C^*}(\kappa)$; this term disappears when $\kappa \to \infty$. Theorem \ref{thm:k-means} essentially states that the extra price we pay for excess risk is the estimation error of the outcome regression functions.

Risk consistency of $\widehat{C}$ does not by itself imply that $\widehat{C}$ is close to the true codebook $C^*$. To establish consistency of $\widehat{C}$, we require the following additional condition.

\begin{assumptionp}{A3} 
\label{assumption:A3-uniqueness-of-codebook}
        $C^*$ is unique up to relabeling of its coordinates.
\end{assumptionp}


The uniqueness condition in Assumption \ref{assumption:A3-uniqueness-of-codebook} has also been used in earlier work by \citet{pollard1981strong, pollard1982central}. The next theorem states that the proposed plug-in estimator is consistent. 

\begin{theorem}
    \label{thm:plug-in-consistency}
    Under Assumptions \ref{assumption:A1-boundedness} - \ref{assumption:A3-uniqueness-of-codebook}, $\widehat{C}$ computed by the plug-in estimator \eqref{def:plug-ins-k-means-codebook} converges in probability to $C^*$.
\end{theorem}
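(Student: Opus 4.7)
The plan is to deduce consistency of $\widehat{C}$ from the risk consistency already established in Theorem \ref{thm:k-means}, using continuity of the population risk together with the uniqueness afforded by Assumption \ref{assumption:A3-uniqueness-of-codebook}. This is the classical compactness strategy of \citet{pollard1981strong} adapted to our plug-in setting.

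First, I would check that Assumption \ref{assumption:A2-consistency} drives every term in $R_{1,n}$ to zero in probability: each $\Vert \widehat{\mu}_a - \mu_a \Vert_{\Pb,1}$ is dominated by $\Vert \widehat{\mu}_a - \mu_a \Vert_\infty = o_\Pb(1)$, and likewise for the $(\alpha+1)$-power and product terms since $\kappa$ is treated as fixed. Combined with the $\sqrt{\log n / n}$ term in Theorem \ref{thm:k-means}, this gives $R(\widehat{C}) - R(C^*) = o_\Pb(1)$.

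Second, I would equip $\mathcal{C}_k$ with a label-invariant metric, namely $d(C, C') = \min_\sigma \max_j \Vert c_j - c'_{\sigma(j)} \Vert_2$, where $\sigma$ ranges over permutations of $\{1, \dots, k\}$. Under Assumption \ref{assumption:A1-boundedness} the image of $\mu$ lies in $[-B, B]^p$, so $\mathcal{C}_k$ is contained in a compact subset of $\R^{p \times k}$. To show continuity of $R$, pick the permutation $\sigma$ attaining $d(C, C')=:\delta$; for each $\mu$ let $j(\mu)$ be the index such that $\Pi_C(\mu)=c_{j(\mu)}$, and note
\[
\Vert \mu - \Pi_{C'}(\mu) \Vert_2 \le \Vert \mu - c'_{\sigma(j(\mu))} \Vert_2 \le \Vert \mu - \Pi_C(\mu) \Vert_2 + \delta.
\]
Squaring, integrating, and using $\Vert \mu - \Pi_C(\mu) \Vert_2 \le 2B\sqrt{p}$ from Assumption \ref{assumption:A1-boundedness} yields $|R(C) - R(C')| \lesssim B\sqrt{p}\,\delta + \delta^2$, hence continuity on the compact codebook space.

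Third, I would combine uniqueness, compactness, and continuity into a well-separation statement: for any $\epsilon > 0$ the set $\{C \in \mathcal{C}_k : d(C, C^*) \geq \epsilon\}$ is compact, and $C \mapsto R(C) - R(C^*)$ is continuous and strictly positive on it by Assumption \ref{assumption:A3-uniqueness-of-codebook}, so it attains a minimum $\delta_\epsilon > 0$. Therefore
\[
\Pb(d(\widehat{C}, C^*) \geq \epsilon) \leq \Pb(R(\widehat{C}) - R(C^*) \geq \delta_\epsilon) \longrightarrow 0,
\]
which is the desired convergence in probability. The main obstacle is the bookkeeping around permutation symmetry: one must choose the metric $d$ so that $C^*$ is simultaneously an isolated minimizer of $R$ (which is exactly what Assumption \ref{assumption:A3-uniqueness-of-codebook} enforces, up to relabeling) and so that the continuity bound remains clean. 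Once the metric is fixed, continuity of $R$ comes for free from the pointwise projection inequality above, sidestepping the measure-theoretic subtleties that otherwise accompany the non-smooth map $\Pi_C$.
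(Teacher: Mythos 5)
Your overall architecture (risk consistency $\Rightarrow$ well-separation $\Rightarrow$ consistency of the minimizer) is sound, and your second and third steps are a perfectly good, self-contained version of what the paper accomplishes by citing van der Vaart's Theorem 5.7: Lipschitz continuity of $R$ in a label-invariant metric on the compact codebook space, plus uniqueness, gives $\inf\{R(C)-R(C^*) : d(C,C^*)\ge\epsilon\} = \delta_\epsilon>0$, and then $\Pb(d(\widehat{C},C^*)\ge\epsilon)\le\Pb(R(\widehat{C})-R(C^*)\ge\delta_\epsilon)\to 0$.

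The gap is in your first step. Theorem \ref{thm:plug-in-consistency} is stated under Assumptions \ref{assumption:A1-boundedness}--\ref{assumption:A3-uniqueness-of-codebook} only; it does \emph{not} assume the margin condition. Your route to risk consistency goes through Theorem \ref{thm:k-means}, whose hypotheses include the margin condition with some $\kappa>0$, $\alpha>0$ (the bound on the bias term $\Pb\{f_{C^*}(\widehat\mu)-f_{C^*}(\mu)\}$ in Lemma \ref{lem:f_C-bias-alphap1}, which produces the $\alpha$-dependent terms in $R_{1,n}$, is where that condition enters). So as written you prove a strictly weaker statement than the theorem claims. The paper avoids this by establishing $\sup_{C\in\mathcal{C}_k}\vert \widehat{R}_n(C)-R(C)\vert = o_\Pb(1)$ directly, decomposing into a Glivenko--Cantelli term for the VC class $\{f_C(\mu(\cdot)):C\in\mathcal{C}_k\}$, an empirical-process term controlled by sample splitting, and a uniform bias term controlled by Lemma \ref{lem:f_C-supremum-bias}, which gives $\sup_C\vert\Pb\{f_C(\widehat\mu)-f_C(\mu)\}\vert\lesssim\max_a\Vert\widehat\mu_a-\mu_a\Vert_{\Pb,1}$ using only Assumption \ref{assumption:A1-boundedness} (the crude first-order bound, no margin condition needed). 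Uniform convergence then yields $R(\widehat{C})-R(C^*)\le 2\sup_C\vert\widehat{R}_n(C)-R(C)\vert=o_\Pb(1)$, and your well-separation argument finishes the proof. The repair to your proposal is therefore to replace the appeal to Theorem \ref{thm:k-means} with this margin-free uniform-convergence argument (or at minimum with the first-order bias bound at $\widehat{C}$ and $C^*$), rather than importing the margin condition as an extra hypothesis.
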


The map $C \mapsto R(C)$ from $\R^{kp}$ into $\R$ is differentiable if $\Vert \mu \Vert_{\Pb,2} < \infty$ \citep{pollard1982central}. Based on Theorems \ref{thm:k-means} and \ref{thm:plug-in-consistency}, one may thus characterize the rate of convergence of $\widehat{C}$, as stated in the next corollary.

\begin{corollary}\label{cor:plug-in-convergence-rate}
    Suppose that $\Pb$ satisfies the margin condition with some $\kappa > 0$, $\alpha > 0$, and that Assumptions \ref{assumption:A1-boundedness} - \ref{assumption:A3-uniqueness-of-codebook} hold. Also assume that $\widehat{\mu}$ is constructed from a separate independent sample. Then
    $
        \Vert \widehat{C} - C^* \Vert_1 = \sum_{j=1}^k \Vert \widehat{c}_j - c^*_j \Vert_1 = O_\Pb\left( \sqrt{\frac{\log n}{n}} + R_{1,n} \right).
    $
\end{corollary}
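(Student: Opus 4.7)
The plan is to combine the consistency $\widehat{C} \xrightarrow{p} C^*$ from Theorem \ref{thm:plug-in-consistency}, the excess-risk bound of Theorem \ref{thm:k-means}, and the local differential structure of $R$ at $C^*$ given by \citet{pollard1982central} in order to convert risk consistency into parameter-level consistency at the stated rate. By Theorem \ref{thm:plug-in-consistency}, $\Pb(\widehat{C}\in\mathcal{N})\to 1$ for any prescribed neighborhood $\mathcal{N}$ of $C^*$, so all subsequent analysis may be performed on this event. Under Assumption \ref{assumption:A3-uniqueness-of-codebook} together with $\Vert\mu\Vert_{\Pb,2}<\infty$, Pollard's result furnishes a second-order Taylor expansion of $R$ at $C^*$ with $\nabla R(C^*)=0$ (first-order optimality) and Hessian $\nabla^2 R(C^*)$ positive definite under the standard regularity accompanying uniqueness. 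Shrinking $\mathcal{N}$ if necessary, this yields the local quadratic growth $R(C) - R(C^*) \gtrsim \Vert C - C^*\Vert_2^2$.

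The core step is a Pollard-style first-order stochastic expansion of the minimizer: from an (approximate) first-order condition $\nabla \widehat{R}_n(\widehat{C}) \approx 0$ together with a Taylor expansion about $C^*$,
\begin{equation*}
\widehat{C} - C^* \;=\; -\bigl[\nabla^2 R(C^*)\bigr]^{-1}\nabla\widehat{R}_n(C^*) \;+\; o_\Pb\bigl(\Vert\widehat{C}-C^*\Vert_2\bigr).
\end{equation*}
I would then decompose $\nabla\widehat{R}_n(C^*) = \{\nabla\widehat{R}_n(C^*) - \nabla R_n(C^*)\} + \{\nabla R_n(C^*) - \nabla R(C^*)\}$, using $\nabla R(C^*)=0$. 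The sampling-fluctuation bracket is $O_\Pb(n^{-1/2}) = O_\Pb(\sqrt{\log n/n})$ by the central limit theorem, mirroring Pollard's $\sqrt{n}$-consistency in the oracle case. The nuisance-perturbation bracket is controlled by the same one-step bounds developed in the proof of Theorem \ref{thm:k-means}, contributing an $O_\Pb(R_{1,n})$ term. Passing from $\Vert\cdot\Vert_2$ to $\Vert\cdot\Vert_1$ costs only a constant factor $\sqrt{kp}$, producing the claimed rate.

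The main obstacle will be making this expansion rigorous despite the piecewise, non-smooth structure of $\widehat{R}_n$: as $C$ varies, the nearest-center index for each $\widehat{\mu}_{(i)}$ can jump, and $\nabla \widehat{R}_n$ is only well-defined on the interior of each combinatorial piece. Here the margin condition of Definition \ref{def:margin-condition}, combined with the convergence of the sample Voronoi partition inherited from Theorem \ref{thm:plug-in-consistency}, is what controls the probability mass near the boundary $\partial C^*$ finely enough that the non-smooth contributions collapse into the $R_{1,n}$ remainder. Absent such an argument, the crude alternative of combining the local quadratic growth with Theorem \ref{thm:k-means} alone would only deliver $\Vert\widehat{C}-C^*\Vert_1 = O_\Pb\bigl((\log n/n)^{1/4} + \sqrt{R_{1,n}}\bigr)$, strictly slower than the target, so some version of the Pollard-style expansion is essential for attaining the corollary's rate.
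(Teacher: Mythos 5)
Your proposal is correct in outline but follows a genuinely different route from the paper's. The paper gives no separate appendix proof of this corollary: its justification is exactly the remark preceding it, namely differentiability of $C\mapsto R(C)$ from \citet{pollard1982central} combined with Theorems \ref{thm:k-means} and \ref{thm:plug-in-consistency}. What you propose instead is a Z-estimator expansion of the minimizer, $\widehat{C}-C^{*}=-\bigl[\nabla^{2}R(C^{*})\bigr]^{-1}\nabla\widehat{R}_{n}(C^{*})+o_\Pb\bigl(\Vert\widehat{C}-C^{*}\Vert\bigr)$, with the score split into a CLT fluctuation and a nuisance-perturbation term of order $R_{1,n}$; this is essentially the machinery the paper only deploys later, for the semiparametric estimator (the empirical moment condition, the diagonal matrix $M(C^{*},\eta)$ made invertible by $p_j(\eta,C^{*})>0$, the sample-splitting lemma for the cross term, and the margin-condition bounds on indicator differences in the proofs of Lemma \ref{lem:phi-bias-alphap1} and Theorem \ref{thm:Chat-root-nCAN}), transplanted to the plug-in risk. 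Your closing observation is sharp and worth keeping: since $\nabla R(C^{*})=0$, local quadratic growth plus the excess-risk bound of Theorem \ref{thm:k-means} only yields $\Vert\widehat{C}-C^{*}\Vert_{1}=O_\Pb\bigl((\log n/n)^{1/4}+\sqrt{R_{1,n}}\bigr)$, so the paper's one-line derivation does not by itself deliver the stated rate, whereas your expansion does (and would in fact give $1/\sqrt{n}$ in place of $\sqrt{\log n/n}$ for the sampling term). Two points you should make explicit if you carry this out: (i) nonsingularity of the Hessian does not follow from uniqueness alone — the paper's analogous result assumes $p_j(\eta,C^{*})>0$, and this corollary states no such hypothesis, so you must add it or argue it; and (ii) the non-smooth terms you flag as ``the main obstacle'' — the bias from $\mathbbm{1}\{j=d(\widehat{\mu},C^{*})\}-\mathbbm{1}\{j=d(\mu,C^{*})\}$ and the equicontinuity of the gradient class across $C$ — are handled in the paper by the conditioning-on-$\kappa$ argument under the margin condition together with sample splitting, which is precisely where the $\Vert\widehat{\mu}_a-\mu_a\Vert_{\infty}^{\alpha+1}$ and $\kappa^{-1}$ pieces of $R_{1,n}$ originate; you correctly locate the difficulty but leave that step unproved.
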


The plug-in estimator is simple and intuitive. When an initial estimator is available or $\widehat{\mu}$ is fitted in a separate independent sample, \eqref{def:plug-ins-k-means-codebook} is readily implementable using the standard, off-the-shelf algorithms including Lloyd's algorithm. Alternatively, we may estimate the risk via cross-fitting, where we swap the samples, repeat the procedure, and average the results to regain full sample size efficiency. Then we compute the optimal codebook that minimizes the estimated risk. We shall address this in further detail in the following section.

Note that the convergence rate in Theorem \ref{thm:k-means} essentially inherits from $\widehat{\mu}$. Hence, for either the risk or the codebook, rates of convergence would be expected to be slower than $\sqrt{n}$ with non-normal limiting distributions not centered at the true parameter, unless careful undersmoothing of particular estimators (e.g., splines) is used. Consequently, valid confidence intervals, even via bootstrap, may not be constructed. Substituting doubly robust scores for $\mu$ in the risk function may appear promising for improving convergence rates. However, unlike standard smooth causal parameters, this does not guarantee recovery of the efficient influence function because of the intrinsic nonsmoothness of the risk functional. In the following section, we will develop an estimator that can be $\sqrt{n}$ consistent and asymptotically normal even if the nuisance functions are estimated flexibly at slower than $\sqrt{n}$ rates, in a wide variety of settings.

\begin{remark} \label{rmk:choice-of-k}
    In this work, we consider the number of clusters k as fixed and do not address its optimal selection. We conjecture that existing methods such as the Elbow method or general tuning parameter selection techniques may be adapted to our framework. In practice, the choice of $k$ often reflects the investigator's goal: e.g., identifying two highly contrasting subgroups with $k = 2$. Developing data-driven strategies for selecting $k$ remains an important direction for future work.\\
\end{remark}

\section{Semiparametric Estimator}
\label{sec:efficient-causal-clustering}

In this section, we develop estimators that attain faster convergence rates than the plug in estimator in Section \ref{sec:plug-in-estimator} by leveraging semiparametric efficiency theory.

\subsection{Proposed estimator}

For convenience, we introduce the following additional notation
\begin{equation} \label{eqn:notation-IFs}
\begin{aligned}
    \pi_a(X) & = \Pb\left(A=a \mid X\right), \\
    \varphi_{1,a}(Z;\eta_a) &= \frac{\mathbbm{1}(A=a)}{\pi_a(X)}\left\{Y- \mu_A(X)\right\} + \mu_a(X), \\
    \varphi_{2,a}(Z;\eta_a) &= 2\mu_a(X)\frac{\mathbbm{1}(A=a)}{\pi_a(X)}\left\{Y- \mu_A(X)\right\} + \mu_a^2(X), 
\end{aligned}
\end{equation}
where $\eta_a=\{ \pi_a, \mu_a \}$ denotes a set of relevant nuisance functions, $\forall a \in \mathcal{A}$. $\pi_a$ is a conditional probability of receiving the treatment $a$; when $p=2$, $\pi_1$ denotes the propensity score. $\varphi_{1,a}$ and $\varphi_{2,a}$ are the uncentered efficient influence function for the parameters $\E\left\{ \mu_a(X) \right\}$ and $\E\left\{ \mu_a^2(X) \right\}$, respectively. The efficient influence function is important to construct optimal estimators since its variance equals the efficiency bound (in asymptotic minimax sense). By leveraging the efficient influence function, one can obtain desirable properties such as double robustness and reduced second order bias, thereby weakening the nonparametric conditions required for nuisance function estimation. For further background on influence functions and semiparametric efficiency theory, see, for example, \citet{vanderVaart2002semiparametric,tsiatis2007semiparametric,kennedy2016semiparametric,kennedy2022semiparametric}.

Next, for any fixed $C \in \mathcal{C}_k$, define
\begin{equation} \label{eqn:eif-of-risk}
\begin{aligned}
\varphi_{C}(Z;\eta) = \sum_{a\in\mathcal{A}} \left\{\varphi_{2,a}(Z;\eta_a) - 2\varphi_{1,a}(Z;\eta_a)\left[\Pi_C\left(\mu \right)\right]_a + \left[\Pi_C\left(\mu\right)\right]^2_a \right\},
\end{aligned}
\end{equation}
where we let $\eta = \left\{ \eta_a \right\}_{a \in \mathcal{A}}$ denote a set of all nuisance functions collectively, and $\left[\Pi_C\left(\mu \right)\right]_a$ be the $a$-th element of the projection $\Pi_C\left(\mu \right)$. Then $\varphi_{C^*}(Z;\eta)$ is the uncentered efficient influence function for $R(C^*)$ whenever $\Pb$ satisfies the margin condition, as formally stated below. 

\begin{lemma} \label{lem:eif-of-risk}
    Suppose that Assumptions \ref{assumption:A1-boundedness}, \ref{assumption:A2-consistency} hold, and that $\Pb$ satisfies the margin condition with some $\kappa > 0$ and $\alpha > 0$. If, for every optimal codebook $C^* \in \mathcal{C}_k^*$, we let $\phi_{C^*}(z;\Pb) = \varphi_{C^*}(z;\Pb) - \int \varphi_{C^*}(z;\Pb)d\Pb$, then $\phi_{C^*}$ is the efficient influence function for $R(C^*)$.
\end{lemma}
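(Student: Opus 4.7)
The plan is to derive $\phi_{C^*}$ as the efficient influence function of $R(C^*)$ in two stages. First, I would fix a deterministic codebook $C$ and compute the EIF of the ``fixed-$C$'' functional $\psi(\Pb) = R(C) = \E\|\mu - \Pi_C(\mu)\|_2^2$ in the nonparametric model. Second, I would invoke an envelope-theorem argument to show that when $C = C^*(\Pb)$ is itself an $\argmin$ functional of $\Pb$, no additional terms arise in the pathwise derivative, so the EIF of $R(C^*)$ coincides with the EIF of $R(C)$ evaluated at $C = C^*$.

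For the first stage, write $g(m) = \|m - \Pi_C(m)\|_2^2 = \sum_j \|m - c_j\|_2^2 \mathbbm{1}(m \in V_j(C))$ on $\R^p$. This function is continuous (its pieces agree on $\partial C$), piecewise quadratic, and differentiable on $\R^p \setminus \partial C$ with $\nabla g(m) = 2(m - \Pi_C(m))$. Along a regular parametric submodel $\Pb_t$ with score $s(Z)$ at $t = 0$ and perturbation $\dot\mu_a(X) = \E[(Y - \mu_a(X)) s(Z) \mid X, A = a]$, a direct calculation gives
\[
\frac{d}{dt} R_t(C)\bigg|_{t=0} = \E[g(\mu) s(Z)] + 2\sum_{a \in \mathcal{A}} \E\left[\frac{\mathbbm{1}(A=a)}{\pi_a(X)}(\mu_a - [\Pi_C(\mu)]_a)(Y - \mu_a) s(Z)\right],
\]
where the second summand arises from the chain-rule contribution through $\mu$, with the inverse-propensity weight emerging by iterated expectations. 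Substituting the definitions of $\varphi_{1,a}$ and $\varphi_{2,a}$ from \eqref{eqn:notation-IFs} and rearranging shows that this derivative equals $\E[(\varphi_C(Z;\eta) - R(C)) s(Z)]$. Since the nonparametric tangent space is all of $L^2_0(\Pb)$, this identifies $\varphi_C - R(C)$ as the EIF of $R(C)$; the routine checks $\E[\varphi_{1,a}] = \E[\mu_a]$ and $\E[\varphi_{2,a}] = \E[\mu_a^2]$ confirm that $\varphi_C$ has mean $R(C)$ and hence $\varphi_C - R(C)$ has mean zero.

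For the second stage, since $C \mapsto R(C)$ is differentiable with $\nabla_C R(C)|_{C = C^*} = 0$ at any optimum (noted in the paragraph preceding Corollary \ref{cor:plug-in-convergence-rate}), an envelope argument applies. Writing $C^*_t \in \argmin_C R_t(C)$, optimality yields $R_t(C^*_t) \leq R_t(C^*_0)$ and $R_0(C^*_0) \leq R_0(C^*_t)$, and combined with differentiability of $R_t$ in $C$ and convergence of $C^*_t$ to some element of $\mathcal{C}^*_k$ along the submodel, this produces
\[
\frac{d}{dt} R_t(C^*_t)\bigg|_{t=0} = \frac{\partial}{\partial t} R_t(C^*_0)\bigg|_{t=0} = \E\{\phi_{C^*}(Z;\Pb) s(Z)\},
\]
so $\phi_{C^*}$ is the EIF of $R(C^*)$ at any optimal $C^* \in \mathcal{C}^*_k$.

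The main obstacle is justifying the first-stage interchange of differentiation and expectation, because $g$ is not differentiable on $\partial C$. The margin condition is precisely the regularity one needs: when $\mu$ is perturbed by $t\dot\mu$, $\Pi_C(\mu + t\dot\mu)$ can differ from $\Pi_C(\mu)$ only on $\{\mu \in N_C(ct)\}$, which has $\Pb$-mass $\lesssim t^\alpha$, and since $g$ is continuous across $\partial C$ this ``wrong-side'' contribution is $O(t^{1+\alpha}) = o(t)$ for any $\alpha > 0$, so the pathwise derivative reduces to $\E[\nabla g(\mu)^\top \dot\mu]$ as used above. The envelope step is standard once $C^*_t \to C^*$, which follows from compactness of the codebook space (Assumption \ref{assumption:A1-boundedness}) and continuity of $R$ in both arguments.
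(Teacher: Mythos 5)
Your proposal is correct in substance, but it proves the lemma by a genuinely different route than the paper. You derive the influence function constructively: first a pathwise-derivative calculation for the fixed-$C$ functional $\E\Vert\mu-\Pi_C(\mu)\Vert_2^2$ along regular submodels (your algebra checks out, since $\varphi_C = f_C(\mu) + 2\sum_a(\mu_a-[\Pi_C(\mu)]_a)\tfrac{\mathbbm{1}(A=a)}{\pi_a}(Y-\mu_a)$), with the margin condition used to argue that the set where the Voronoi assignment flips has mass $O(t^\alpha)$ and, by continuity of $f_C$ across $\partial C$, contributes only $O(t^{1+\alpha})=o(t)$; then a Danskin/envelope argument to show that the dependence of $C^*$ on $\Pb$ adds no term because the gradient of $C\mapsto R(C)$ vanishes at the optimum. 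The paper instead verifies a guessed candidate: it writes the von Mises expansion for $\varphi_{C^*}$, shows the remainder $\int\{\varphi_{C^*}(z;\bar\Pb)-\varphi_{C^*}(z;\Pb)\}d\Pb$ is second order via the bias bound of Lemma \ref{lem:phi-bias-alphap1} (whose indicator-difference terms play exactly the role of your envelope step, and whose margin terms play the role of your boundary-mass argument), and concludes by uniqueness of the EIF in nonparametric models through Lemma 2 of \citet{kennedy2023semiparametric}. Your route is more classical and explains where each piece of $\varphi_C$ comes from (the IPW term from the chain rule through $\mu$; no correction needed for $\Pi_C(\mu)$ or for the optimality of $C^*$); the paper's route avoids submodel/score calculations altogether, is insensitive to non-uniqueness of $C^*$, and produces the quantitative second-order remainder that is reused downstream (Lemma \ref{lem:root-n-CAN-Rhat}, Theorem \ref{thm:Chat-root-nCAN}). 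Two points to tighten if you write this out in full: the lemma does not assume Assumption \ref{assumption:A3-uniqueness-of-codebook}, so your envelope step must accommodate a non-singleton $\mathcal{C}_k^*$ (the sandwich bounds still work, but $C^*_t$ only converges to the set of minimizers, and you need both signs of $t$ together with continuity of the $t$-derivative of $R_t(C)$ in $C$ uniformly near $\mathcal{C}_k^*$); and your stage-one claim for an arbitrary fixed $C$ implicitly needs zero boundary mass at that $C$, which the margin condition guarantees only at $C^*$ --- harmless here since only $C=C^*$ is ever used, but worth saying explicitly.
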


Based on Lemma \ref{lem:eif-of-risk}, we may construct an efficient semiparametric estimator for $R(C)$ by de-biasing the plug-in estimator in \eqref{def:plug-ins-k-means-codebook}. There are two main approaches for constructing such semiparametric estimators; one is based on empirical process conditions, and the other is to use sample
splitting. Following \citet{robins2008higher, zheng2010asymptotic, Chernozhukov17, newey2018cross, kennedy2020optimal} and many others, we use {sample splitting} (or {cross-fitting}) to allow for arbitrarily complex nuisance estimators $\widehat{\eta}$. Specifically with fixed $K$, we split the data into $K$ disjoint groups, each with size $n/K$ approximately, by drawing variables $(B_1,\ldots, B_n)$ independent of the data; $B_i=b$ indicates that subject $i$ was split into group $b \in \{1,\ldots,K\}$. This could be done, for example, by drawing each $B_i$ uniformly from $\{1, \ldots, K\}$. Then we propose our estimator for $R(C)$ as 
\begin{align} \label{eqn:eif-cluster-risk-estimator}
    \widehat{R}(C) &= \sum_{b=1}^K \left\{\frac{1}{n} \sum_{i=1}^n \mathbbm{1}(B_i=b) \right\} \Pn^b\left\{ \varphi_{C}(Z;\widehat{\eta}_{-b})\right\} \nonumber \\
    & \equiv \Pn\left\{  \varphi_{C}(Z;\widehat{\eta}_{-K}) \right\},
\end{align}
where we let $\Pn^b$ denote empirical averages only over the set of units $\{i : B_i=b\}$ in group $b$ and let $\widehat{\eta}_{-b}$ denote the nuisance estimator constructed only using those units $\{i : B_i \neq b\}$. If one is willing to assume that the nuisance function class and corresponding estimators are not too complex (e.g., Donsker or low entropy conditions), then $\eta$ can be estimated on the same sample without sample splitting \citep[e.g.,][]{kennedy2016semiparametric, kennedy2022semiparametric}. 

Subsequently, we estimate the optimal codebook $C^*$ by minimizing $\widehat{R}(C)$:
\begin{align} \label{eqn:eif-codebook-estimator}
    \widehat{C} = \underset{C \in \mathcal{C}_k}{\argmin}\widehat{R}(C).    
\end{align}
Note that the cross-fitting procedure described above is equally applicable to the plug-in estimator \eqref{def:plug-ins-k-means-codebook}. \eqref{eqn:eif-codebook-estimator} can be computed using first-order (e.g., gradient descent) or second-order (e.g., Newton-Raphson) methods based on the derivative formulas \eqref{eqn:varphi-derivative} and \eqref{eqn:varphi-Hessian} specified in the following section. In practice, a generalized Lloyd–type block coordinate descent algorithm may also be effective, as it computes exact per-step minimizers and often yields faster and more stable convergence for the semiparametric objective without requiring explicit gradient calculations.

\subsection{Asymptotic Properties} \label{sec:eff-k-means-theoretical-properties}

In this subsection, we study the asymptotic properties of the proposed semiparametric estimator in \eqref{eqn:eif-codebook-estimator}. For notational convenience, we first define the remainder term that appears in our results:
\begin{align*}    
    R_{2,n} &= \max_a \left\{\Vert \widehat{\mu}_a - {\mu}_a \Vert_{\Pb,2} \left( \Vert \widehat{\mu}_a - {\mu}_a \Vert_{\Pb,2} + \Vert \widehat{\pi}_a - {\pi}_a \Vert_{\Pb,2} \right) \right\} + \max_a \Vert \widehat{\mu}_a - \mu_a \Vert_{\infty}^{\alpha+1}\\
    &\quad + \frac{1}{\kappa}\max_a \left(\Vert \widehat{\mu}_a - \mu_a \Vert_{\infty} \left\Vert \widehat{\mu}_a - \mu_a \right\Vert_{\Pb,1}\right).
\end{align*}

Note that all terms in $R_{2,n}$ are second order, unlike those in $R_{1,n}$ from the previous section.
We impose the following additional assumptions on nuisance estimation.

\begin{assumptionp}{A4} 
\label{assumption:A4-pihat-boundedness}
        $\Pb\left\{ \epsilon \leq \widehat{\pi}_a(X) \leq 1 - \epsilon \right\} = 1$ for some $\epsilon > 0$.
\end{assumptionp}

\begin{assumptionp}{A5} 
\label{assumption:A5-np-consistency-condition}
       $\underset{a}{\max} \left\{ \Vert \widehat{\pi}_a - {\pi}_a \Vert_{\Pb,2} + \Vert \widehat{\mu}_a - {\mu}_a \Vert_{\infty} \right\} = o_\Pb(1)$.
\end{assumptionp}

\begin{assumptionp}{A6} 
\label{assumption:A6-np-remainder-condition}
        $R_{2,n} = o_\Pb(n^{-1/2})$.
\end{assumptionp}

Assumption \ref{assumption:A5-np-consistency-condition} is a mild consistency assumption, with no requirement on rates of convergence. Assumption \ref{assumption:A6-np-remainder-condition} may hold, for example, under standard $n^{-1/4}$-type rate conditions on $\widehat{\eta}$ which can be attained under smoothness, sparsity, or other structural constraints \citep[e.g.,][]{kennedy2016semiparametric}.  

Lemma \ref{lem:eif-of-risk} yields conditions under which $\widehat{R}(C^*)$ is asymptotically normal and efficient for $R(C^*)$ for any $C^*$ satisfying the margin condition, as formalized below.

\begin{lemma} \label{lem:root-n-CAN-Rhat}
    Suppose that the margin condition is satisfied with some $\alpha >0$, $\kappa >0$, and that Assumptions \ref{assumption:A1-boundedness}, \ref{assumption:A4-pihat-boundedness}, \ref{assumption:A5-np-consistency-condition}, and \ref{assumption:A6-np-remainder-condition} hold. Then for any $C^*  \in \mathcal{C}_k^*$,
    \begin{align*}
        \sqrt{n}\left\{ \widehat{R}(C^*) - R(C^*) \right\} \leadsto N\left(0, \var\left(\varphi_{C^*} \right) \right),
    \end{align*}
    where $\varphi_{C}$ is defined in \eqref{eqn:eif-of-risk}.
\end{lemma}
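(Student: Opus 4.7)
The plan is to run the standard semiparametric decomposition, handle the oracle and empirical-process pieces by a CLT and sample splitting, and then control the bias remainder by combining the doubly-robust structure of $\varphi_C$ with the margin condition. Write
\begin{align*}
\widehat{R}(C^*) - R(C^*) &= \underbrace{(\Pn - \Pb)\{\varphi_{C^*}(Z;\eta)\}}_{T_1} + \underbrace{(\Pn - \Pb)\{\varphi_{C^*}(Z;\widehat{\eta}) - \varphi_{C^*}(Z;\eta)\}}_{T_2} \\
&\quad + \underbrace{\Pb\{\varphi_{C^*}(Z;\widehat{\eta})\} - R(C^*)}_{T_3}.
\end{align*}
The classical CLT gives $\sqrt{n}\,T_1 \leadsto N(0,\var(\varphi_{C^*}))$; the variance is finite because Assumptions \ref{assumption:A1-boundedness} and \ref{assumption:A4-pihat-boundedness} bound each $\varphi_{j,a}(Z;\eta_a)$ in sup norm. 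For $T_2$, conditioning on the auxiliary folds turns $\varphi_{C^*}(\cdot;\widehat{\eta}_{-b}) - \varphi_{C^*}(\cdot;\eta)$ into a fixed function of $Z$; a fold-wise Chebyshev bound then gives $\sqrt{n}\,T_2 = O_\Pb(\|\varphi_{C^*}(\cdot;\widehat{\eta}) - \varphi_{C^*}(\cdot;\eta)\|_{\Pb,2}) = o_\Pb(1)$, using Assumption \ref{assumption:A5-np-consistency-condition} together with the fact that $\Pi_{C^*}$ is continuous off of $\partial C^*$ and $\Pb$ places zero mass on $\partial C^*$ under the margin condition (so $\|\Pi_{C^*}(\widehat{\mu}) - \Pi_{C^*}(\mu)\|_{\Pb,2} \to 0$ by dominated convergence).

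The main work lies in showing $T_3 = O_\Pb(R_{2,n}) = o_\Pb(n^{-1/2})$. I would expand $\Pb\{\varphi_{C^*}(Z;\widehat{\eta})\}$ coordinate by coordinate via iterated expectations, exploiting that $\varphi_{1,a}$ and $\varphi_{2,a}$ are doubly-robust for $\E(\mu_a)$ and $\E(\mu_a^2)$. Writing $g = \Pi_{C^*}(\mu)$ and $\hat{g} = \Pi_{C^*}(\widehat{\mu})$, routine calculations give, for each $a$,
\[
\Pb\{\varphi_{2,a}(Z;\widehat{\eta}_a)\} - \E(\mu_a^2) = -\E\{(\widehat{\mu}_a - \mu_a)^2\} + O\bigl(\|\widehat{\mu}_a - \mu_a\|_{\Pb,2}\|\widehat{\pi}_a - \pi_a\|_{\Pb,2}\bigr),
\]
\[
\Pb\{\varphi_{1,a}(Z;\widehat{\eta}_a)\hat{g}_a\} - \E(\mu_a g_a) = \E\{\mu_a(\hat{g}_a - g_a)\} + O\bigl(\|\widehat{\mu}_a - \mu_a\|_{\Pb,2}\|\widehat{\pi}_a - \pi_a\|_{\Pb,2}\bigr).
\]
Assembling $T_3$ from these and invoking the coordinate-wise identity $\hat{g}_a^2 - g_a^2 - 2\mu_a(\hat{g}_a - g_a) = (\hat{g}_a - \mu_a)^2 - (g_a - \mu_a)^2$, the linear-in-$(\hat{g}-g)$ pieces collapse to the purely geometric quantity $\E\{\|\hat{g} - \mu\|_2^2 - \|g - \mu\|_2^2\}$, while the remaining pieces are bounded by the first term of $R_{2,n}$.

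The hard part is bounding $\E\{\|\hat{g} - \mu\|_2^2 - \|g - \mu\|_2^2\}$, which is where the margin condition is essential. The integrand vanishes outside the random event $E = \{\hat{g} \neq g\}$. Combining the two optimality relations $\|\mu - g\|_2 \leq \|\mu - \hat{g}\|_2$ and $\|\widehat{\mu} - \hat{g}\|_2 \leq \|\widehat{\mu} - g\|_2$ with the triangle inequality forces $\mu$ to lie within distance $O(\|\widehat{\mu} - \mu\|_\infty)$ of $\partial C^*$ on $E$, so $E \subseteq \{\mu \in N_{C^*}(c\|\widehat{\mu} - \mu\|_\infty)\}$ for a constant $c$ depending only on $p$. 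On $\{\|\widehat{\mu} - \mu\|_\infty \leq \kappa\}$ the margin condition then yields $\Pb(E) \lesssim \|\widehat{\mu} - \mu\|_\infty^\alpha$, and since the integrand is itself $O(\|\widehat{\mu} - \mu\|_\infty)$ on $E$ by the same triangle bound, this contributes the $\max_a \|\widehat{\mu}_a - \mu_a\|_\infty^{\alpha+1}$ piece of $R_{2,n}$. The complementary event $\{\|\widehat{\mu} - \mu\|_\infty > \kappa\}$, handled by a Markov-type inequality using Assumption \ref{assumption:A1-boundedness}, produces the residual $\kappa^{-1}\max_a (\|\widehat{\mu}_a - \mu_a\|_\infty \|\widehat{\mu}_a - \mu_a\|_{\Pb,1})$ contribution. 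Assumption \ref{assumption:A6-np-remainder-condition} then gives $\sqrt{n}\,T_3 = o_\Pb(1)$, and Slutsky combines $T_1, T_2, T_3$ into the asserted CLT.
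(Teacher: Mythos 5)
Your proposal is correct and follows essentially the same route as the paper: the identical three-term decomposition into an oracle CLT term, a sample-split empirical-process term controlled via a conditional Chebyshev argument, and a conditional bias term shown to be $O_\Pb(R_{2,n})=o_\Pb(n^{-1/2})$. Your handling of $T_3$ — doubly-robust expansions for $\varphi_{1,a},\varphi_{2,a}$ plus the margin-condition bound on $\E\{\|\hat g-\mu\|_2^2-\|g-\mu\|_2^2\}=\Pb\{f_{c^*_{\widehat d}}(\mu)-f_{c^*_d}(\mu)\}$ split over $\{\|\widehat\mu-\mu\|_\infty\le\kappa\}$ and its complement — is just a different bookkeeping of the paper's Lemma \ref{lem:phi-bias-alphap1} and yields the same bound.
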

Under the similar conditions as Theorem \ref{thm:plug-in-consistency}, we can show the proposed codebook estimator \eqref{eqn:eif-codebook-estimator} is consistent, as stated in the following corollary.

\begin{corollary} \label{cor:consistency-of-Chat-eif-estimator}
    If Assumptions \ref{assumption:A1-boundedness}, \ref{assumption:A3-uniqueness-of-codebook}, \ref{assumption:A4-pihat-boundedness}, and \ref{assumption:A5-np-consistency-condition} hold, then $\widehat{C}$ computed by the semiparametric estimator \eqref{eqn:eif-codebook-estimator} converges in probability to $C^*$.
\end{corollary}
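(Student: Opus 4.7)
The plan is to reduce Corollary \ref{cor:consistency-of-Chat-eif-estimator} to the same argmin-consistency argument used for Theorem \ref{thm:plug-in-consistency}, by first establishing uniform convergence of $\widehat{R}(C)$ to $R(C)$ on a compact codebook space. By \ref{assumption:A1-boundedness} the image of $\mu$ is contained in $[-B,B]^p$, so without loss of generality $\mathcal{C}_k \subseteq [-B,B]^{kp}$ is compact. On this set the map $C \mapsto R(C) = \E \min_{c \in C}\|\mu - c\|_2^2$ is Lipschitz (being an expectation of a minimum of Lipschitz-in-$C$ functions), and by \ref{assumption:A3-uniqueness-of-codebook} it admits a unique minimizer $C^*$ modulo relabeling.

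The main work is to show $\sup_{C \in \mathcal{C}_k}|\widehat{R}(C) - R(C)| = o_\Pb(1)$. Since $\varphi_{1,a}$ and $\varphi_{2,a}$ are unbiased for $\E[\mu_a(X)]$ and $\E[\mu_a^2(X)]$ under the identification assumptions, we have $\Pb\{\varphi_C(Z;\eta)\}=R(C)$, and we decompose
\begin{align*}
\widehat{R}(C) - R(C) = (\Pn - \Pb)\varphi_C(Z;\widehat{\eta}_{-K}) + \Pb\varphi_C(Z;\widehat{\eta}_{-K}) - R(C).
\end{align*}
For the bias term, the algebraic identity
\begin{align*}
\varphi_C(Z;\widehat{\eta}) = \|\widehat{\mu} - \Pi_C(\widehat{\mu})\|_2^2 + \sum_a\{\varphi_{2,a}(Z;\widehat{\eta}_a) - \widehat{\mu}_a^2\} - 2\sum_a\{\varphi_{1,a}(Z;\widehat{\eta}_a) - \widehat{\mu}_a\}[\Pi_C(\widehat{\mu})]_a
\end{align*}
splits the analysis into (i) $\Pb\{\|\widehat{\mu} - \Pi_C(\widehat{\mu})\|_2^2\} - R(C)$, which is uniformly $o_\Pb(1)$ by the $L_\infty$-consistency of $\widehat{\mu}$ in \ref{assumption:A5-np-consistency-condition} and the Lipschitz-in-$\mu$ property of $\|\mu - \Pi_C(\mu)\|_2^2$ on a bounded set; and (ii) the remaining summands, whose $\Pb$-integral is bounded uniformly in $C$ by a constant multiple of $\max_a\|\widehat{\mu}_a - \mu_a\|_{\Pb,1}$ after conditioning on $X$ (using $\|\Pi_C(\widehat{\mu})\|_\infty \le B$ together with \ref{assumption:A4-pihat-boundedness}), hence also $o_\Pb(1)$ under \ref{assumption:A5-np-consistency-condition}. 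For the empirical-process term, I would condition on the auxiliary sample that trained $\widehat{\eta}_{-b}$; the resulting class $\{\varphi_C(\cdot;\widehat{\eta}_{-b}) : C \in \mathcal{C}_k\}$ has uniformly bounded envelope by \ref{assumption:A1-boundedness} and \ref{assumption:A4-pihat-boundedness}, and depends on $C$ only through the $1$-Lipschitz map $C \mapsto \min_{c \in C}\|\widehat{\mu} - c\|_2^2$ plus a term linear in the bounded projection $\Pi_C(\widehat{\mu})$, so it is Glivenko-Cantelli over the compact index set $\mathcal{C}_k$; averaging across the $K$ folds preserves the uniform convergence.

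Combining the two pieces yields $\sup_{C \in \mathcal{C}_k}|\widehat{R}(C) - R(C)| = o_\Pb(1)$, and the classical argmin-consistency argument (as in \citet{pollard1981strong}) then delivers $\widehat{C} \xrightarrow{P} C^*$. The main obstacle is controlling the empirical-process term when $\widehat{\eta}$ is nonparametric: cross-fitting sidesteps Donsker-type assumptions by conditioning on the training fold, after which the boundedness of $\widehat{\mu}$ and $1/\widehat{\pi}$ together with the Lipschitz dependence on $C$ reduce the required uniform law to a routine Glivenko-Cantelli exercise over a compact Euclidean index set, essentially as in the proof of Theorem \ref{thm:plug-in-consistency}.
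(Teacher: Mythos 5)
Your proposal is correct and follows essentially the same route as the paper: the paper's proof of Corollary \ref{cor:consistency-of-Chat-eif-estimator} likewise reduces to showing $\sup_{C\in\mathcal{C}_k}\vert\widehat{R}(C)-R(C)\vert=o_\Pb(1)$ and then invokes the argmin-consistency argument of Theorem \ref{thm:plug-in-consistency}, with the uniform bias bound supplied by Lemma \ref{lem:phi_C-supremum-bias} (your algebraic split of $\varphi_C(Z;\widehat{\eta})$ into the plug-in risk plus correction terms yields the same first-order $\max_a\Vert\widehat{\mu}_a-\mu_a\Vert_{\Pb,1}$ bound). The only loose spot is the uniform law for the empirical-process term: the map $C\mapsto\Pi_C(\widehat{\mu})$ is not Lipschitz (it jumps across Voronoi boundaries), so the Glivenko--Cantelli property of $\{\varphi_C(\cdot;\widehat{\eta}_{-b})\}$ should be justified via the VC structure of the cell indicators $\mathbbm{1}\{j=d(\widehat{\mu},C)\}$, exactly as the paper does for $\{f_C\}$ in the proof of Theorem \ref{thm:plug-in-consistency}, rather than by Lipschitz dependence on $C$ alone.
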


We now turn to the asymptotic properties of $\widehat{C}$, focusing in particular on conditions that ensure $\sqrt{n}$-consistency and asymptotic normality in large nonparametric models. To this end, let $\upvarphi_1(z;\eta) = [\varphi_{1,1}(z;\eta_1), \ldots, \varphi_{1,p}(z;\eta_p)]^\top$ where each $\varphi_{1,a}$ is defined in \eqref{eqn:notation-IFs}. With a slight abuse of notation, as was done in \citet{bottou1994convergence}, we define the derivative of $\varphi_C$ at any $C^\prime \in \mathcal{C}_k$ for some fixed $\bar{\eta}$ by
\begin{align} \label{eqn:varphi-derivative}
    \upvarphi_{C^\prime}(Z;\bar{\eta}) & \equiv \frac{\partial}{\partial C} \varphi_C(Z;\bar{\eta}) \Big\vert_{C=C^\prime}\\
    &= 2\left[(c^\prime_1 - \upvarphi_1(Z;\bar{\eta})) \mathbbm{1}\{1=d(\bar{\mu},C^\prime)\}, \ldots, (c^\prime_k - \upvarphi_1(Z;\bar{\eta})) \mathbbm{1}\{k=d(\bar{\mu},C^\prime)\} \right]^\top \nonumber
\end{align}
where $d(\bar{\mu},C^\prime) = \underset{j\in\{1,\ldots,k\}}{\argmin} \Vert c^\prime_j - \bar{\mu} \Vert_2^2$, i.e., the subscript for the nearest center to a given $\bar{\mu}$. Similarly, one may compute the derivative matrix of $\Pb\left\{ \upvarphi_{C}(Z;\bar{\eta}) \right\}$ at $C^\prime$:
\begin{align} \label{eqn:varphi-Hessian}
    M(C^\prime, \bar{\eta}) \equiv \frac{\partial}{\partial C}  \Pb\left\{ \upvarphi_{C}(Z;\bar{\eta}) \right\} \Big\vert_{C=C^\prime} = 2\text{diag}\left(\bm{1}_{(p)} p_1(\bar{\eta},C^\prime), \ldots,  \bm{1}_{(p)} p_k(\bar{\eta},C^\prime) \right),
\end{align}
where $p_j(\bar{\eta},C^\prime) = \Pb\{j=d(\bar{\mu},C^\prime)\}$ and $\bm{1}_{(p)}$ is a $p$-dimensional vector of all ones.

Then, up to an \(o_{\Pb}(1/\sqrt{n})\) error, the solutions to the minimization problem in \eqref{eqn:eif-codebook-estimator} can be equivalently characterized as solutions to the following empirical moment condition:
\begin{align*}
    \Pn\left\{\upvarphi_{\widehat{C}}(Z;\widehat{\eta}_{-K}) \right\} = o_\Pb\left( \frac{1}{\sqrt{n}} \right).
\end{align*}

Analyzing the optimal solution is more delicate than analyzing the value, because the codebook depends on the data through Voronoi assignments, which can change discontinuously under small perturbations near cluster boundaries. In the classical k-means setting, asymptotic normality of the estimated codebook was first established by \citet{pollard1982central}. Extending that result to our causal clustering setting is substantially more challenging, since the risk functional \(\E\{\varphi_C(Z)\}\) depends nontrivially on infinite dimensional nuisance parameters. To handle these additional sources of instability, we first introduce the following technical assumption.
\begin{assumptionp}{A7}
\label{assumption:A7-hard-fitted-margin}
There exists a sequence $\rho_n > 0$ with $\rho_n^{-1}=o(n)$ such that, with probability tending to one, the following holds for each fold $b \in \{1,\ldots,K\}$: for every codebook $C$ on the line segment between $\widehat C$ and any leave-one-out perturbation of $\widehat C$ obtained by replacing a single observation in fold $b$, and for every $i$ with $B_i=b$,
\[
\operatorname{dist}\!\bigl(\widehat\mu_{-b}(X_i),\partial C\bigr)\ge \rho_n.
\]
Here $\partial C$ denotes the union of the Voronoi boundaries induced by $C$.
\end{assumptionp}

Assumption~\ref{assumption:A7-hard-fitted-margin} is a sample level separation condition requiring the fitted points $\widehat\mu_{-b}(X_i)$ to remain uniformly away from the Voronoi boundaries of codebooks on the segment between $\widehat C$ and its leave one out perturbations, by more than the perturbation scale $n^{-1}$. It is stronger than the population margin condition in Definition~\ref{def:margin-condition}, but more directly aligned with the local geometric stability needed in our proof. Similar separation conditions have appeared in clustering theory \citep{levrard2015nonasymptotic,le2018notion}. Although stronger, the assumption is still plausible in settings with clear cluster separation, where fitted points are unlikely to lie near Voronoi boundaries.

A simple example where Assumption~\ref{assumption:A7-hard-fitted-margin} holds is a well separated mixture type setting in which the fitted points cluster around distinct centers and remain away from the induced Voronoi boundaries with high probability. This is especially natural in designs with a hard geometric margin, where the fitted features are confined to separated regions of the feature space. More generally, the assumption still allows the fitted margin $\rho_n$ to shrink with $n$, provided it shrinks more slowly than the perturbation scale, that is, $\rho_n^{-1}=o(n)$.

We next introduce the following additional technical assumption.
\begin{assumptionp}{A8}
\label{assumption:A8-fitted-boundary-margin}
There exist constants $\kappa_{\mathrm{fit}} > 0$, $\beta > 0$, and $L < \infty$ such that, for each fold $b \in \{1,\ldots,K\}$, there is a random neighborhood $\mathcal N_{n,b}$ of $C^*$ satisfying the following with probability tending to one:

(i) $\mathcal N_{n,b}$ contains $\widehat C$ and all leave-one-out perturbations of $\widehat C$ arising from replacing a single observation in fold $b$; and

(ii) for every $0 < t \le \kappa_{\mathrm{fit}}$,
\[
\sup_{C \in \mathcal N_{n,b}}
\Pb\!\left\{
\operatorname{dist}\!\bigl(\widehat\mu_{-b}(X),\partial C\bigr)\le t
\,\middle|\,
\{Z_i:B_i\neq b\}
\right\}
\le
L t^\beta.
\]
Here $\partial C$ denotes the union of the Voronoi boundaries induced by $C$.
\end{assumptionp}

Assumption~\ref{assumption:A8-fitted-boundary-margin} requires that the fitted feature map $\widehat\mu_{-b}(X)$ place only limited probability mass near the Voronoi boundaries of codebooks close to $C^*$. In this sense, it is a fitted analogue of the population margin condition in Definition~\ref{def:margin-condition}, and is closely related in spirit to margin conditions used in classification theory \citep[e.g.,][]{audibert2007fast}. Condition~(i) is only a mild localization requirement, ensuring that $\widehat C$ and its leave-one-out perturbations remain in a neighborhood of $C^*$ with high probability.

Assumption~\ref{assumption:A8-fitted-boundary-margin} is satisfied, for example, if conditional on the training sample for fold $b$, the distribution of $\widehat\mu_{-b}(X)$ has a density that is uniformly bounded near the relevant Voronoi boundaries. In that case, the probability mass of a boundary strip of width $t$ is of order $t$, so condition~(ii) holds with $\beta=1$. A stronger case is a hard-margin regime, where $\widehat\mu_{-b}(X)$ remains a fixed positive distance away from those boundaries with high probability, in which case condition (ii) holds for any $\beta>0$.

In the next theorem, our first main result of this section, we compute an asymptotic bound for the excess risk, as well as the rate of convergence for $\widehat{C}$. The main technical challenge stems from the non-trivial coupling between the clustering procedure and the nuisance estimators, induced by the cross-fitting scheme.

\begin{theorem}
\label{thm:Chat-root-nCAN}
Suppose that $\Pb$ satisfies the population margin condition with some $\kappa>0$ and $\alpha>0$, and that Assumptions \ref{assumption:A1-boundedness}, \ref{assumption:A3-uniqueness-of-codebook}, \ref{assumption:A4-pihat-boundedness}, \ref{assumption:A5-np-consistency-condition}, \ref{assumption:A7-hard-fitted-margin}, and \ref{assumption:A8-fitted-boundary-margin} hold. Also assume that $\E\|\widehat\mu(X)\|_2^2<\infty$. Then, if $p_j(\eta,C^*)>0$ for all $j$,
\begin{align*}
\|\widehat C-C^*\|_1
=
O_\Pb\!\left(n^{-\min\{\beta,1\}/2}+R_{2,n}\right)
\, 
\text{ and } \,
R(\widehat C)-R(C^*)
=
o_\Pb\!\left(n^{-\min\{\beta,1\}/2}+R_{2,n}\right).
\end{align*}
In particular, if $\beta\ge 1$, then
\begin{align*}
\|\widehat C-C^*\|_1
=
O_\Pb\!\left(n^{-1/2}+R_{2,n}\right)
\, 
\text{ and } \,
R(\widehat C)-R(C^*)
=
o_\Pb\!\left(n^{-1/2}+R_{2,n}\right).
\end{align*}
\end{theorem}

Theorem \ref{thm:Chat-root-nCAN} shows that the proposed codebook estimator $\widehat{C}$ and the associated excess risk may attain substantially faster rates than its nuisance estimators $\widehat{\eta}$. Specifically if $R_{2,n} = O_\Pb\big(n^{-1/2}\big)$ (weaker than Assumption \ref{assumption:A6-np-remainder-condition}) and $\alpha \geq 1$, we can attain $\sqrt{n}$ rates for $\widehat{C}$ and faster-than-$\sqrt{n}$ rates for excess risk by virtue of the fact that $R_{2,n}$ involves products of nuisance estimation errors. \citet{levrard2015nonasymptotic, levrard2018quantization} provided some instances of the natural classifiers corresponding to the margin exponent $\alpha = 1$. Note that the condition $p_j(\eta,C^*) >0$ is equivalent to $\Pb\{V_i(C^*)\} > 0$, ensuring that there are no vacant Voronoi cells. This condition also guarantees that the derivative matrix $M(C^*, \eta)$ is nonsingular. 

The following corollary provides the second main result of this section by establishing conditions under which the codebook estimator $\widehat{C}$ is $\sqrt{n}$-consistent and asymptotically normal, building upon Theorem~\ref{thm:Chat-root-nCAN}.

\begin{corollary}
\label{thm:clt-new}
Suppose that $\Pb$ satisfies the population margin condition with some $\kappa > 0$ and $\alpha > 0$, and that Assumptions \ref{assumption:A1-boundedness}, \ref{assumption:A3-uniqueness-of-codebook}, \ref{assumption:A4-pihat-boundedness}, \ref{assumption:A5-np-consistency-condition}, \ref{assumption:A7-hard-fitted-margin}, and \ref{assumption:A8-fitted-boundary-margin} hold. Also assume that $\E\|\widehat\mu(X)\|_2^2<\infty$. Then, if $p_j(\eta,C^*) >0$ for all $j$ and $\beta \ge 1$,
\begin{align*}
\widehat{C} - C^*
=
-M(C^*, \eta)^{-1}(\Pn - \Pb)\left\{ \upvarphi_{C^*}(Z;\eta) \right\}
+
O_\Pb\!\left(R_{2,n} + o_\Pb\!\left( \frac{1}{\sqrt{n}} \right)\right).
\end{align*}
\end{corollary}

Corollary \ref{thm:clt-new} follows when the fitted boundary mass near the relevant Voronoi boundaries decays at least linearly, that is, when $\beta \ge 1$, while the original population margin condition enters only through the bias remainder term $R_{2,n}$. It implies that $\widehat C$ is not only $\sqrt n$ consistent but also asymptotically normal under Assumption \ref{assumption:A6-np-remainder-condition}, which may still hold for generic, flexibly estimated nuisances. This in turn enables inference for $C^*$, for example via asymptotically valid bootstrap confidence intervals under mild additional regularity conditions.

\begin{remark}\label{rmk:empty-margin-donsker}
An alternative route to Corollary~\ref{thm:clt-new}, mathematically cleaner but more restrictive, is to replace Assumptions~\ref{assumption:A7-hard-fitted-margin} and~\ref{assumption:A8-fitted-boundary-margin} by an \emph{empty-margin} condition, namely that for some $\kappa>0$, $\Pb\{\mu \in N_{C^*}(\kappa)\}=0$. This condition rules out any population mass in a neighborhood of the relevant Voronoi boundaries, so the Voronoi labels are locally stable around $C^*$ and the class $\{\upvarphi_C(\cdot;\bar\eta): C \in \mathcal C_k \text{ near } C^*\}$ behaves as a piecewise smooth parametric class for any fixed $\bar\eta$, yielding a Donsker type empirical process bound; see Remark~\ref{app:rmk:empty-margin-donsker} in Web Appendix~\ref{appendix-proofs}. We do not pursue this route here, since the empty-margin condition is quite stringent and offers little practical advantage over our separate stability and boundary-mass assumptions in realistic settings.\\
\end{remark}

\section{Illustration}

\subsection{Simulation study}

We assess finite–sample performance in a simplified six-cluster design. Let \(C^{*}=\{c_1,\dots,c_6\}\subset[-6,6]^2\) be the vertices of a wide regular hexagon; the minimum pairwise separation is \(6\), so the distance from any center to its closest Voronoi boundary is \(3\). We draw covariates \(X=(X_1,\ldots,X_6)\stackrel{\text{i.i.d.}}{\sim}\mathrm{Unif}[-1,1]\) and define the sector index \(S\in\{1,\ldots,6\}\) by the polar angle \(\theta=\operatorname{atan2}(X_2,X_1)\) using breakpoints \(-\pi,-\tfrac{2\pi}{3},-\tfrac{\pi}{3},0,\tfrac{\pi}{3},\tfrac{2\pi}{3},\pi\). To keep points tightly concentrated around their sector center, introduce small `jitters' \(j_0(X)=\delta\,[\sin(\pi X_3)+0.5\cos(\pi X_4)]\) and \(j_1(X)=\delta\,[\cos(\pi X_5)+0.5\sin(\pi X_6)]\) with \(\delta=0.01\), and set \(\mu_0(X)=c_{S,1}+j_0(X)\) and \(\mu_1(X)=c_{S,2}+j_1(X)\). Thus \(\mu(X)=(\mu_0(X),\mu_1(X))\) lies within radius \(\delta\) of \(c_S\), yielding a hard margin of at least \(3-\delta\). In particular, this design automatically satisfies the margin condition in Definition \ref{def:margin-condition}.

The treatment mechanism is \(\pi_1(X)=\mathrm{expit}(-0.2+0.6X_1-0.25X_2+0.2X_3)\in[0.05,0.95]\); we draw \(A\mid X\sim\mathrm{Bernoulli}(\pi_1(X))\) and generate outcomes \(Y=\mu_A(X)+\varepsilon\) with \(\varepsilon\sim\mathcal N(0,\sigma^2)\) and \(\sigma=0.15\). Under this design, the term $\Vert \widehat{\mu}_a - \mu_a \Vert_{\infty}^{\alpha+1}$ is negligible. Because \(\Pi_{C^{*}}(\mu(X))=c_S\) almost surely, the oracle risk equals \(R(C^{*})=\mathbb{E}[\,\|\mu(X)-\Pi_{C^{*}}(\mu(X))\|_2^2\,]=\mathbb{E}[\,j_0(X)^2+j_1(X)^2\,]=\delta^{2}\!\left(\tfrac{1}{2}+\tfrac{1}{8}+\tfrac{1}{2}+\tfrac{1}{8}\right)=\tfrac{5}{4}\,\delta^{2}\).

The nuisance functions are estimated using 5-fold cross-fitting: in each fold, we fit deliberately misspecified outcome regressions $\hat{\mu}_a(x)$ using simple linear models based only on $(X_1,X_2)$, while the propensity score $\hat{\pi}_a(x)$ is estimated via a correctly specified logistic regression. This configuration preserves the validity of the EIF scores despite outcome-model misspecification. To estimate $\hat C$, we apply the same gradient–descent algorithm for both the semiparametric and plug-in approaches.

The results in Figure~\ref{fig:excess-risk-codebook-error} closely align with the theoretical findings established in Sections~\ref{sec:plug-in-estimator} and~\ref{sec:efficient-causal-clustering}. 
The semiparametric estimator (Semi/EIF) attains uniformly lower excess risk than the plug–in approach, where the noticeably steeper slope on the log–log scale indicates a faster convergence rate. 
The semiparametric estimator consistently achieves more accurate recovery of the true codebook as well, with a pronounced reduction in the per–center \(L_{1}\) error as \(n\) increases.

\begin{figure}[!ht]
\centering
\begin{minipage}[t]{0.49\textwidth}
    \centering
    \includegraphics[width=\linewidth]{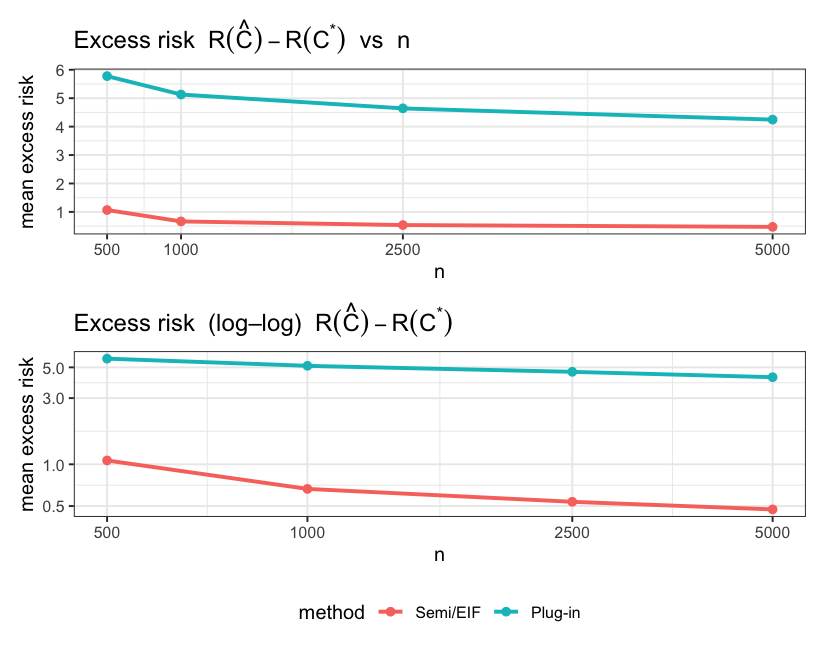}    
\end{minipage}\hfill%
\begin{minipage}[t]{0.49\textwidth}
    \centering
    \includegraphics[width=\linewidth]{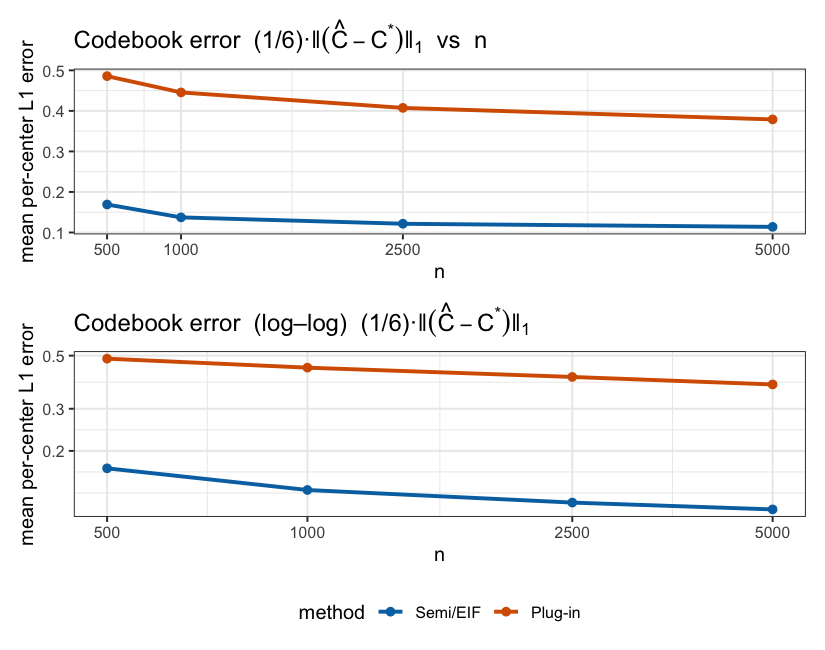}    
\end{minipage}
\caption{(Left) Excess risk $R(\hat C)-R(C^\star)$ across sample sizes. (Right) Codebook error $(1/6)\,\|\hat C - C^\star\|_{1}$ across sample sizes.}
\label{fig:excess-risk-codebook-error}
\end{figure}


\subsection{Case Study: PROPEL Chronic Low Back Pain Trial}

We illustrate the causal clustering approach using data from the Problem-Solving Pain to Enhance Living Well (PROPEL) clinical trial, which was designed to evaluate the effectiveness of mobile-supported self-management interventions for patients with chronic low back pain \citep{hong2022feasibility, kim2022propelcris}. The original dataset consists of linked adult medical records and survey responses from participants assigned to one of three 12 week active intervention arms: (1) weekly yoga sessions, (2) weekly yoga plus nurse led self management counseling, or (3) weekly yoga plus counseling supplemented with digital self management support through the PROPEL mobile application. Our analysis focuses on revealing treatment effect heterogeneity by examining how these interventions differentially affect end-of-study pain intensity across individuals. We define treatment by exposure duration, more than 8 weeks versus fewer than 4 weeks, thereby collapsing the data into a single timepoint setting and ignoring time varying effects. We use ten pre treatment covariates spanning demographic, behavioral, physiological, and genetic factors. For methodological illustration, we estimate the joint distribution of the observed data and resample from it to construct a balanced analysis set with about 200 individuals per treatment arm.

We implement the proposed semiparametric estimator with $K=2$ sample splits. 
Nuisance functions are estimated using a cross-validated Super Learner ensemble 
\citep{van2007super}, combining regression splines, support vector machine 
regression, and random forests. The elbow method suggests that $k=6$ is a 
reasonable choice for the number of clusters. Figure~\ref{fig:application}(a) 
plots the six clusters in the counterfactual mean space, revealing clear 
patterns of treatment-effect heterogeneity. Figure~\ref{fig:application}(b) 
shows density plots of the pairwise CATE estimates $\widehat{\tau}_{2,1}$ 
and $\widehat{\tau}_{3,1}$ across clusters, and Figure~\ref{fig:application}(c) 
presents a heatmap of standardized cluster-level baseline covariate means. 
Together, these visualizations illustrate how units in each cluster exhibit distinct responses to the active regimens, while simultaneously revealing the characteristic covariate profiles that represent each cluster.

For example, clusters C5 and C2, characterized by older individuals with relatively low BMI and low smoking prevalence, exhibit positive CATEs for both $\widehat{\tau}_{2,1}$ and $\widehat{\tau}_{3,1}$. This suggests that these groups benefit most from both active interventions, especially treatment~(3), which additionally includes counseling and mobile application support. In contrast, cluster C6, characterized by high BMI 
and the highest smoking rates, shows strongly negative pairwise CATE values, 
suggesting that metabolically and behaviorally high-risk patients may not 
tolerate or respond well to either active treatment. This case study illustrates the value of the proposed framework for systematically characterizing treatment effect heterogeneity. Further findings and supporting analyses, including detailed cluster-level effect heterogeneity and covariate profile summaries, are provided in Web Appendix~\ref{sec:app-case-study}.
\\

\begin{figure}[t!]
\centering
\subfigure[]{\includegraphics[width=0.32\textwidth]{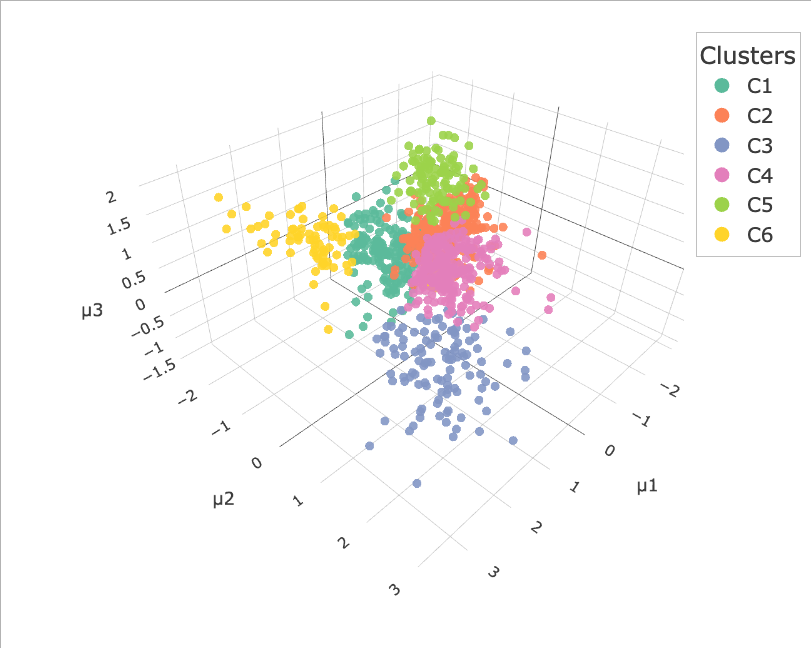}}
\hfill
\subfigure[]{\includegraphics[width=0.321\textwidth]{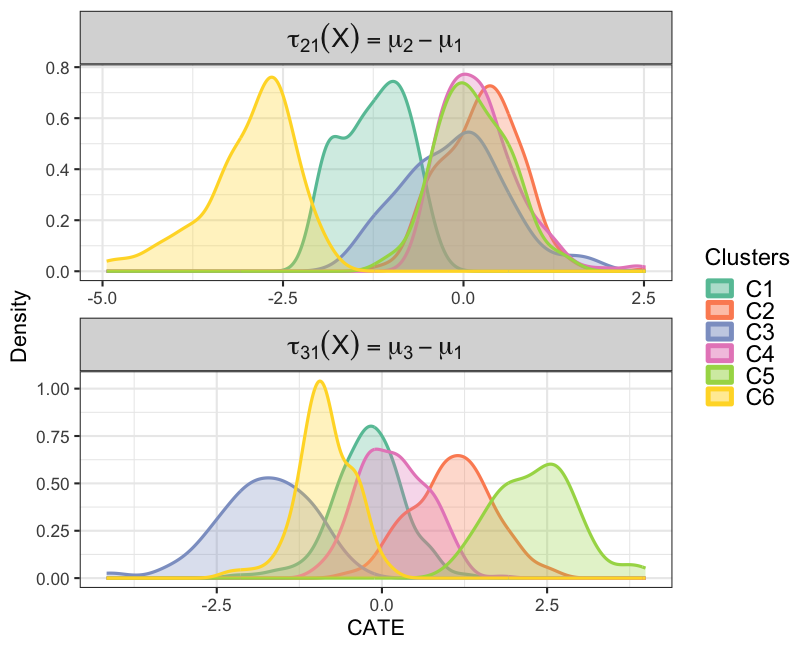}}
\hfill
\subfigure[]{\includegraphics[width=0.335\textwidth]{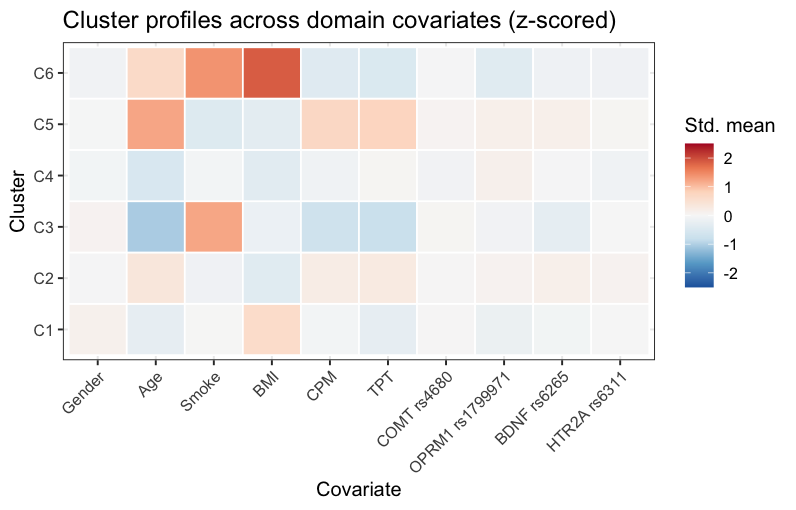}}
\caption{Visualization of estimated causal clusters. 
Panel~(a) displays the six clusters C1–C6 in the three–dimensional counterfactual mean space $(\mu_1,\mu_2,\mu_3)$. 
Panel~(b) shows kernel density estimates of the pairwise CATEs $\widehat{\tau}_{21}(X)=\widehat{\mu}_2(X)-\widehat{\mu}_1(X)$ (top) and $\widehat{\tau}_{31}(X)=\widehat{\mu}_3(X)-\widehat{\mu}_1(X)$ (bottom) for each cluster, illustrating systematic differences in treatment contrasts. 
Panel~(c) presents a heatmap of standardized cluster–level means for the baseline covariates, summarizing how each causal cluster aligns with background risk factors.}
\label{fig:application}
\end{figure}

\section{Discussion}
\label{sec:discussion}

In this paper, we propose a new framework for analyzing treatment effect heterogeneity by leveraging tools in cluster analysis. We provide flexible nonparametric estimators for a wide class of models. The proposed methods are easily implemented with off-the-shelf algorithms, and enable the discovery of subgroup structures in studies with multiple treatments or outcomes. In particular, the plug in estimator extends naturally to clustering problems based on general regression functions, whereas extension of the semiparametric estimator requires additional problem specific efficiency analysis. More broadly, the framework accommodates generic pseudo-outcomes, including settings with partially observed outcomes or unknown counterfactual functions.

Our findings open up a plethora of intriguing opportunities for future work. In an upcoming companion paper, we consider kernel-based undersmoothing approaches for causal k-means clustering, which do not require the margin condition. Much more work is required to expand causal clustering to other widely-used clustering algorithms, such as density-based clustering and hierarchical clustering; each rests on different assumptions and requires its own analysis. Another important direction is to integrate this framework with prescriptive methods by estimating soft cluster memberships, e.g., estimates of $\Pb(S_j = 1 \mid X)$, where $S_j = \mathbbm{1}(X \in R_j)$, via kernel or mixture-based density estimation. This enables assigning new individuals to clusters probabilistically based on their observed covariates and, in turn, recommending the treatment associated with the most likely cluster, thereby facilitating the construction of individualized optimal treatment regimes. Other settings involving, for example, time-varying treatments, instrumental variables, or mediation would also be promising directions for future research.
\\



\section*{Acknowledgements}
\label{sec:acknowledgements}
This work was supported by the National Research Foundation grant funded by the Korean government (MSIT) (Nos. RS-2022-NR068754, RS-2024-00335008, and RS-2025-24534596), and by the Samsung Science and Technology Foundation under Project Number SSTF-BA2502-01. The work was also supported by the National Library of Medicine, \#1R01LM013361-01A1 and NSF CAREER Award 2047444.\\

\section*{Data and code availability}
R source code is publicly available at \url{https://github.com/kwangho-joshua-kim/causal-k-means} and reproduces all simulation results and the case study analysis. Because the PROPEL dataset cannot be shared, we provide a fully synthetic dataset designed to closely mirror the original data structure.

\pagebreak

\bibliographystyle{agsm}
\bibliography{reference}


\pagebreak
\setcounter{page}{1}
\appendix
\newpage
\begin{center}
    {\LARGE Web Appendix\par}
    \vspace{0.4em}
    {\large for\par}
    \vspace{0.4em}
    {\Large Causal K-Means Clustering\par}
    \vspace{0.5em}
    {\normalsize Kwangho Kim, Jisu Kim, and Edward H Kennedy\par}
    \vspace{0.8em}
    \rule{0.6\textwidth}{0.4pt}
\end{center}
\vspace{1em}

\setcounter{equation}{0}
\renewcommand{\theequation}{A.\arabic{equation}}
\setcounter{figure}{0}
\renewcommand{\thefigure}{A.\arabic{figure}}
\setcounter{theorem}{0}
\renewcommand{\thetheorem}{A.\arabic{theorem}}
\setcounter{remark}{0}
\renewcommand{\theremark}{A.\arabic{remark}}

\section{Expanded Case Study Findings} \label{sec:app-case-study}

Because the original PROPEL data cannot be shared for privacy reasons, we work with a semi synthetic dataset constructed to closely mirror its structure. Although the original dataset contains hundreds of recorded covariates, we focus here on 10 baseline variables that capture representative demographic, behavioral, physiological, and genetic characteristics. We estimate their joint distribution using kernel density estimation and then generate synthetic covariate vectors by resampling from the fitted distribution. The remaining variables are subsequently generated from these synthetic covariates so that the resulting dataset preserves the main dependence patterns and scale of the original study while remaining fully artificial at the individual level. This study was conducted in accordance with Korea University Institutional Review Board requirements\footnote{https://irb.korea.ac.kr/}. In particular, only data collected under approved consent and protocol conditions were included in the analysis, and genetic information from participants who did not agree to its reuse was excluded prior to analysis and data generation.

\textbf{Choosing $k$.} To determine the number of causal clusters, we evaluated the geometry of the estimated counterfactual mean vectors $\hat{\mu} = (\hat{\mu}_1,\hat{\mu}_2,\hat{\mu}_3)$ and the estimated centers $\widehat{C}$, using the total within-cluster sum of squares (WCSS) and the incremental gains in fit as $k$ increases. As shown in Figure~\ref{fig:elbow-gain}, the WCSS curve computed from the $\hat{\mu}$ features exhibits a clear elbow at $k=6$, and the relative-gain plot indicates that the improvement from $k=5$ to $k=6$ is the final substantial increase before additional clusters provide only marginal benefits. Taken together, these diagnostics suggest that $k=6$ provides a parsimonious yet sufficiently expressive representation of the heterogeneity in the estimated counterfactual response surfaces. Nonetheless, we acknowledge that this choice of $k$ is based on empirical heuristics rather than formal theory, and that principled methods for selecting the number of causal clusters remain an open direction for future research.

\begin{figure}[h!]
\centering
\subfigure[]{\includegraphics[width=0.48\textwidth]{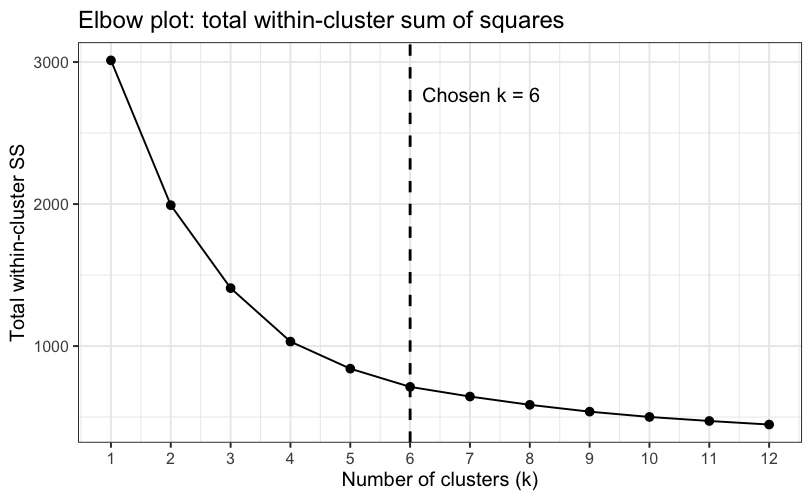}}
\hfill
\subfigure[]{\includegraphics[width=0.48\textwidth]{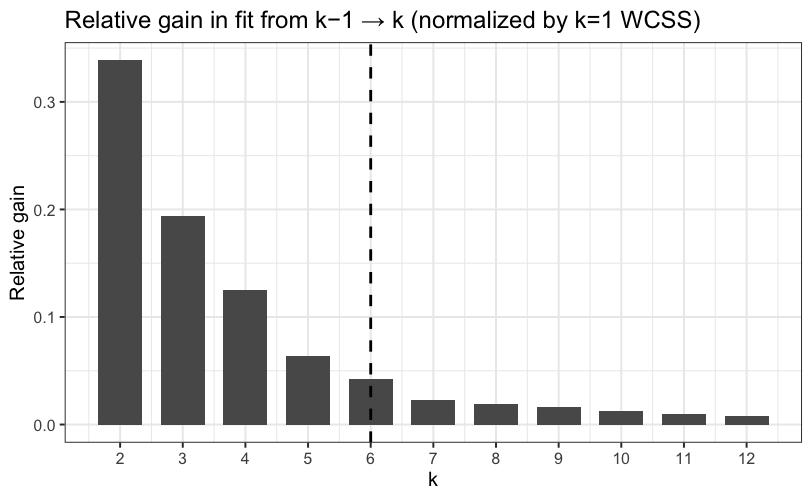}}
\caption{
(a) Elbow plot of total within-cluster sum of squares (WCSS) computed from the estimated counterfactual mean vectors $\hat{\mu}$, showing a pronounced flattening after $k=6$. 
(b) Relative gain in fit from $k{-}1$ to $k$, normalized by the $k=1$ WCSS, indicating that the improvement at $k=6$ is the last meaningful increase prior to diminishing returns. 
Both diagnostics jointly support selecting $k=6$ as an appropriate number of causal clusters.}
\label{fig:elbow-gain}
\end{figure}

\textbf{Cluster‐level effect heterogeneity and covariate profiles.}
In this analysis, we select ten baseline covariates spanning demographic, behavioral, physiological, and genetic domains. Demographic and behavioral factors include gender, age, smoking status, and BMI. Physiological measures consist of Conditioned Pain Modulation (CPM), a summary of endogenous pain-inhibition capacity (higher values indicate more effective modulation), and Thermal Pain Thresholds (TPT), the cold temperature at which pain is first perceived. Genetic markers include COMT rs4680, OPRM1 rs1799971, BDNF rs6265, and HTR2A rs6311, where greater mutant-allele load is generally associated with reduced neurotransmitter or receptor function and heightened pain sensitivity.

Figure~\ref{fig:application}(a) displays the six clusters obtained by applying
$k$-means to the estimated counterfactual mean vectors
$(\hat\mu_1,\hat\mu_2,\hat\mu_3)$, and Figure~\ref{fig:application}(b) presents
the empirical densities of the pairwise CATEs
$\widehat{\tau}_{2,1}(X)=\hat\mu_2-\hat\mu_1$ and
$\widehat{\tau}_{3,1}(X)=\hat\mu_3-\hat\mu_1$ within each cluster.  
Taken together, these plots reveal pronounced and interpretable patterns of
treatment-effect heterogeneity across the PROPEL population.

\begin{figure}[h!]
\centering
\subfigure[]{\includegraphics[width=0.48\textwidth]{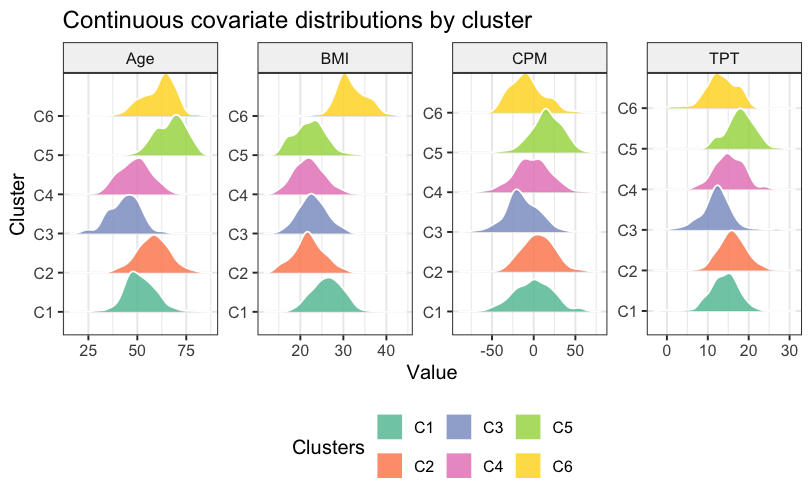}}
\hfill
\subfigure[]{\includegraphics[width=0.48\textwidth]{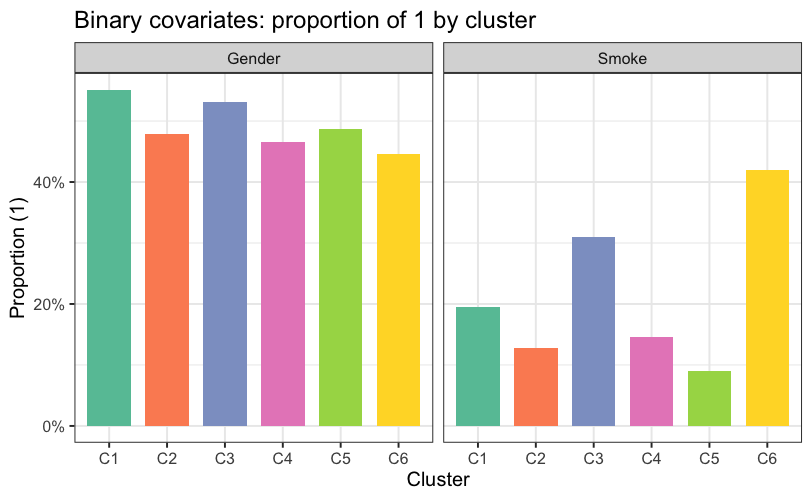}}
\caption{(a) Ridgeline plots of continuous covariates by cluster.
(b) Proportions of binary covariates (Gender and Smoke) across clusters.}
\label{fig:ridge-bar}
\end{figure}

\begin{figure}[h!]
\centering
\subfigure[]{\includegraphics[width=0.48\textwidth]{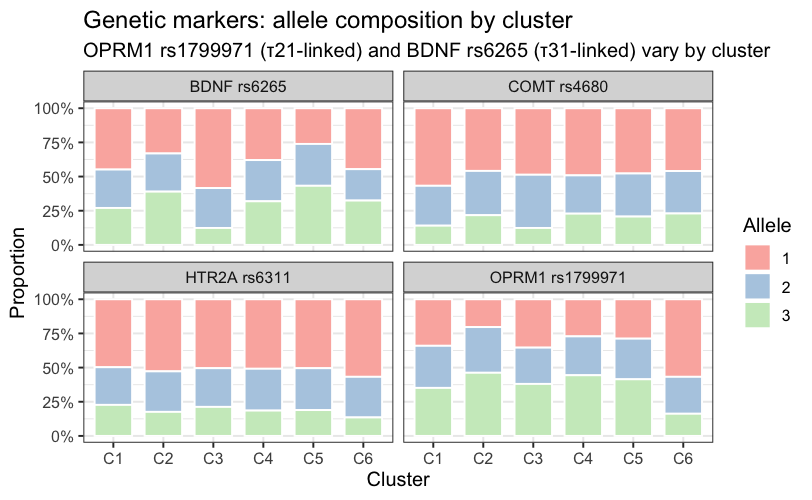}}
\hfill
\subfigure[]{\includegraphics[width=0.48\textwidth]{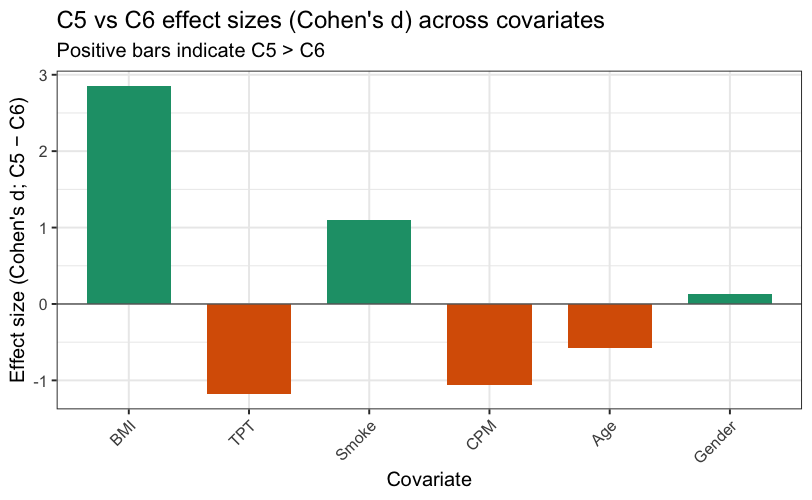}}
\caption{(a) Cluster-specific allele compositions for four genetic markers.
(b) Standardized effect sizes (Cohen's $d$) comparing clusters C5 and C6.}
\label{fig:snp-effectsize}
\end{figure}

\emph{Benefit clusters (C2 and C5).}
Clusters C2 and~C5 consistently exhibit positive treatment effects.  
C2 shows moderately positive effects for both treatments~2 and~3 versus~1,
while C5 stands out as the group with the largest positive
$\widehat{\tau}_{3,1}(X)$ among all clusters.  
These favorable responses align closely with their baseline profiles.
Figure~\ref{fig:ridge-bar}(a) shows that both clusters contain some of the
oldest participants, and Figure~\ref{fig:ridge-bar}(b) confirms that they also
exhibit among the lowest smoking prevalence.  
Cluster~C5 additionally displays lower BMI and more favorable pain sensitivity
patterns (higher CPM and TPT), indicating a physiologically resilient
subpopulation.  
The SNP distributions in Figure~\ref{fig:snp-effectsize}(a) further reveal that
C5 is enriched with potentially advantageous alleles (e.g., certain OPRM1 and
BDNF variants), providing additional biological plausibility for their more
favorable response patterns.

\emph{Harm or nonresponse clusters (C6 and C3).}
Cluster~C6 is a clear ``harm'' or ``nonresponder'' group, with both
$\widehat{\tau}_{2,1}$ and $\widehat{\tau}_{3,1}$ densities shifted far to the
left in Figure~\ref{fig:application}(b).  
This aligns with its adverse baseline profile: C6 contains the youngest
participants, has the highest smoking rate (Figure~\ref{fig:ridge-bar}(b)),
and also shows the highest BMI and the lowest CPM values, reflecting a
higher metabolic and inflammatory burden.  
Cluster~C3 shows somewhat less extreme but still unfavorable patterns,
including younger age and elevated smoking prevalence, which correspond to its
negative effects for treatment~3 versus~1 in particular.

\emph{Near-average clusters (C1 and C4).}
Clusters C1 and C4 exhibit mild or near-zero CATEs for both contrasts.
Their covariate distributions in Figure~\ref{fig:ridge-bar}(a)-(b) show that
they occupy ``middle-of-the-road'' ranges for age, BMI, CPM, and smoking, and
their SNP patterns in Figure~\ref{fig:snp-effectsize}(a) closely resemble the
overall study population.  
These observations are consistent with their more modest causal effects.

To identify which baseline characteristics most strongly distinguish the two clusters with the most extreme treatment responses to the mobile-application intervention, Figure~\ref{fig:snp-effectsize}(b) summarizes standardized effect sizes (Cohen’s $d$) comparing C5 and C6.
Age, BMI, CPM, and smoking status produce the largest contrasts, reinforcing
the interpretation that a combination of demographic, behavioral, and
physiological characteristics drives substantial treatment effect
heterogeneity.

Finally, Figure~\ref{fig:app-benefitproxy} displays the distribution of the constructed effect-modification score
\[
\text{BenefitProxy}
=0.70\!\left(\frac{X_{\mathrm{Age}}-40}{20}\right)^{+}
-0.35\,\text{Smoke}
-0.30\!\left(\frac{X_{\mathrm{BMI}}-27}{8}\right)^{+},
\]
which provides a simple, interpretable summary of baseline factors that qualitatively align with the observed CATE patterns. The structure of the proxy is motivated by the diagnostics: age is centered at 40 and scaled by 20 so that only meaningfully older individuals contribute positively (reflecting the strong age gradient between high- and low-response clusters); smoking receives a moderate penalty consistent with its sharp separation across clusters; and BMI enters through a positive-part transformation centered at 27 to capture the adverse effect of elevated adiposity. The weights $(0.70,\, -0.35,\, -0.30)$ are chosen to roughly reflect the relative magnitudes seen in the empirical covariate contrasts. As shown in Figure~\ref{fig:app-benefitproxy}, the resulting score ranks clusters in a manner consistent with the CATE estimates, linking observed characteristics to the discovered heterogeneity structure. The BenefitProxy distributions reveal
substantial differences in within-cluster heterogeneity: high-benefit groups
(C2, C5) exhibit tightly concentrated scores, whereas lower-benefit groups
(C3, C6) display broader, more dispersed distributions.  
This separation highlights the stability of the discovered clusters and shows
that the risk profiles underlying treatment benefit are not only distinct
across clusters but also internally coherent.

\begin{figure}[h!]
\centering
\includegraphics[width=0.6\textwidth]{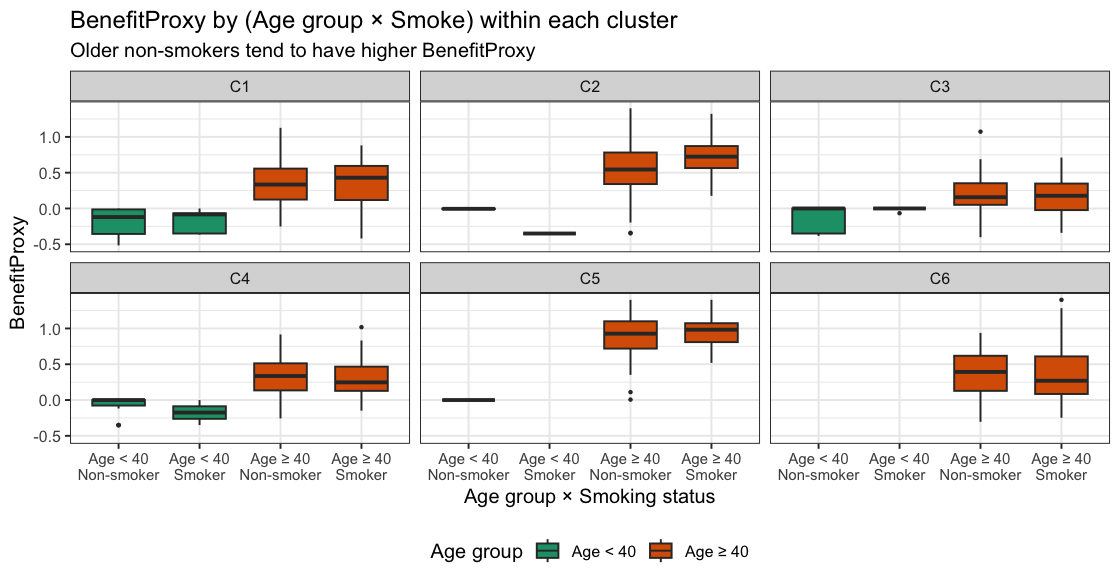}
\caption{Distribution of the constructed \textit{BenefitProxy}
effect-modification score across the six clusters.}
\label{fig:app-benefitproxy}
\end{figure}

\textbf{Limitations and future work Plan.} 
The original dataset includes multiple longitudinal timepoints and exhibits non-compliance with the assigned interventions. It also contains thousands of additional variables, including survey responses, medical history, and extended genetic information, which are not yet accessible due to the lengthy approval process. In a future application paper, we plan to generalize our framework to handle time-varying treatment effects and apply it to this fully expanded dataset once access is granted.

\section{Proofs} \label{appendix-proofs}

\textbf{Notation Guide.}
Hereafter, we let $\Vert f \Vert$ denote the $L_{2}(\Pb)$-norm in order to simplify notation and avoid any confusion with the Euclidean norm $\Vert \cdot \Vert_2$, as the $L_{2}(\Pb)$-norm is used most frequently in the proofs. For simplicity, we drop the dependence on $Z$ if the context is clear. Also, for any fixed $C$, we let $f_{C}(x) = \Vert x-\Pi_{C}(x)\Vert_2^2$ for $x \in \R^p$ so that $R(C) = \E\{f_{C}(\mu)\}$, and let 
\begin{align*}
        f_{c_j}(\mu) &= \Vert \mu - c_j \Vert_2^2, \\
        \varphi_{c_j}(\eta) &= \sum_{a} \left\{\varphi_{2,a}(\eta) - 2\varphi_{1,a}(\eta)c_{ja} + c_{ja}^2 \right\}, \quad j=1,\ldots, k.
\end{align*}
Further, we let $\zeta_j(\mu;C) = \underset{i\neq j}{\min}  \Vert \mu - c_i \Vert_2 - \Vert \mu - c_j \Vert_2$ so that $\Pb\left\{ \vert \zeta_j(\mu;C^*) \vert \leq t \bigm\vert 0 \leq t \leq \kappa \right\} \lesssim t^\alpha$ under the margin condition for any $\alpha >0$, $\kappa >0$. With a slight abuse of notation, we write $\Vert \widehat{C} - C^* \Vert_{1} = \sum_{j=1}^k \Vert \widehat{c}_j - c^*_j \Vert_{1}$. 
\vspace*{.2in}

\subsection{Proof of Theorem \ref{thm:k-means}}
\label{proof:k-means}
Before proceeding to the proof of Theorem~\ref{thm:k-means}, we present a sequence of supporting lemmas. The first lemma refines \citet[][Lemma 1]{kennedy2018sharp}, providing a slightly stronger statement.

\begin{lemma}\label{lem:indicator-bound}
Let the functions $\hat f,\,f$ take any real values. Then
\begin{align*}
\bigl| \mathbbm 1\{\hat f>0\}-\mathbbm 1\{f>0\} \bigr|
        &\;\le\;
          \mathbbm 1\!\bigl(\lvert f\rvert\le \lvert \hat f-f\rvert \,\&\, f>0\bigr)
          +\mathbbm 1\!\bigl(\lvert \hat f\rvert\le \lvert \hat f-f\rvert \,\&\, \hat f>0\bigr).
\end{align*}
\end{lemma}

\begin{proof}
We begin by noting that the left-hand side is either $0$ or $1$, and is nonzero if and only if $\hat{f}$ and $f$ lie on opposite sides of $0$:
\[
\left| \mathbbm{1}\{\hat{f} > 0\} - \mathbbm{1}\{f > 0\} \right| = 1 
\iff 
\operatorname{sign}(\hat{f}) \neq \operatorname{sign}(f).
\]
Thus it suffices to analyze the two cases in which the signs differ.

\noindent{Case 1.} $\hat{f} > 0$ and $f \le 0$. Then $|\hat{f}| = \hat{f}$ and $|f| = -f$. Therefore,
\[
|\hat{f}| + |f| = \hat{f} - f = |\hat{f} - f| \quad \Rightarrow \quad \max\{ |\hat{f}|, |f| \} \leq |\hat{f} - f|.
\]
In particular, $|\hat{f}| \leq |\hat{f} - f|$, and since $\hat{f} > 0$, the second indicator on the RHS evaluates to $1$.

\noindent{Case 2.} $f > 0$ and $\hat{f} \le 0$. Then the same argument holds, and again,
$
\max\{ |\hat{f}|, |f| \} \leq |\hat{f} - f|.
$
In particular, $|f| \leq |\hat{f} - f|$, and since $f > 0$, the first indicator on the RHS evaluates to $1$.

Hence, in either case, one of the two indicators on the right-hand side is $1$, ensuring the inequality holds.
\end{proof}

The following lemma establishes that the projection error arising from perturbations in $\mu$ can be controlled under the margin condition.

\begin{lemma}\label{lem:Pi-bias-alpha} Suppose that Assumption \ref{assumption:A1-boundedness} holds, and $\Pb$ satisfies the margin condition with some $\kappa>0$, $\alpha>0$. Then we have
    \begin{align*}
        \Pb\left\vert \Pi_{C^*,a}(\widehat{\mu}) - \Pi_{C^*,a}(\mu) \right\vert &\lesssim \max_j\left\Vert \mathbbm{1}\left\{\zeta_j(\widehat{\mu};C^*) > 0 \right\} - \mathbbm{1}\left\{\zeta_j(\mu;C^*) > 0 \right\} \right\Vert_{\infty} \sum_a\Vert \widehat{\mu}_a - \mu_a \Vert_{\Pb,1}\\
        & \quad + \sum_a \Vert \widehat{\mu}_a - \mu_a \Vert_{\infty}^\alpha,    
    \end{align*}
where $\Pi_{C^*,a}(\cdot)$ denotes the $a$-th coordinate of $\Pi_{C^*}(\cdot)$.   
\end{lemma}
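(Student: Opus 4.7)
The plan is to express $\Pi_{C^*,a}$ as a sum over Voronoi-cell indicators and then exploit the identity $\sum_j D_j = 0$, where $D_j := \mathbbm{1}\{\zeta_j(\widehat{\mu};C^*) > 0\} - \mathbbm{1}\{\zeta_j(\mu);C^*) > 0\}$ captures how cell-membership differs between $\widehat{\mu}$ and $\mu$. Writing $\Pi_{C^*,a}(x) = \sum_{j=1}^k c^*_{j,a} \mathbbm{1}\{\zeta_j(x;C^*) > 0\}$ (valid $\Pb$-a.s.\ since Definition~\ref{def:margin-condition} makes $\partial C^*$ a $\Pb$-null set), we obtain $\Pi_{C^*,a}(\widehat{\mu}) - \Pi_{C^*,a}(\mu) = \sum_j c^*_{j,a} D_j$. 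Because the two indicator sums each equal $1$, $\sum_j D_j = 0$, so this equals $\sum_j (c^*_{j,a} - \mu_a) D_j$. I would then apply $|c^*_{j,a} - \mu_a| \leq |c^*_{j,a} - \widehat{\mu}_a| + |\widehat{\mu}_a - \mu_a|$ together with the facts that at most two $D_j$'s are nonzero, so $\sum_j |D_j| \leq 2\max_j |D_j|$, and $|c^*_{j,a} - \widehat{\mu}_a| \leq 2B$ by Assumption~\ref{assumption:A1-boundedness}, yielding
\[
|\Pi_{C^*,a}(\widehat{\mu}) - \Pi_{C^*,a}(\mu)| \;\leq\; 2B \sum_j |D_j| \;+\; 2|\widehat{\mu}_a - \mu_a|\, \max_j |D_j|.
\]

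Taking $\Pb|\cdot|$ of the second term produces the first summand of the lemma: since $|D_j(X)| \leq \|D_j\|_\infty$ $\Pb$-a.s., one has $\Pb[|\widehat{\mu}_a - \mu_a| \max_j |D_j|] \leq \max_j \|D_j\|_\infty \cdot \|\widehat{\mu}_a - \mu_a\|_{\Pb,1}$, which I would then loosen to $\max_j \|D_j\|_\infty \cdot \sum_a \|\widehat{\mu}_a - \mu_a\|_{\Pb,1}$ since the missing summands are nonnegative.

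The main obstacle is controlling the first piece $2B\,\Pb(\sum_j |D_j|)$, which requires the margin condition. The key perturbation estimate is $|\zeta_j(\widehat{\mu};C^*) - \zeta_j(\mu;C^*)| \leq 2\|\widehat{\mu}-\mu\|_2$, which follows from applying the triangle inequality separately to $\|\cdot - c^*_j\|_2$ and to $\min_{i \neq j}\|\cdot - c^*_i\|_2$. A sign change in $\zeta_j(\cdot;C^*)$ therefore forces $|\zeta_{j^*(\mu)}(\mu;C^*)| \leq 2\|\widehat{\mu}-\mu\|_2$, so $\{D_j \neq 0 \text{ for some } j\} \subseteq \{\mu \in N_{C^*}(2\|\widehat{\mu}-\mu\|_2)\}$. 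Since $\|\widehat{\mu}(X)-\mu(X)\|_2 \leq \sqrt{p}\max_a\|\widehat{\mu}_a - \mu_a\|_\infty$ a.s., invoking Definition~\ref{def:margin-condition} (and a trivial bound when $2\sqrt{p}\max_a\|\widehat{\mu}_a - \mu_a\|_\infty > \kappa$, in which case $\max_a\|\widehat{\mu}_a - \mu_a\|_\infty^\alpha$ is already bounded below by a positive constant) gives $\Pb(\exists j:D_j \neq 0) \lesssim \max_a\|\widehat{\mu}_a - \mu_a\|_\infty^\alpha \leq \sum_a\|\widehat{\mu}_a - \mu_a\|_\infty^\alpha$. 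Combining with the preceding paragraph yields the stated bound; the crux is the perturbation step, which translates cell-membership disagreement into a boundary-proximity event where Definition~\ref{def:margin-condition} can be applied.
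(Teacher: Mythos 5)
Your proof is correct, and while it uses the same basic ingredients as the paper (the indicator representation of $\Pi_{C^*,a}$, Lipschitz perturbation of $\zeta_j$, and the margin condition), it organizes them along a genuinely different decomposition. The paper splits each term pointwise on the event $\{\vert\widehat{\zeta}_j-\zeta_j\vert\leq\kappa\}$ versus its complement: conditioning plus the margin condition on the first event yields the $\sum_a\Vert\widehat{\mu}_a-\mu_a\Vert_\infty^\alpha$ term, while H\"older plus Markov on the second event produce the $\max_j\Vert\cdot\Vert_\infty\sum_a\Vert\widehat{\mu}_a-\mu_a\Vert_{\Pb,1}$ term (with a $1/\kappa$ absorbed into $\lesssim$). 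You instead reduce to the event that the nearest center changes, show via the $2$-Lipschitz bound that this event is contained in $\{\mu\in N_{C^*}(2\sqrt{p}\max_a\Vert\widehat{\mu}_a-\mu_a\Vert_\infty)\}$, and apply the margin condition at that global radius, falling back on the trivial bound $\Pb(\cdot)\leq 1\lesssim\max_a\Vert\widehat{\mu}_a-\mu_a\Vert_\infty^\alpha$ when the radius exceeds $\kappa$. Two consequences are worth noting. First, your argument in fact proves the stronger statement that the left-hand side is $\lesssim\sum_a\Vert\widehat{\mu}_a-\mu_a\Vert_\infty^\alpha$ alone: since $\vert c^*_{j,a}\vert\leq B$, the piece $2B\,\Pb\bigl(\sum_j\vert D_j\vert\bigr)$ already dominates everything, so the centering step $\sum_j c^*_{j,a}D_j=\sum_j(c^*_{j,a}-\mu_a)D_j$ and the resulting $\Vert\cdot\Vert_{\Pb,1}$ term are superfluous --- they merely reproduce a term the lemma happens to allow. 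Second, the price of your global treatment of the large-perturbation case is a constant of order $(\sqrt{p}/\kappa)^\alpha$ hidden in $\lesssim$, versus the paper's $1/\kappa$ against the $L_1(\Pb)$ error; since the paper's constant in this lemma is also $\kappa$-dependent, nothing is lost. The only loose end --- that the representation $\Pi_{C^*,a}(\widehat{\mu})=\sum_j c^*_{j,a}\mathbbm{1}\{\zeta_j(\widehat{\mu};C^*)>0\}$ and the identity $\sum_j D_j=0$ require $\widehat{\mu}(X)$, not just $\mu(X)$, to have an almost-surely unique nearest center, which the margin condition does not literally guarantee --- is shared by the paper's own proof, so it is not a gap specific to your argument.
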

\begin{proof}      
Recall that $\zeta_j(\bar{\mu};C) = \underset{i \neq j}{\min}  \Vert \bar{\mu} - c_i \Vert_2 - \Vert \bar{\mu} - c_j \Vert_2$ for any $\bar{\mu}$, and that $\left\{ \bar{\mu} \in V_j(C^*)  \Bigm\vert \left\vert \zeta_j(\bar{\mu}, C^*) \right\vert  \leq t \right\} \subseteq N_{C^*}(t)$, $\forall j$.
Letting $\zeta_j \equiv \zeta_j(\mu;C^*)$ and $\widehat{\zeta}_j \equiv \zeta_j(\widehat{\mu};C^*)$, we have
\begin{align*}
    & \Pb\left\vert \Pi_{C^*,a}(\widehat{\mu}) - \Pi_{C^*,a}(\mu) \right\vert\\
    &= \Pb\left( \sum_j c^*_{j,a} \left\vert\mathbbm{1}\left\{\widehat{\zeta}_j > 0 \right\} - \mathbbm{1}\left\{\zeta_j > 0 \right\} \right\vert \right) \\
    & = \sum_j c^*_{j,a}  \Pb \left( \left\vert \mathbbm{1}\left\{\widehat{\zeta}_j > 0 \right\} - \mathbbm{1}\left\{\zeta_j > 0 \right\}\right\vert \left[\mathbbm{1}\left\{\left\vert \widehat{\zeta}_j - \zeta_j \right\vert \leq \kappa \right\} + \mathbbm{1}\left\{\left\vert \widehat{\zeta}_j - \zeta_j \right\vert > \kappa \right\} \right] \right).
\end{align*}
From the last display, we observe that, on one hand,
\begin{align*}
        & \sum_j c^*_{j,a} \Pb \left[ \left\vert \mathbbm{1}\left\{\widehat{\zeta}_j > 0 \right\} - \mathbbm{1}\left\{\zeta_j > 0 \right\}\right\vert \mathbbm{1}\left\{\left\vert \widehat{\zeta}_j - \zeta_j \right\vert \leq \kappa \right\} \right]\\      
        &\leq \sum_j c^*_{j,a}  \Pb\left[\Pb\left(\left\vert \mathbbm{1}\left\{\widehat{\zeta}_j > 0 \right\} - \mathbbm{1}\left\{\zeta_j > 0 \right\}\right\vert \Bigm\vert \left\vert \widehat{\zeta}_j - \zeta_j \right\vert \leq \kappa \right) \mathbbm{1}\left\{\left\vert \widehat{\zeta}_j - \zeta_j \right\vert \leq \kappa \right\}  \right] \\
        &\leq \sum_j c^*_{j,a}  \Pb\left[\Pb\left(\mathbbm{1}\left\{ \vert \zeta_j \vert \leq \left\vert \widehat{\zeta}_j - \zeta_j \right\vert \, \&  \, \mu \in V_j(C^*) \right\} \Bigm\vert \left\vert \widehat{\zeta}_j - \zeta_j \right\vert \leq \kappa \right) \mathbbm{1}\left\{\left\vert \widehat{\zeta}_j - \zeta_j \right\vert \leq \kappa \right\}  \right] \\
        & \quad + \sum_j c^*_{j,a}  \Pb\left[\Pb\left(\mathbbm{1}\left\{ \vert \widehat{\zeta}_j \vert \leq \left\vert \widehat{\zeta}_j - \zeta_j \right\vert \, \&  \, \widehat{\mu} \in V_j(C^*) \right\} \Bigm\vert \left\vert \widehat{\zeta}_j - \zeta_j \right\vert \leq \kappa \right) \mathbbm{1}\left\{\left\vert \widehat{\zeta}_j - \zeta_j \right\vert \leq \kappa \right\}  \right] \\
        &\leq \sum_j c^*_{j,a}  \Pb\left[\Pb\left\{ \mu \in N_{C^*}\left(\left\vert \widehat{\zeta}_j - \zeta_j \right\vert \right) \Bigm\vert \left\vert \widehat{\zeta}_j - \zeta_j \right\vert \leq \kappa \right\} \mathbbm{1}\left\{\left\vert \widehat{\zeta}_j - \zeta_j \right\vert \leq \kappa \right\} \right] \\
        & \quad + \sum_j c^*_{j,a}  \Pb\left[\Pb\left\{ \widehat{\mu} \in N_{C^*}\left(\left\vert \widehat{\zeta}_j - \zeta_j \right\vert \right) \Bigm\vert \left\vert \widehat{\zeta}_j - \zeta_j \right\vert \leq \kappa \right\} \mathbbm{1}\left\{\left\vert \widehat{\zeta}_j - \zeta_j \right\vert \leq \kappa \right\} \right] \\
        &\lesssim \sum_j  \Vert \widehat{\zeta}_j - \zeta_j \Vert_{\infty}^\alpha 
\end{align*}  
where the first and second inequalities follow by the iterated expectation and Lemma \ref{lem:indicator-bound}, respectively, and the last by the margin condition. On the other hand, it also follows that
\begin{align*}
        & \sum_j c^*_{j,a} \Pb \left[ \left\vert \mathbbm{1}\left\{\widehat{\zeta}_j > 0 \right\} - \mathbbm{1}\left\{\zeta_j > 0 \right\}\right\vert \mathbbm{1}\left\{\left\vert \widehat{\zeta}_j - \zeta_j \right\vert > \kappa \right\} \right]\\              
        & \leq \sum_j c^*_{j,a} \left\Vert \mathbbm{1}\left\{\widehat{\zeta}_j > 0 \right\} - \mathbbm{1}\left\{\zeta_j > 0 \right\}\right\Vert_\infty \Pb\left[ \mathbbm{1}\left\{\left\vert \widehat{\zeta}_j - \zeta_j \right\vert > \kappa \right\}\right] \\
        & \lesssim \sum_j c^*_{j,a} \left\Vert \mathbbm{1}\left\{\widehat{\zeta}_j > 0 \right\} - \mathbbm{1}\left\{\zeta_j > 0 \right\}\right\Vert_\infty \left(\sum_a\Vert \widehat{\mu}_a - \mu_a \Vert_{\Pb,1}\right),
\end{align*}  
where the first and second inequalities follow by Hölder's and Markov's inequalities, respectively, and the fact that each $\zeta_j$ is Lipschitz at $\mu$.

Putting the two pieces together, we finally obtain that
\begin{align*}
    &\Pb\left\vert \Pi_{C^*,a}(\widehat{\mu}) - \Pi_{C^*,a}(\mu) \right\vert \\    
    &\lesssim \max_j\left\Vert \mathbbm{1}\left\{\zeta_j(\widehat{\mu};C^*) > 0 \right\} - \mathbbm{1}\left\{\zeta_j(\mu;C^*) > 0 \right\} \right\Vert_{\infty}\sum_a\Vert \widehat{\mu}_a - \mu_a \Vert_{\Pb,1} + \sum_a \Vert \widehat{\mu}_a - \mu_a \Vert_{\infty}^\alpha.
\end{align*}
\end{proof}

The next lemma shows that one may achieve faster rates for the bias of $f_{C^*}(\widehat{\mu})$.
\begin{lemma}\label{lem:f_C-bias-alphap1}
    Suppose that Assumption \ref{assumption:A1-boundedness} holds and $\Pb$ satisfies the margin condition with some $\kappa > 0$, $\alpha>0$. Then we have
    \begin{align*}
        & \left\vert \Pb\left\{ f_{C^*}(\widehat{\mu}) - f_{C^*}(\mu) \right\} \right\vert\\
        & \leq \max_a \left\Vert  \widehat{\mu}_a - \mu_a \right\Vert_{\Pb,1} + \max_a \Vert \widehat{\mu}_a - \mu_a \Vert_{\infty}^{\alpha + 1}  + \frac{1}{\kappa}\max_a \left(\Vert \widehat{\mu}_a - \mu_a \Vert_{\infty} \left\Vert \widehat{\mu}_a - \mu_a \right\Vert_{\Pb,1}\right).
    \end{align*}
\end{lemma}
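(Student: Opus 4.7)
The plan is to introduce the intermediate quantity $\|\widehat{\mu} - \Pi_{C^*}(\mu)\|_2^2$ and decompose
\begin{align*}
    f_{C^*}(\widehat{\mu}) - f_{C^*}(\mu) = \underbrace{\bigl[\|\widehat{\mu} - \Pi_{C^*}(\widehat{\mu})\|_2^2 - \|\widehat{\mu} - \Pi_{C^*}(\mu)\|_2^2\bigr]}_{=:\, A} + \underbrace{\bigl[\|\widehat{\mu} - \Pi_{C^*}(\mu)\|_2^2 - \|\mu - \Pi_{C^*}(\mu)\|_2^2\bigr]}_{=:\, B}.
\end{align*}
By optimality of $\Pi_{C^*}(\widehat{\mu})$ relative to $\widehat{\mu}$, we have $A \leq 0$, and $A$ vanishes on the agreement event $\{\Pi_{C^*}(\widehat{\mu}) = \Pi_{C^*}(\mu)\}$; the term $B$ is the ``linearization'' of $f_{C^*}$ around the fixed center $\Pi_{C^*}(\mu)$.

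The $B$ term will be straightforward: a difference-of-squares expansion gives $B = (\widehat{\mu} - \mu)^\top(\widehat{\mu} + \mu - 2\Pi_{C^*}(\mu))$, and Assumption \ref{assumption:A1-boundedness} makes the second factor uniformly sup-norm bounded, so Hölder yields $|\Pb[B]| \lesssim \max_a \|\widehat{\mu}_a - \mu_a\|_{\Pb, 1}$, accounting for the first term in the target bound.

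For $A$, I would let $c_i^* = \Pi_{C^*}(\widehat{\mu})$ and $c_j^* = \Pi_{C^*}(\mu)$; on the disagreement event $\{i \neq j\}$ the factorization $|A| = (\|\widehat{\mu}-c_j^*\|_2 - \|\widehat{\mu}-c_i^*\|_2)(\|\widehat{\mu}-c_j^*\|_2 + \|\widehat{\mu}-c_i^*\|_2)$ holds, and the second factor is bounded by a constant under Assumption \ref{assumption:A1-boundedness}. For the first, set $a = \|\widehat{\mu}-c_j^*\|_2 - \|\widehat{\mu}-c_i^*\|_2 \geq 0$ and $b = \|\mu-c_j^*\|_2 - \|\mu-c_i^*\|_2 \leq 0$; two applications of the reverse triangle inequality would give $a - b \leq 2\|\widehat{\mu}-\mu\|_2$, from which I get \emph{simultaneously} $a \leq 2\delta(X)$ and $\zeta_j(\mu; C^*) \leq -b \leq 2\delta(X)$, where $\delta(X) = \|\widehat{\mu}(X)-\mu(X)\|_2$ and $j = d(\mu, C^*)$. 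Combining these,
\begin{align*}
    |A| \lesssim \delta(X) \cdot \mathbbm{1}\{\zeta_j(\mu; C^*) \leq 2\delta(X)\}.
\end{align*}

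The hard part will be integrating this pointwise bound against $\Pb$, since the margin condition only controls $\Pb\{\zeta_j \leq t\}$ for $t \leq \kappa$. Because $\widehat{\mu}$ is built from an independent sample, $\|\delta\|_\infty$ is a deterministic number, so I can split into two deterministic regimes. If $2\|\delta\|_\infty \leq \kappa$, I would bound $\mathbbm{1}\{\zeta_j \leq 2\delta(X)\} \leq \mathbbm{1}\{\zeta_j \leq 2\|\delta\|_\infty\}$ and invoke the margin condition at $t = 2\|\delta\|_\infty$ to get $\Pb[|A|] \lesssim \|\delta\|_\infty \cdot \|\delta\|_\infty^{\alpha} = \|\delta\|_\infty^{\alpha+1}$. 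If instead $2\|\delta\|_\infty > \kappa$, I would bound the indicator trivially by $1 < 2\|\delta\|_\infty/\kappa$ and obtain $\Pb[|A|] \lesssim (\|\delta\|_\infty / \kappa) \cdot \|\delta\|_{\Pb,1}$. Translating via the coordinatewise bounds $\|\delta\|_\infty \leq \sqrt{p}\,\max_a \|\widehat{\mu}_a - \mu_a\|_\infty$ and $\|\delta\|_{\Pb,1} \leq p\,\max_a \|\widehat{\mu}_a - \mu_a\|_{\Pb,1}$, and combining with the $B$-bound, would deliver all three terms on the right-hand side of the stated inequality.
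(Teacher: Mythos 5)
Your proof is correct and follows essentially the same strategy as the paper's (which obtains this lemma as a byproduct of the bias analysis in Lemma \ref{lem:phi-bias-alphap1}, see Remark \ref{rmk:proof-of-f_C-bias-alphap1}): an argument-perturbation term controlled at first order via boundedness and H\"{o}lder, plus a cell-reassignment term controlled by the margin condition, with the restriction $t \le \kappa$ handled by separating the regimes $2\Vert\widehat{\mu}-\mu\Vert_\infty \le \kappa$ and $> \kappa$ to produce the $\Vert\cdot\Vert_\infty^{\alpha+1}$ and $\kappa^{-1}\Vert\cdot\Vert_\infty\Vert\cdot\Vert_{\Pb,1}$ terms. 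The differences are only mechanical and harmless: you switch centers at $\widehat{\mu}$ (so your $A\le 0$ by optimality) and use the reverse triangle inequality to get simultaneously $|A|\lesssim \delta(X)$ and the margin event $\zeta_{d(\mu,C^*)}(\mu;C^*)\le 2\delta(X)$, whereas the paper switches centers at $\mu$ and adds back $f_{c^*_d}(\widehat{\mu})-f_{c^*_{\widehat{d}}}(\widehat{\mu})\ge 0$; and your large-deviation regime uses a deterministic split on $\Vert\delta\Vert_\infty$ (legitimate under the paper's convention that $\Pb$ integrates only over the new observation, with $\widehat{\mu}$ fixed) in place of the paper's H\"{o}lder-plus-Markov step, with both routes giving the stated bound up to the multiplicative constants already implicit in the paper's statement.
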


\begin{proof}
    The proof follows the same logic that we develop in greater detail in the subsequent proof of Lemma \ref{lem:phi-bias-alphap1} (see Remark \ref{rmk:proof-of-f_C-bias-alphap1}).
\end{proof}

The following lemma computes the bias of our plug-in risk estimator $\widehat{R}_{n}$.
\begin{lemma} \label{lem:bias-plug-in-risk}
    Suppose $\Pb$ satisfies the margin condition for some $\kappa > 0$, $\alpha > 0$. Then under Assumptions \ref{assumption:A1-boundedness}, \ref{assumption:A2-consistency}, we have 
    \begin{align*}        
    & \widehat{R}_{n}(C^{*})-R(C^{*})\\
    &= O_\Pb\left( \frac{1}{\sqrt{n}} + \max_a \left\Vert  \widehat{\mu}_a - \mu_a \right\Vert_{\Pb,1} + \max_a \Vert \widehat{\mu}_a - \mu_a \Vert_{\infty}^{\alpha + 1}  + \frac{1}{\kappa}\max_a \left(\Vert \widehat{\mu}_a - \mu_a \Vert_{\infty} \left\Vert \widehat{\mu}_a - \mu_a \right\Vert_{\Pb,1}\right) \right),  
    \end{align*}
    whenever $\widehat{\mu}$ is constructed from a separate independent sample.
\end{lemma}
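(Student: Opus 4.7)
The plan is to write
\[
\widehat{R}_{n}(C^{*}) - R(C^{*}) = \underbrace{(\Pn - \Pb)\bigl\{f_{C^{*}}(\widehat{\mu})\bigr\}}_{\text{(I): empirical-process term}} + \underbrace{\Pb\bigl\{f_{C^{*}}(\widehat{\mu}) - f_{C^{*}}(\mu)\bigr\}}_{\text{(II): bias term}},
\]
and bound the two pieces separately. The decomposition is exact, the first piece has conditional mean zero (given the auxiliary sample used to construct $\widehat{\mu}$), and the second piece matches exactly the quantity controlled by Lemma \ref{lem:f_C-bias-alphap1}. This essentially reduces the lemma to a conditional variance calculation plus a citation.

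For term (I), I would condition on the independent sample on which $\widehat{\mu}$ was fitted. Conditionally, the summands $f_{C^{*}}(\widehat{\mu}(Z_i))$ are i.i.d.\ copies of $f_{C^{*}}(\widehat{\mu}(Z))$. Assumption \ref{assumption:A1-boundedness} gives $\|\widehat{\mu}_a\|_{\infty} \leq B$ almost surely for every $a$, and since $C^{*}$ lies in the image of $\mu$ we also have $\|c^{*}_j\|_{\infty} \leq B$; thus
\[
f_{C^{*}}(\widehat{\mu}) = \min_{j} \|\widehat{\mu} - c^{*}_j\|_2^2 \leq 4pB^{2}
\]
uniformly. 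A standard conditional Chebyshev inequality then yields
\[
(\Pn - \Pb)\bigl\{f_{C^{*}}(\widehat{\mu})\bigr\} = O_{\Pb}\!\left(\tfrac{1}{\sqrt{n}}\right),
\]
and the unconditional rate follows since the bound does not depend on the auxiliary sample.

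For term (II), I would invoke Lemma \ref{lem:f_C-bias-alphap1} directly, whose hypotheses (Assumption \ref{assumption:A1-boundedness} and the margin condition with parameters $\kappa, \alpha$) match those already in force; this yields
\[
\bigl|\Pb\{f_{C^{*}}(\widehat{\mu}) - f_{C^{*}}(\mu)\}\bigr| \leq \max_a \|\widehat{\mu}_a - \mu_a\|_{\Pb,1} + \max_a \|\widehat{\mu}_a - \mu_a\|_{\infty}^{\alpha+1} + \kappa^{-1}\max_a\bigl(\|\widehat{\mu}_a - \mu_a\|_{\infty}\|\widehat{\mu}_a - \mu_a\|_{\Pb,1}\bigr).
\]
Summing the bounds for (I) and (II) gives the claim. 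The genuine difficulty of this lemma lives entirely inside Lemma \ref{lem:f_C-bias-alphap1}, where the nonsmooth projection $\Pi_{C^{*}}$ must be linearized using the margin condition; at the present level, the remaining work is only the decomposition, checking boundedness, and the conditional Chebyshev argument, none of which poses a real obstacle.
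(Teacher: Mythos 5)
Your proof is correct, and the reduction to Lemma \ref{lem:f_C-bias-alphap1} for the conditional bias term is exactly what the paper does. Where you differ is in the treatment of the stochastic part: you use the coarse two-term decomposition $(\Pn-\Pb)\{f_{C^*}(\widehat\mu)\} + \Pb\{f_{C^*}(\widehat\mu)-f_{C^*}(\mu)\}$ and dispatch the first piece in one stroke by conditional Chebyshev, using only the uniform bound $f_{C^*}(\widehat\mu)\le 4pB^2$. The paper instead splits the stochastic part further into $(\Pn-\Pb)f_{C^*}(\mu)$, handled by the CLT, plus the drift term $(\Pn-\Pb)\{f_{C^*}(\widehat\mu)-f_{C^*}(\mu)\}$, which it shows is $o_\Pb(n^{-1/2})$ via the sample-splitting lemma; that step requires establishing $\Vert f_{C^*}(\widehat\mu)-f_{C^*}(\mu)\Vert_{\Pb,2}=o_\Pb(1)$, which in turn leans on Lemma \ref{lem:Pi-bias-alpha} and Assumption \ref{assumption:A2-consistency}. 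For the $O_\Pb(n^{-1/2})$ rate claimed in Lemma \ref{lem:bias-plug-in-risk} your shortcut is perfectly sufficient and in fact cheaper --- it never invokes Assumption \ref{assumption:A2-consistency} at all --- while the paper's finer decomposition isolates the centered term $(\Pn-\Pb)f_{C^*}(\mu)$ whose exact limiting behavior is what one needs when pushing toward asymptotic normality later in the paper. So: same skeleton, a legitimately simpler argument for the empirical-process piece, at the cost of not identifying the leading fluctuation term.
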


\begin{proof}
    It is immediate to see that
\begin{equation} \label{eqn:plug-in-risk-decomposition}
    \begin{aligned}
    \widehat{R}_{n}(C^{*})-R(C^{*}) &= \Pn\left\{f_{C^*}(\widehat{\mu})\right\} - \E\left\{f_{C^*}(\mu)\right\} \\
    &= (\Pn - \Pb)\left\{f_{C^*}(\widehat{\mu}) - f_{C^*}(\mu) \right\} \\
    & \quad +  (\Pn - \Pb)f_{C^*}(\mu) + \Pb\left\{ f_{C^*}(\widehat{\mu}) - f_{C^*}(\mu) \right\},
    \end{aligned}
\end{equation}
where $f_{C}(x) = \Vert x-\Pi_{C}(x)\Vert_2^2$, $\forall x \in \R^p$. 
The central limit theorem implies $(\Pn - \Pb)f_{C^*}(\mu) = O_\Pb(n^{-1/2})$. Also, it follows by Lemma \ref{lem:f_C-bias-alphap1} that
\begin{align*}
        & \Pb\left\{ f_{C^*}(\widehat{\mu}) - f_{C^*}(\mu) \right\} \\
        & \leq \max_a \left\Vert  \widehat{\mu}_a - \mu_a \right\Vert_{\Pb,1} + \max_a \Vert \widehat{\mu}_a - \mu_a \Vert_{\infty}^{\alpha + 1}  + \frac{1}{\kappa}\max_a \left(\Vert \widehat{\mu}_a - \mu_a \Vert_{\infty} \left\Vert \widehat{\mu}_a - \mu_a \right\Vert_{\Pb,1}\right).
\end{align*}

Further, under Assumption \ref{assumption:A1-boundedness}, it follows that
\begin{align} 
    f_{C^*}(\widehat{\mu}) - f_{C^*}(\mu) &= \Vert \widehat{\mu}-\Pi_{C}(\widehat{\mu})\Vert_2^2 - \Vert \mu-\Pi_{C}(\mu)\Vert_2^2 \nonumber \\
    & = \Vert \widehat{\mu}\Vert_2^2 + \widehat{\mu}^\top\Pi_{C^*}(\widehat{\mu}) +  \Vert \Pi_{C^*}(\widehat{\mu})\Vert_2^2 \nonumber\\
    & \quad - \left\{ \Vert \mu \Vert_2^2 + \mu^\top\Pi_{C^*}(\mu) +  \Vert \Pi_{C^*}(\mu)\Vert_2^2 \right\} \nonumber\\
    & = \Vert \widehat{\mu} \Vert_2^2 - \widehat{\mu}^\top\Pi_{C^*}(\widehat{\mu}) +\widehat{\mu}^\top\Pi_{C^*}(\mu) +  \Vert \Pi_{C^*}(\widehat{\mu})\Vert_2^2 \nonumber\\
    & \quad - \left\{ \Vert \mu \Vert_2^2 - \mu^\top\Pi_{C^*}(\mu) +\widehat{\mu}^\top\Pi_{C^*}(\mu) + \Vert \Pi_{C^*}(\mu)\Vert_2^2 \right\} \nonumber\\
    &\leq 3B \sum_a \left\{\left\vert \widehat{\mu}_a - \mu_a \right\vert + \left\vert  \Pi_{C^*,a}(\widehat{\mu}) - \Pi_{C^*,a}(\mu) \right\vert \right\}, \label{eqn:phi-mu-hat-bias-1}
\end{align}
which, by the triangle inequality, leads to
\begin{align*}
    \Vert f_{C^*}(\widehat{\mu}) - f_{C^*}(\mu) \Vert \lesssim \sum_a \left(\Vert \widehat{\mu}_a - \mu_a \Vert + \left\Vert \Pi_{C^*,a}(\widehat{\mu}) - \Pi_{C^*,a}(\mu) \right\Vert \right).
\end{align*}
For the second term in the last display, note that
\begin{align} \label{eqn:Pi_Cstar-muhat-sqr-bias}
    \Pb\left\{ \Pi_{C^*,a}(\widehat{\mu}) - \Pi_{C^*,a}(\mu) \right\}^2 &= \Pb\left( \sum_j c^*_{j,a} \left[\mathbbm{1}\left\{\zeta_j(\widehat{\mu}) > 0 \right\} - \mathbbm{1}\left\{\zeta_j(\mu) > 0 \right\} \right] \right)^2 \nonumber \\
    &\leq \Pb\left\{ \sum_j {c^*}_{j,a}^2 \sum_j \left[\mathbbm{1}\left\{\zeta_j(\widehat{\mu}) > 0 \right\} - \mathbbm{1}\left\{\zeta_j(\mu) > 0 \right\} \right]^2 \right\} \nonumber\\
    & \lesssim \sum_j c^*_{j,a} \Pb \left\vert \mathbbm{1}\left\{\zeta_j(\widehat{\mu}) > 0 \right\} - \mathbbm{1}\left\{\zeta_j(\mu) > 0 \right\} \right\vert \nonumber\\    
    & \lesssim \max_j\left\Vert \mathbbm{1}\left\{\zeta_j(\widehat{\mu};C^*) > 0 \right\} - \mathbbm{1}\left\{\zeta_j(\mu;C^*) > 0 \right\} \right\Vert_{\infty} \sum_a\Vert \widehat{\mu}_a - \mu_a \Vert_{\Pb,1} \nonumber\\
    & \quad + \sum_a \Vert \widehat{\mu}_a - \mu_a \Vert_{\infty}^\alpha,
\end{align}
where the last inequality follows by Lemma \ref{lem:Pi-bias-alpha}. Hence, by the given consistency condition in Assumption \ref{assumption:A2-consistency}, we get $\Vert \widehat{\mu}_a - \mu_a \Vert_{\Pb,1} = o_\Pb(1)$, $\Vert \widehat{\mu}_a - \mu_a \Vert_{\infty}^\alpha = o_\Pb(1)$, $\forall \alpha > 0$, and thereby conclude that $\left\Vert \Pi_{C^*,a}(\widehat{\mu}) - \Pi_{C^*,a}(\mu) \right\Vert = o_\Pb(1)$. Hence, $\Vert f_{C^*}(\widehat{\mu}) - f_{C^*}(\mu) \Vert = o_\Pb(1)$, and by the sample splitting lemma \cite[][Lemma 2]{kennedy2018sharp}, we obtain $(\Pn - \Pb)\left\{f_{C^*}(\widehat{\mu}) - f_{C^*}(\mu) \right\} = o_\Pb(n^{-1/2})$.

Putting the three pieces back together into \eqref{eqn:plug-in-risk-decomposition}, we obtain the desired bias bound as
\begin{align*} 
    & \widehat{R}_{n}(C^{*})-R(C^{*})\\
    &= O_\Pb\left( \frac{1}{\sqrt{n}} + \max_a \left\Vert  \widehat{\mu}_a - \mu_a \right\Vert_{\Pb,1} + \max_a \Vert \widehat{\mu}_a - \mu_a \Vert_{\infty}^{\alpha + 1}  + \frac{1}{\kappa}\max_a \left(\Vert \widehat{\mu}_a - \mu_a \Vert_{\infty} \left\Vert \widehat{\mu}_a - \mu_a \right\Vert_{\Pb,1}\right) \right).
\end{align*}
\end{proof}

The proof of Theorem \ref{thm:k-means} is established through Lemma \ref{lem:bias-plug-in-risk} and the auxiliary results utilized for the proof of Theorem \ref{thm:plug-in-consistency} (presented in Appendix \ref{appsec:proof-thm-3.2}).

\begin{proof}[Proof of Theorem \ref{thm:k-means}]
Notice that
\begin{align} \label{eq:k-means_factorization}
R(\widehat{C})-R(C^{*}) 
& = R(\widehat{C}) - R_{n}(\widehat{C}) + R_{n}(\widehat{C}) -\widehat{R}_{n}(\widehat{C})+\widehat{R}_{n}(\widehat{C})-R(C^{*}) \nonumber \\
& \leq R(\widehat{C}) - R_{n}(\widehat{C}) + R_{n}(\widehat{C}) -\widehat{R}_{n}(\widehat{C})+\widehat{R}_{n}(C^{*})-R(C^{*}) \nonumber\\
& \leq \underset{C\in\mathcal{C}_{k}}{\sup}\left|R(C)-R_{n}(C)\right| + R_{n}(\widehat{C}) - \widehat{R}_{n}(\widehat{C}) + \widehat{R}_{n}(C^{*})-R(C^{*}).
\end{align}
Since
$\left\Vert \mu  \right\Vert _{2} < \infty$ a.s., \citet[][Theorem 1]{linder1994rates} implies the following bound for the first term in \eqref{eq:k-means_factorization}: 
\begin{align}
\underset{C\in\mathcal{C}_{k}}{\sup}\left\vert R(C)-R_{n}(C)\right\vert = O_\Pb \left(\sqrt{\frac{\log n}{n}} \right) \label{eq:k-means_first} 
\end{align}

For the second term in \eqref{eq:k-means_factorization}, we observe that
\begin{align*}
    \widehat{R}_{n}(\widehat{C}) - R_{n}(\widehat{C}) &= \Pn\{f_{\widehat{C}}(\widehat{\mu})\} - \Pn\{f_{\widehat{C}}(\mu)\} \\
    &= (\Pn - \Pb)\left\{f_{\widehat{C}}(\widehat{\mu}) - f_{\widehat{C}}(\mu) \right\} + \Pb\left\{f_{\widehat{C}}(\widehat{\mu}) - f_{\widehat{C}}(\mu) \right\}.
\end{align*}
The terms on the RHS of the last display can be bounded using techniques that will be developed in the proof of Theorem~\ref{thm:plug-in-consistency}. First, by Lemma \ref{lem:f_C-supremum-bias}, we get
\begin{align*}
    \left\vert \Pb\left\{f_{\widehat{C}}(\widehat{\mu}) - f_{\widehat{C}}(\mu) \right\} \right\vert = O_\Pb\left( \max_a \Vert \widehat{\mu}_a - \mu_a \Vert_{\Pb,1} \right).
\end{align*}
Moreover, by the arguments used in analyzing terms (i) and (iii) in the proof of Theorem~\ref{thm:plug-in-consistency}, the class $\mathscr{F}_\mu=\{f_{C} \circ \mu:C\in\mathcal C_k\}$
is a VC-subgraph class and therefore $\Pb$-Donsker. Because composition by a measurable map preserves the VC index, the class
$\mathscr{F}_{\widehat\mu} =\{f_{C} \circ \widehat\mu : C\in\mathcal C_k\}$, is VC-subgraph with the same envelope as $\mathscr{F}_\mu$ as well. Consequently we get 
\begin{align*}
    (\Pn - \Pb)\left\{f_{\widehat{C}}(\widehat{\mu}) - f_{\widehat{C}}(\mu) \right\} \leq \underset{C\in\mathcal{C}_{k}}{\sup} (\Pn - \Pb)\left\{f_{C}(\widehat{\mu}) - f_{C}(\mu)\right\}  = O_\Pb\left(\frac{1}{\sqrt{n}}\right),
\end{align*}
and thus,
\begin{align*}
    \widehat{R}_{n}(\widehat{C}) - R_{n}(\widehat{C})  = O_\Pb\left(\frac{1}{\sqrt{n}} +  \max_a \Vert \widehat{\mu}_a - \mu_a \Vert_{\Pb,1} \right).
\end{align*}

For the third term in \eqref{eq:k-means_factorization}, we have
\begin{align}\label{eq:k-means_second}
    & \widehat{R}_{n}(C^{*})-R(C^{*}) \nonumber \\
    &= O_\Pb\left( \frac{1}{\sqrt{n}} + \max_a \left\Vert  \widehat{\mu}_a - \mu_a \right\Vert_{\Pb,1} + \max_a \Vert \widehat{\mu}_a - \mu_a \Vert_{\infty}^{\alpha + 1}  + \frac{1}{\kappa}\max_a \left(\Vert \widehat{\mu}_a - \mu_a \Vert_{\infty} \left\Vert \widehat{\mu}_a - \mu_a \right\Vert_{\Pb,1}\right) \right),
\end{align}
due to Lemma \ref{lem:bias-plug-in-risk}.

Assembling the preceding results, we establish that
\begin{align*}
    & R(\widehat{C})-R(C^{*}) \\
    & = O_\Pb\left( \sqrt{\frac{\log n}{n}} + \max_a \left\Vert  \widehat{\mu}_a - \mu_a \right\Vert_{\Pb,1} + \max_a \Vert \widehat{\mu}_a - \mu_a \Vert_{\infty}^{\alpha + 1}  + \frac{1}{\kappa}\max_a \left(\Vert \widehat{\mu}_a - \mu_a \Vert_{\infty} \left\Vert \widehat{\mu}_a - \mu_a \right\Vert_{\Pb,1}\right) \right).
\end{align*}

The same argument as in the preceding proof can be used to compute the rate of convergence in expectation as well. Specifically, when $\Vert \mu \Vert_\infty \leq B < \infty$ a.s., \citet[][Theorem 2.1]{biau2008performance} implies that
    \begin{align*}
        \Pb\left\{\underset{C\in\mathcal{C}_{k}}{\sup}\left\vert R(C)-R_{n}(C)\right\vert \right\} &\leq \frac{12B^2k}{\sqrt{n}}.
    \end{align*}
    Next, by Lemma \ref{lem:f_C-supremum-bias}, it follows that
    \begin{align*}
    \Pb\left\{ \widehat{R}_{n}(\widehat{C}) - R_{n}(\widehat{C}) \right\} &= \Pb\left\{f_{\widehat{C}}(\widehat{\mu}) - f_{\widehat{C}}(\mu) \right\} 
    = O_\Pb\left( \max_a \Vert \widehat{\mu}_a - \mu_a \Vert_{\Pb,1} \right).
    \end{align*}
    Also, by virtue of Lemma \ref{lem:f_C-bias-alphap1} one may deduce that
    \begin{align*}
        \Pb\left\{ \widehat{R}_{n}(C^{*})-R(C^{*}) \right\}  &= \Pb\left\{ f_{C^*}(\widehat{\mu}) - f_{C^*}(\mu) \right\} \\
        & \lesssim \max_a \left\Vert  \widehat{\mu}_a - \mu_a \right\Vert_{\Pb,1} + \max_a \Vert \widehat{\mu}_a - \mu_a \Vert_{\infty}^{\alpha + 1} \\
        & \quad + \frac{1}{\kappa}\max_a \left(\Vert \widehat{\mu}_a - \mu_a \Vert_{\infty} \left\Vert \widehat{\mu}_a - \mu_a \right\Vert_{\Pb,1}\right).
    \end{align*}
    With the incorporation of these bounds, \eqref{eq:k-means_factorization} reduces to:
    \begin{align*}        
    & \Pb\left\{R(\widehat{C})-R(C^{*}) \right\}\\
    & \lesssim \frac{1}{\sqrt{n}}+\max_a \left\Vert  \widehat{\mu}_a - \mu_a \right\Vert_{\Pb,1} + \max_a \Vert \widehat{\mu}_a - \mu_a \Vert_{\infty}^{\alpha + 1}  + \frac{1}{\kappa}\max_a \left(\Vert \widehat{\mu}_a - \mu_a \Vert_{\infty} \left\Vert \widehat{\mu}_a - \mu_a \right\Vert_{\Pb,1}\right).
    \end{align*} 
\end{proof}
\vspace*{.2in}

\subsection{Proof of Theorem \ref{thm:plug-in-consistency}} \label{appsec:proof-thm-3.2}

\begin{lemma} \label{lem:f_C-supremum-bias}
    For any $C \in \mathcal{C}_k$, under Assumption \ref{assumption:A1-boundedness}, we have
    \begin{align*}
        \underset{C \in \mathcal{C}_k}{\sup} \left\vert \Pb\left\{ f_{C}(\widehat{\mu}) - f_{C}(\mu) \right\} \right\vert \lesssim \max_a \Vert \widehat{\mu}_a - \mu_a \Vert_{\Pb,1}.
    \end{align*}
\end{lemma}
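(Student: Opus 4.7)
The plan is to establish a pointwise bound on $|f_C(\widehat\mu) - f_C(\mu)|$ by a random quantity that does \emph{not} depend on $C$, and then take expectations. The key leverage is the nearest-point optimality of $\Pi_C$: for any $x, y \in \R^p$, $\|x - \Pi_C(x)\|_2^2 \leq \|x - \Pi_C(y)\|_2^2$, since $\Pi_C(y) \in C$. Applying this once with $(x,y) = (\widehat\mu,\mu)$ and once with $(x,y) = (\mu,\widehat\mu)$ sandwiches the difference:
\[
\|\widehat\mu - \Pi_C(\widehat\mu)\|_2^2 - \|\mu - \Pi_C(\widehat\mu)\|_2^2 \;\leq\; f_C(\widehat\mu) - f_C(\mu) \;\leq\; \|\widehat\mu - \Pi_C(\mu)\|_2^2 - \|\mu - \Pi_C(\mu)\|_2^2.
\]
Crucially, both sides use a common projection, which is what allows the difference to be linearized.

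Next, I would apply the polarization identity $\|u-v\|_2^2 - \|w-v\|_2^2 = (u-w)^\top(u + w - 2v)$ to each bracket, with $u=\widehat\mu$, $w=\mu$, and $v$ equal to either $\Pi_C(\mu)$ or $\Pi_C(\widehat\mu)$. Each bracket then equals $(\widehat\mu-\mu)^\top g$ for some $g \in \R^p$ whose coordinates satisfy $\|g\|_\infty \leq 4B$: by Assumption \ref{assumption:A1-boundedness} both $\mu$ and $\widehat\mu$ are coordinate-wise bounded by $B$, and because $\mathcal{C}_k$ consists of codebooks in the image of $\mu$, every codepoint---and hence $\Pi_C(\cdot)$---is also coordinate-wise bounded by $B$. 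Consequently,
\[
\bigl| f_C(\widehat\mu) - f_C(\mu) \bigr| \;\leq\; 4B \sum_{a=1}^p \bigl|\widehat\mu_a - \mu_a\bigr|,
\]
and the right-hand side is independent of $C$.

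Taking expectations under $\Pb$ and pulling the supremum inside the absolute value gives $\sup_{C \in \mathcal{C}_k}|\Pb\{f_C(\widehat\mu) - f_C(\mu)\}| \leq 4B \sum_a \|\widehat\mu_a - \mu_a\|_{\Pb,1}$, which is $\lesssim \max_a \|\widehat\mu_a - \mu_a\|_{\Pb,1}$ since $p$ is fixed. I do not expect any serious obstacle: the argument is almost immediate once the nearest-point sandwich is written down, and in particular does not invoke the margin condition. This is precisely why the resulting bound is only first-order in the nuisance error, in contrast to the sharper bound in Lemma \ref{lem:f_C-bias-alphap1}, which gains an extra factor of $\|\widehat\mu - \mu\|_\infty$ or $\|\widehat\mu-\mu\|_{\Pb,1}/\kappa$ by exploiting the special geometry of $\Pi_{C^*}$ around an \emph{optimal} codebook.
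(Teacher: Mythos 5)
Your proof is correct, and it takes a more elementary route than the paper's. The paper obtains this lemma as a byproduct of the machinery built for the semiparametric analysis: it decomposes $\Pb\{f_C(\widehat\mu)-f_C(\mu)\}$ over the cluster-assignment indicators, writing it as $\sum_j \Pb[\mathbbm{1}(\widehat d=j)\{f_{c_j}(\widehat\mu)-f_{c_j}(\mu)\}] + \sum_j \Pb[\{\mathbbm{1}(\widehat d=j)-\mathbbm{1}(d=j)\}f_{c_j}(\mu)]$, and then bounds each piece by $\max_a\Vert\widehat\mu_a-\mu_a\Vert_{\Pb,1}$ using the inequality $f_{c_{\widehat d}}(\mu)-f_{c_d}(\mu)\le f_{c_{\widehat d}}(\mu)-f_{c_d}(\mu)+f_{c_d}(\widehat\mu)-f_{c_{\widehat d}}(\widehat\mu)$ together with H\"older and local Lipschitz continuity of each $f_{c_j}$ (this is the content of Lemma \ref{lem:phi_C-supremum-bias} and Remark \ref{rmk:proof-of-f_C-supremum-bias}). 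You instead exploit nearest-point optimality of $\Pi_C$ to sandwich $f_C(\widehat\mu)-f_C(\mu)$ between two differences sharing a common projection, linearize each via polarization, and conclude the uniform pointwise bound $\vert f_C(\widehat\mu)-f_C(\mu)\vert\le 4B\sum_a\vert\widehat\mu_a-\mu_a\vert$, which is just the statement that a minimum of finitely many locally Lipschitz functions is locally Lipschitz; taking (conditional-on-$\widehat\mu$) expectations and using that the bound is free of $C$ finishes the argument, with constants depending only on the fixed $B$ and $p$, exactly as the $\lesssim$ permits. Your appeal to boundedness of the codepoints is legitimate since $\mathcal{C}_k$ consists of codebooks in the (essentially bounded) image of $\mu$. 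What each approach buys: yours is shorter and self-contained for the plug-in risk, and makes transparent that no margin condition is needed; the paper's decomposition, while heavier here, is the one that generalizes — the same two-term structure is reused verbatim for the uncentered-influence-function version $\varphi_C(\widehat\eta)$ in Lemma \ref{lem:phi_C-supremum-bias} (where the nearest-point sandwich does not apply, because the projection enters through $\mu$ inside $\varphi_C$ rather than as a minimum over squared distances) and, with the margin condition, yields the sharper second-order bounds of Lemmas \ref{lem:f_C-bias-alphap1} and \ref{lem:phi-bias-alphap1}.
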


\begin{proof}
   We defer the proof until Lemma \ref{lem:phi_C-supremum-bias} (see Remark \ref{rmk:proof-of-f_C-supremum-bias}).
\end{proof}

\begin{proof}[Proof of Theorem \ref{thm:plug-in-consistency}]
First, we aim to show
\begin{align*}
    \underset{C\in\mathcal{C}_{k}}{\sup}\left\vert R_n(C) -  R(C)\right\vert = o_\Pb(1).
\end{align*}

To this end, consider the following decomposition for any $C \in \mathcal{C}_k$:
\begin{align*}
    R_n(C) - R(C) &= \underbrace{(\Pn - \Pb) \left\{f_{C}(\widehat{\mu}) - f_{C}(\mu) \right\}}_{(i)}  \\
    & \quad + \underbrace{\Pb \left\{ f_{C}(\widehat{\mu}) - f_{C}(\mu) \right\}}_{(ii)} \\
    & \quad + \underbrace{(\Pn - \Pb)\left\{f_{C}(\mu) \right\}}_{(iii)}.
\end{align*}
We will analyze the terms in the following order: (iii) $\rightarrow$ (ii) $\rightarrow$ (i).

\textbf{(iii)}  Consider sets $\mathscrsfs{G}$ of the subgraph $\{f_{C}(x) > u : (x,u) \in \R^{p} \times \R \}$.  The shattering number of $\mathscrsfs{G}$ is $s(\mathscrsfs{G},n) \leq n^{k(p+1)}$, which follows by the fact that each $\{f_{C}(x) > u \}$ is represented as a union of the complements of $k$ spheres. Hence the function class $\widetilde{\mathcal{F}} = \{f_{C}(\cdot) : C \in \mathcal{C}_k\}$ is a VC-class. For any fixed $\mu:\R^p \to \R$ and $f_C(\mu) = \Vert \mu - \Pi_C(\mu) \Vert_2^2 = f_C \circ \mu$, by the stability property \citep[e.g.,][Lemma 2.6.17]{van1996weak} the function class $\mathcal{F}_{\mu} = \{f_C(\mu(\cdot)) : C \in \mathcal{C}_k\}$ is also a VC-class. Taking $F_{\mu} = \underset{C \in \mathcal{C}_k}{\sup}\left\vert f_C(\mu) \right\vert$ as the envelope function, we have $\Pb\{F_{\mu}\} \leq 4B^2$ under the given boundedness condition. Thus, $\mathcal{F}_{\mu}$ is $\Pb$-Glivenko-Cantelli, yielding $\underset{C \in \mathcal{C}_k}{\sup}\left\vert (\Pn - \Pb)\left\{f_{C}(\mu)\right\} \right\vert = o_\Pb(1)$.

\textbf{(ii)} Under Assumption \ref{assumption:A1-boundedness}, by Lemma \ref{lem:f_C-supremum-bias} we have
\begin{align*}
        \underset{C \in \mathcal{C}_k}{\sup} \left\vert \Pb\left\{ f_{C}(\widehat{\mu}) - f_{C}(\mu) \right\} \right\vert \lesssim \max_a \Vert \widehat{\mu}_a - \mu_a \Vert_{\infty},
\end{align*}
which is $o_\Pb(1)$ under the consistency condition in Assumptions \ref{assumption:A2-consistency}.

\textbf{(i)} Let $\mathcal{F}_n = \mathcal{F}_{\hat{\mu}} - \mathcal{F}_{\mu}$ for the function class $\mathcal{F}_{\bar{\mu}} = \{f_C(\bar{\mu}(\cdot)) : C \in \mathcal{C}_k\}$ from before. Then,
\begin{align*}
    \left\Vert \frac{1}{\sqrt{n}} \Gn\left\{f_{C}(\widehat{\mu}) - f_{C}(\mu)\right\} \right\Vert_{\mathcal{C}_k} &= \underset{C \in \mathcal{C}_k}{\sup} \left\vert \frac{1}{\sqrt{n}} \Gn\left\{f_{C}(\widehat{\mu}) - f_{C}(\mu)\right\} \right\vert \\
    &= \frac{1}{\sqrt{n}}  \underset{f \in \mathcal{F}_n}{\sup} \left\vert\Gn(f)\right\vert.
\end{align*}

One may view the nuisance functions $\widehat{\mu}$ as fixed given the training data $D_0$. Since $\mathcal{F}_{\mu}$ is a VC-subgraph for any fixed $\mu$, so is $\mathcal{F}_n$ given $D_0$. Let the VC index of $\mathcal{F}_n$ be $\nu' < \infty$. Then we have
\begin{align*}
        \underset{Q}{\sup} N\left(\epsilon\Vert F_n \Vert_{Q,2}, \mathcal{F}_n, L_2(Q) \right) \lesssim \left( \frac{c_1}{\epsilon} \right)^{c_2\nu'}
\end{align*}
for some universal constants $c_1, c_2 > 0$. Hence applying \citet[][Theorem 3.5.4]{gine2021mathematical}, we obtain that 
\begin{align*}
    \Pb\left\{ \underset{f \in \mathcal{F}_n}{\sup} \left\vert\Gn(f)\right\vert \right\} &\lesssim \left\Vert F_n \right\Vert \underset{Q}{\sup}  \int_{0}^1 \sqrt{1 + \log N\left(\epsilon\Vert F_n \Vert_{Q,2}, \mathcal{F}_n, L_2(Q) \right)} d\epsilon\\
    &\lesssim \left\Vert F_n \right\Vert \int_{0}^1 \sqrt{1+\nu' \log(1/\epsilon)}d\epsilon.
\end{align*}
Taking the envelope $F_n = \underset{C \in \mathcal{C}_k}{\sup} \left\vert f_{C}(\widehat{\mu}) - f_{C}(\mu) \right\vert$ which is bounded, it is immediate to show that $\Pb\left\{ \underset{f \in \mathcal{F}_n^b}{\sup} \left\vert\Gn(f)\right\vert \right\} = O_\Pb(1)$ as the integral in the last display is finite. Consequently we get $\left\Vert (\Pn - \Pb)\left\{f_{C}(\widehat{\mu}) - f_{C}(\mu)\right\} \right\Vert_{\mathcal{C}_k} = O_\Pb(\frac{1}{\sqrt{n}}) = o_\Pb(1)$.

Now that we have shown
$
    \underset{C\in\mathcal{C}_{k}}{\sup}\left\vert \widehat{R}(C) -  R(C)\right\vert = o_\Pb(1),
$
the desired consistency $\widehat{C} \xrightarrow{p} C^*$ follows by \citet[][Theorem 5.7]{van2000asymptotic}, noting that $R(\cdot)$ is a continuous, bounded function whose domain $\mathcal{C}_k$ is compact, and that $C^*$ is unique (Assumption \ref{assumption:A3-uniqueness-of-codebook}).
\end{proof}
\vspace*{.2in}

\subsection{Proof of Lemma \ref{lem:eif-of-risk}}

To facilitate the proof of the main result, we begin by introducing several key lemmas.

\begin{lemma}\label{lem:E-bias-psi-2}  Under Assumptions \ref{assumption:A1-boundedness} and \ref{assumption:A4-pihat-boundedness}, we have that for any $a \in \mathcal{A}$,
    \begin{align*}
        \Pb\left\{ \varphi_{2,a}(\widehat{\eta}) - \varphi_{2,a}(\eta) \right\} \lesssim \Vert \widehat{\mu}_a - {\mu}_a \Vert \left( \Vert \widehat{\mu}_a - {\mu}_a \Vert + \Vert \widehat{\pi}_a - {\pi}_a \Vert \right).
    \end{align*}
\end{lemma}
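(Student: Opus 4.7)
The plan is to exploit the standard second-order bias structure of uncentered efficient influence functions: compute $\Pb\{\varphi_{2,a}(\widehat\eta) - \varphi_{2,a}(\eta)\}$ by conditioning on $X$, rearrange to expose a product of nuisance errors, and bound via Cauchy--Schwarz together with Assumptions \ref{assumption:A1-boundedness} and \ref{assumption:A4-pihat-boundedness}.

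First I would treat $\widehat\eta$ as fixed (it is constructed on an independent sample elsewhere in the theory) and apply iterated expectation. Conditioning on $X$, using $\Pb(A=a\mid X) = \pi_a(X)$ and $\E(Y\mid X, A=a) = \mu_a(X)$, and noting that the indicator $\mathbbm{1}(A=a)$ forces $Y - \mu_A(X) = Y - \mu_a(X)$, one obtains
\[
\Pb\{\varphi_{2,a}(\widehat\eta)\mid X\} = 2\widehat\mu_a(X)\frac{\pi_a(X)}{\widehat\pi_a(X)}\bigl(\mu_a(X) - \widehat\mu_a(X)\bigr) + \widehat\mu_a^2(X),
\]
while $\Pb\{\varphi_{2,a}(\eta)\mid X\} = \mu_a^2(X)$.

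Next, writing $\Delta_\mu = \widehat\mu_a - \mu_a$ and $\Delta_\pi = \widehat\pi_a - \pi_a$, I would expand $\widehat\mu_a^2 - \mu_a^2 = \Delta_\mu(\widehat\mu_a + \mu_a)$ and use the identity $-2\widehat\mu_a\,\pi_a/\widehat\pi_a + \widehat\mu_a + \mu_a = -\Delta_\mu + 2\widehat\mu_a\,\Delta_\pi/\widehat\pi_a$ to massage the difference into product form:
\[
\Pb\{\varphi_{2,a}(\widehat\eta) - \varphi_{2,a}(\eta)\mid X\} = -\Delta_\mu^2 + 2\widehat\mu_a\,\frac{\Delta_\mu\,\Delta_\pi}{\widehat\pi_a}.
\]
Taking the outer expectation and applying Cauchy--Schwarz yields
\[
\bigl|\Pb\{\varphi_{2,a}(\widehat\eta) - \varphi_{2,a}(\eta)\}\bigr| \;\leq\; \|\Delta_\mu\|^2 + 2\bigl\|\widehat\mu_a/\widehat\pi_a\bigr\|_\infty\,\|\Delta_\mu\|\,\|\Delta_\pi\|.
\]
Under Assumption \ref{assumption:A1-boundedness} $\widehat\mu_a$ is uniformly bounded by $B$, and under Assumption \ref{assumption:A4-pihat-boundedness} $\widehat\pi_a \geq \epsilon > 0$, so $\|\widehat\mu_a/\widehat\pi_a\|_\infty \leq B/\epsilon$ is a universal constant, giving the claimed bound $\lesssim \|\widehat\mu_a - \mu_a\|(\|\widehat\mu_a - \mu_a\| + \|\widehat\pi_a - \pi_a\|)$.

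I do not anticipate a real obstacle; this is the standard AIPW/doubly-robust second-order bias calculation. The one delicate step is the algebraic rearrangement exposing the product structure---one must ensure that the terms linear in $\Delta_\mu$ alone and in $\Delta_\pi$ alone cancel, so that only the cross-product $\Delta_\mu\Delta_\pi$ and the square $\Delta_\mu^2$ survive, which is precisely what delivers the second-order rate.
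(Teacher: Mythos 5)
Your proof is correct and follows essentially the same route as the paper's: iterated expectation conditioning on $X$, algebraic rearrangement of the conditional bias into the form $-\Delta_\mu^2 + 2\widehat{\mu}_a\Delta_\mu\Delta_\pi/\widehat{\pi}_a$, then Cauchy--Schwarz with the boundedness Assumptions \ref{assumption:A1-boundedness} and \ref{assumption:A4-pihat-boundedness}. (Your constant $2$ in the cross term is in fact the correct one; the paper's displayed $4\widehat{\mu}_a$ is a harmless overcount that does not affect the $\lesssim$ bound.)
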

\begin{proof}
Since
$
    \Pb\left\{\varphi_{2,a}(\eta)\right\} = \Pb\left\{\mu_{a}^2(X)\right\}
$, it follows
\begin{align*}
    \Pb\left\{ \varphi_{2,a}(\widehat{\eta}) - \varphi_{2,a}(\eta) \right\} &= \Pb\left\{ 2\widehat{\mu}_a\frac{\mathbbm{1}(A=a)}{\widehat{\pi}_a}\left\{Y- \widehat{\mu}_A\right\} + \widehat{\mu}_a^2 - \mu_{a}^2\right\} \\
    &= \Pb\left\{2\widehat{\mu}_a \frac{\pi_a}{\widehat{\pi}_a}(\mu_{a} - \widehat{\mu}_a) + (\widehat{\mu}_a - \mu_{a})(\widehat{\mu}_a + \mu_{a}) \right\}\\
    &= \Pb\left[ \left(\mu_{a} - \widehat{\mu}_a\right)\left\{4\widehat{\mu}_a\left(\frac{\pi_a - \widehat{\pi}_a}{\widehat{\pi}_a}\right) + \widehat{\mu}_a - {\mu}_a \right\} \right] \\
    &\leq \Pb\left\{ \left\vert \widehat{\mu}_a - {\mu}_a \right\vert \left( \left\vert \widehat{\mu}_a - {\mu}_a \right\vert + \frac{4B}{\epsilon} \left\vert \widehat{\pi}_a - {\pi}_a \right\vert \right) \right\} \\
    &\lesssim \left\Vert \widehat{\mu}_a - {\mu}_a \right\Vert\left( \left\Vert \widehat{\mu}_a - {\mu}_a \right\Vert + \left\Vert \widehat{\pi}_a - {\pi}_a \right\Vert \right)
\end{align*}    
\end{proof}

\begin{remark} \label{rmk:E-bias-psi-1} (\citet[][Example 2]{kennedy2022semiparametric})
    For $\varphi_{1,a}(\eta)$, it is well known that
    \begin{align*}
        \Pb\left\{ \varphi_{1,a}(\widehat{\eta}) - \varphi_{1,a}(\eta) \right\} \lesssim \Vert \widehat{\mu}_a - {\mu}_a \Vert  \Vert \widehat{\pi}_a - {\pi}_a \Vert.
    \end{align*}    
\end{remark}

\begin{lemma}\label{lem:phi-bias-alphap1}
    Suppose that Assumptions \ref{assumption:A1-boundedness}, \ref{assumption:A4-pihat-boundedness} hold and $\Pb$ satisfies the margin condition with some $\kappa > 0 $, $\alpha>0$. Then we have
    \begin{align*}
        & \left\vert \Pb\left\{ \varphi_{C^*}(\widehat{\eta}) - \varphi_{C^*}(\eta) \right\} \right\vert\\
        & \lesssim \max_a \left\Vert  \widehat{\mu}_a - {\mu}_a \right\Vert\left( \left\Vert \widehat{\mu}_a - {\mu}_a \right\Vert + \left\Vert \widehat{\pi}_a - {\pi}_a \right\Vert \right) + \max_a \Vert \widehat{\mu}_a - \mu_a \Vert_{\infty}^{\alpha + 1} \\
        & \quad + \frac{1}{\kappa}\max_a \left(\Vert \widehat{\mu}_a - \mu_a \Vert_{\infty} \left\Vert \widehat{\mu}_a - \mu_a \right\Vert_{\Pb,1}\right).
    \end{align*}
\end{lemma}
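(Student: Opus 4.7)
The plan is to expand $\varphi_{C^*}(\widehat{\eta}) - \varphi_{C^*}(\eta)$ coordinate-by-coordinate and peel off three types of error terms. Writing $\bar{\mu}_a = [\Pi_{C^*}(\widehat{\mu})]_a$ and $\check{\mu}_a = [\Pi_{C^*}(\mu)]_a$, I would decompose the summand into (i) $\varphi_{2,a}(\widehat{\eta}) - \varphi_{2,a}(\eta)$, (ii) $-2[\varphi_{1,a}(\widehat{\eta})\bar{\mu}_a - \varphi_{1,a}(\eta)\check{\mu}_a]$, and (iii) $\bar{\mu}_a^2 - \check{\mu}_a^2$. Piece (i) is controlled directly by Lemma \ref{lem:E-bias-psi-2}, which already matches the first term of the claimed bound.

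For piece (ii), I would add and subtract $\mu_a\bar{\mu}_a$ and apply the AIPW argument of Remark \ref{rmk:E-bias-psi-1} to the bounded $X$-measurable test function $\bar{\mu}_a$, using Cauchy-Schwarz and Assumption \ref{assumption:A4-pihat-boundedness}, to obtain $\Pb[\bar{\mu}_a\{\varphi_{1,a}(\widehat{\eta})-\varphi_{1,a}(\eta)\}] \lesssim \|\widehat{\mu}_a-\mu_a\|\,\|\widehat{\pi}_a-\pi_a\|$. Since $\check{\mu}_a$ is $X$-measurable and $\E[\varphi_{1,a}(\eta)\mid X]=\mu_a$, the remaining contribution of (ii) is $-2\Pb[\mu_a(\bar{\mu}_a-\check{\mu}_a)]$. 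Combining with (iii) via the algebraic identity
\[
\bar{\mu}_a^2 - \check{\mu}_a^2 - 2\mu_a(\bar{\mu}_a-\check{\mu}_a) = (\bar{\mu}_a-\mu_a)^2 - (\check{\mu}_a-\mu_a)^2
\]
and summing over $a$ collapses the contribution of (ii)+(iii) into the nonnegative quantity $\Pb\bigl[\|\Pi_{C^*}(\widehat{\mu})-\mu\|_2^2 - \|\Pi_{C^*}(\mu)-\mu\|_2^2\bigr]$, up to a second-order remainder of the same form as piece (i).

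To bound this projection-difference integral, I would use the geometric observations that (a) on the event where $\widehat{\mu}$ and $\mu$ fall in the same Voronoi cell the integrand vanishes; (b) on the discordant event, with $i=I(\widehat{\mu})$ and $j=I(\mu)$, the triangle inequality gives $0\le\|c^*_i-\mu\|_2-\|c^*_j-\mu\|_2 \le 2\|\widehat{\mu}-\mu\|_2$, so after multiplying by the bounded sum $\|c^*_i-\mu\|_2+\|c^*_j-\mu\|_2$ the integrand is $\lesssim \|\widehat{\mu}-\mu\|_2$; and (c) this discordant event is contained in $\{\zeta_{I(\mu)}(\mu;C^*)\le 2\|\widehat{\mu}-\mu\|_2\}$, because otherwise $c^*_{I(\mu)}$ would remain the closest centroid at $\widehat{\mu}$ as well.

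The main obstacle, and the step I would handle most carefully, is how to deploy the margin condition when the threshold $2\|\widehat{\mu}(X)-\mu(X)\|_2$ inside the indicator is itself random in $X$, while the margin condition only controls probabilities for fixed $t\le\kappa$. My plan is to replace the random threshold by the deterministic upper bound $D:=2\|\widehat{\mu}-\mu\|_\infty \lesssim \max_a\|\widehat{\mu}_a-\mu_a\|_\infty$, and then split into two cases. When $D\le\kappa$, I apply the margin condition with $t=D$ to get the contribution $\lesssim D\cdot D^\alpha \lesssim \max_a\|\widehat{\mu}_a-\mu_a\|_\infty^{\alpha+1}$. When $D>\kappa$, I use the trivial bound $1\le D/\kappa$ together with the pointwise inequality $\|\widehat{\mu}-\mu\|_2^2\le \|\widehat{\mu}-\mu\|_\infty\cdot\|\widehat{\mu}-\mu\|_2$ to obtain $\lesssim (1/\kappa)\max_a\|\widehat{\mu}_a-\mu_a\|_\infty\cdot\max_a\|\widehat{\mu}_a-\mu_a\|_{\Pb,1}$. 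Summing both cases with the earlier second-order contributions from pieces (i) and (ii) yields exactly the three terms in the stated bound.
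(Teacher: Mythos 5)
Your proposal is correct, and it arrives at the same two-part structure as the paper's proof by a different algebraic route. The paper decomposes $\Pb\left\{ \varphi_{C^*}(\widehat{\eta}) - \varphi_{C^*}(\eta) \right\}$ with Voronoi-cell indicators as in \eqref{eqn:phi_Cstar-bias-decomp}, using the identity $\Pb\{f_{C^*}(\mu)\}=\Pb\{\varphi_{C^*}(\eta)\}$ to isolate (a) a within-cell doubly robust bias, handled by Lemma \ref{lem:E-bias-psi-2} and Remark \ref{rmk:E-bias-psi-1}, and (b) the cell-switching term $\Pb\{f_{c^*_{\widehat{d}}}(\mu)-f_{c^*_{d}}(\mu)\}$. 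You instead expand coordinate-wise, use iterated expectations ($\E\{\varphi_{1,a}(\eta)\mid X\}=\mu_a$ together with $X$-measurability of the projection coordinates) and complete the square, which collapses your pieces (ii)$+$(iii) into $\Pb\bigl\{\Vert\Pi_{C^*}(\widehat{\mu})-\mu\Vert_2^2-\Vert\Pi_{C^*}(\mu)-\mu\Vert_2^2\bigr\}$ --- exactly the same switching quantity --- while the second-order inputs (Lemma \ref{lem:E-bias-psi-2} and the AIPW cross term against the bounded $X$-measurable projection coordinate) coincide with the paper's term (a). For the margin step, the paper splits pointwise on $\mathbbm{1}\{2\max_j\vert f_{c^*_j}(\widehat{\mu})-f_{c^*_j}(\mu)\vert\le\kappa\}$ and applies Markov's inequality on the complement (displays \eqref{eqn:app-phi_bias_2} and \eqref{eqn:app-phi_bias_2-2}), whereas you dichotomize on the deterministic threshold $D=2\Vert\widehat{\mu}-\mu\Vert_\infty$ versus $\kappa$ (legitimate, since $\widehat{\mu}$ is fixed given the training fold); both give the $\max_a\Vert\widehat{\mu}_a-\mu_a\Vert_\infty^{\alpha+1}$ and $\kappa^{-1}\max_a\Vert\widehat{\mu}_a-\mu_a\Vert_\infty\Vert\widehat{\mu}_a-\mu_a\Vert_{\Pb,1}$ terms. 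Your route has the merit of making nonnegativity of the switching term explicit and of phrasing the discordance containment directly in the distance margin $\zeta_{I(\mu)}(\mu;C^*)\le 2\Vert\widehat{\mu}-\mu\Vert_2$, which matches the margin condition as stated; the paper's indicator decomposition has the advantage that it recycles verbatim for the uniform-in-$C$ bound of Lemma \ref{lem:phi_C-supremum-bias} and the plug-in analogue of Lemma \ref{lem:f_C-bias-alphap1}. The only cosmetic slip is your appeal to $\Vert\widehat{\mu}-\mu\Vert_2^2\le\Vert\widehat{\mu}-\mu\Vert_\infty\Vert\widehat{\mu}-\mu\Vert_2$ in the $D>\kappa$ case; what is actually needed (and what your argument delivers) is the integrand bound $\lesssim\Vert\widehat{\mu}-\mu\Vert_2$ multiplied by $D/\kappa\ge 1$, so the conclusion stands.
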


\begin{proof}
    Letting
    \begin{align*}
        f_{c^*_j}(\mu) &= \Vert \mu - c^*_j \Vert_2^2, \\
        \varphi_{c^*_j}(\eta) &= \sum_{a} \left\{\varphi_{2,a}(\eta) - 2\varphi_{1,a}(\eta)c^*_{ja} + {c^*}_{ja}^2 \right\}, 
    \end{align*}
    and 
    \begin{align*}
        d \equiv d(\mu;C^*) &= \argmin_j f_{c^*_j}(\mu), \\
        \widehat{d} \equiv d(\widehat{\mu};C^*) &= \argmin_j f_{c^*_j}(\widehat{\mu}),
    \end{align*}    
    one may write
    \begin{align*}
        \Pb\left\{ f_{C^*}(\mu) \right\} & = \Pb\left[ \min_{j\in\{1,\ldots,k\}} f_{c^*_j}(\mu) \right] = \sum_{j=1}^k \Pb\left\{\mathbbm{1}\{d = j\} f_{c^*_j}(\mu) \right\}, \\
        \Pb\{\varphi_{C^*}(\eta)\} & = \sum_{j=1}^k \Pb\left\{\mathbbm{1}\{d = j\} \varphi_{c^*_j}(\eta) \right\}, \\
        \Pb\{\varphi_{C^*}(\widehat{\eta})\} & = \sum_{j=1}^k \Pb\left\{\mathbbm{1}\{\widehat{d} = j\} \varphi_{c^*_j}(\widehat{\eta}) \right\}.
    \end{align*}        
    Now note that
    \begin{align}
        & \Pb\left\{ \varphi_{C^*}(\widehat{\eta}) - \varphi_{C^*}(\eta) \right\} \nonumber \\
        &= \sum_{j=1}^k \left( \Pb\left[ \mathbbm{1}(\widehat{d} = j)\left\{\varphi_{c^*_j}(\widehat{\eta}) - \varphi_{c^*_j}(\eta) \right\} \right] +  \Pb\left[ \left\{\mathbbm{1}(\widehat{d} = j) - \mathbbm{1}(d = j) \right\}\varphi_{c^*_j}(\eta) \right]\right) \nonumber \\
        &= \sum_{j=1}^k \left( \Pb\left[ \mathbbm{1}(\widehat{d} = j)\left\{\varphi_{c^*_j}(\widehat{\eta}) - \varphi_{c^*_j}(\eta) \right\} \right] +  \Pb\left[ \left\{\mathbbm{1}(\widehat{d} = j) - \mathbbm{1}(d = j) \right\}f_{c^*_j}(\mu) \right]\right), \label{eqn:phi_Cstar-bias-decomp}
    \end{align}
    where the last equality follows by the fact that $\Pb\left\{ f_{C^*}(\mu) \right\} = \Pb\{\varphi_{C^*}(\eta)\}$. For the first term in the last display,
    it is immediate to see by Lemma \ref{lem:E-bias-psi-2} and Remark \ref{rmk:E-bias-psi-1} that 
    \begin{align} \label{eqn:app-phi_bias_1}
        \sum_{j=1}^k \left\vert\Pb\left[ \mathbbm{1}(\widehat{d} = j)\left\{\varphi_{c^*_j}(\widehat{\eta}) - \varphi_{c^*_j}(\eta) \right\} \right] \right\vert
        \lesssim \max_a \left\Vert  \widehat{\mu}_a - {\mu}_a \right\Vert\left( \left\Vert \widehat{\mu}_a - {\mu}_a \right\Vert + \left\Vert \widehat{\pi}_a - {\pi}_a \right\Vert \right).
    \end{align}
    Next, let us rewrite the second term in \eqref{eqn:phi_Cstar-bias-decomp} by
    \begin{align*}
        & \sum_j \Pb\left[ \left\{\mathbbm{1}(\widehat{d} = j) - \mathbbm{1}(d = j) \right\}f_{c^*_j}(\mu) \right] \\
        &= \sum_j \Pb\Bigg( \left\{\mathbbm{1}(\widehat{d} = j) - \mathbbm{1}(d = j) \right\}f_{c^*_j}(\mu) \\
        & \qquad \qquad \times \left[\mathbbm{1}\left\{2 \max_j \left\vert f_{c^*_j}(\widehat{\mu}) - f_{c^*_j}(\mu) \right\vert \leq \kappa \right\} + \mathbbm{1}\left\{2 \max_j \left\vert f_{c^*_j}(\widehat{\mu}) - f_{c^*_j}(\mu) \right\vert > \kappa \right\}\right] \Bigg).
    \end{align*}    
    By mimicking the proof of Theorem 2 of \citet{levis2023covariate}, we have that
    \begin{align} \label{eqn:app-phi_bias_2}
        & \left\vert \sum_j \Pb\left[ \left\{\mathbbm{1}(\widehat{d} = j) - \mathbbm{1}(d = j) \right\}f_{c^*_j}(\mu)\mathbbm{1}\left\{2 \max_j \left\vert f_{c^*_j}(\widehat{\mu}) - f_{c^*_j}(\mu) \right\vert \leq \kappa \right\} \right] \right\vert \nonumber \\
        &= \Pb\left[\mathbbm{1}\left\{ f_{c^*_d}(\mu) < f_{c^*_{\widehat{d}}}(\mu) \right\} \left\{f_{c^*_{\widehat{d}}}(\mu) - f_{c^*_d}(\mu)\right\}\mathbbm{1}\left\{2 \max_j \left\vert f_{c^*_j}(\widehat{\mu}) - f_{c^*_j}(\mu) \right\vert \leq \kappa \right\} \right] \nonumber \\
        &\leq \Pb\Bigg( \mathbbm{1}\left[ \min_{j \neq d} \left\{ f_{c^*_j}(\mu) - f_{c^*_d}(\mu) \right\} \leq  f_{c^*_{\widehat{d}}}(\mu) - f_{c^*_d}(\mu) + f_{c^*_d}(\widehat{\mu}) - f_{c^*_{\widehat{d}}}(\widehat{\mu}) \right] \nonumber \\
        & \qquad \quad \times \left\{ f_{c^*_{\widehat{d}}}(\mu) - f_{c^*_d}(\mu) + f_{c^*_d}(\widehat{\mu}) - f_{c^*_{\widehat{d}}}(\widehat{\mu}) \right\} \mathbbm{1}\left\{2 \max_j \left\vert f_{c^*_j}(\widehat{\mu}) - f_{c^*_j}(\mu) \right\vert \leq \kappa \right\} \Bigg) \nonumber \\
        & \leq 2 \max_j \Vert f_{c^*_j}(\widehat{\mu}) - f_{c^*_j}(\mu) \Vert_{\infty}\nonumber \\
        & \quad \times \Pb\left[\zeta_j(\mu;C^*) \leq 2 \max_j \left\vert f_{c^*_j}(\widehat{\mu}) - f_{c^*_j}(\mu) \right\vert \Bigm\vert 2 \max_j \left\vert f_{c^*_j}(\widehat{\mu}) - f_{c^*_j}(\mu) \right\vert \leq \kappa \right] \nonumber \\
        & \lesssim \max_j \Vert f_{c^*_j}(\widehat{\mu}) - f_{c^*_j}(\mu) \Vert_{\infty}^{\alpha + 1}  \nonumber \\
        & \lesssim \max_a \Vert \widehat{\mu}_a - \mu_a \Vert_{\infty}^{\alpha + 1},
    \end{align} 
    where the first inequality follows by the fact that $f_{c^*_d}(\widehat{\mu}) \geq f_{c^*_{\widehat{d}}}(\widehat{\mu})$ and $f_{c^*_{\widehat{d}}}(\mu) \geq f_{c^*_d}(\mu)$, the third by the margin condition, and the last by local Lipschitz continuity of each $f_{c^*_j}$ at $\mu$ under Assumption \ref{assumption:A1-boundedness}. 
    
    Similarly as above, we also note that 
    \begin{align} \label{eqn:app-phi_bias_2-2}
        & \left\vert \sum_j \Pb\left[ \left\{\mathbbm{1}(\widehat{d} = j) - \mathbbm{1}(d = j) \right\}f_{C^*}(\mu) \mathbbm{1}\left\{2 \max_j \left\vert f_{c^*_j}(\widehat{\mu}) - f_{c^*_j}(\mu) \right\vert > \kappa \right\} \right] \right\vert \nonumber\\
        & = \Pb\left[ \mathbbm{1}\left\{ f_{c^*_d}(\mu) < f_{c^*_{\widehat{d}}}(\mu) \right\} \left\{f_{c^*_{\widehat{d}}}(\mu) - f_{c^*_d}(\mu) \right\} \mathbbm{1}\left\{2 \max_j \left\vert f_{c^*_j}(\widehat{\mu}) - f_{c^*_j}(\mu) \right\vert > \kappa \right\} \right] \nonumber\\
        & \leq \Pb\left[ \mathbbm{1}\left\{ f_{c^*_d}(\mu) < f_{c^*_{\widehat{d}}}(\mu) \right\} \left\{f_{c^*_{\widehat{d}}}(\mu) - f_{c^*_d}(\mu) + f_{c^*_d}(\widehat{\mu}) - f_{c^*_{\widehat{d}}}(\widehat{\mu}) \right\} \mathbbm{1}\left\{2 \max_j \left\vert f_{c^*_j}(\widehat{\mu}) - f_{c^*_j}(\mu) \right\vert > \kappa \right\} \right] \nonumber\\
        & \leq 2\max_j \Vert f_{c^*_j}(\widehat{\mu}) - f_{c^*_j}(\mu) \Vert_{\infty} \Pb\left\{ \max_j \left\vert f_{c^*_j}(\widehat{\mu}) - f_{c^*_j}(\mu) \right\vert > \kappa/2 \right\} \nonumber\\
        & \leq \frac{4}{\kappa} \max_j \Vert f_{c^*_j}(\widehat{\mu}) - f_{c^*_j}(\mu) \Vert_{\infty} \max_j \Pb\left\vert f_{c^*_j}(\widehat{\mu}) - f_{c^*_j}(\mu) \right\vert \nonumber\\
        & \lesssim \frac{1}{\kappa}\max_a \Vert \widehat{\mu}_a - \mu_a \Vert_{\infty} \left\Vert \widehat{\mu}_a - \mu_a \right\Vert_{\Pb,1},
    \end{align}
    which the first inequality follow by Hölder's inequality, the second by Markov's inequality. Putting these together, we finally obtain that
    \begin{align*} 
        & \left\vert \Pb\left\{ \varphi_{C^*}(\widehat{\eta}) - \varphi_{C^*}(\eta) \right\} \right\vert  \\
        & \lesssim \max_a \left\Vert  \widehat{\mu}_a - {\mu}_a \right\Vert\left( \left\Vert \widehat{\mu}_a - {\mu}_a \right\Vert + \left\Vert \widehat{\pi}_a - {\pi}_a \right\Vert \right) + \max_a \Vert \widehat{\mu}_a - \mu_a \Vert_{\infty}^{\alpha + 1}  \\
        & \quad + \frac{1}{\kappa}\max_a \Vert \widehat{\mu}_a - \mu_a \Vert_{\infty} \left\Vert \widehat{\mu}_a - \mu_a \right\Vert_{\Pb,1}. 
    \end{align*}    
\end{proof} 

\begin{remark}[Proof of Lemma \ref{lem:f_C-bias-alphap1}] \label{rmk:proof-of-f_C-bias-alphap1}
    The proof of Lemma \ref{lem:f_C-bias-alphap1} parallels the proof of Lemma \ref{lem:phi-bias-alphap1} provided above. Indeed, since we have the counterpart of \eqref{eqn:phi_Cstar-bias-decomp} as
    \begin{align*}         
        & \Pb\left\{ f_{C^*}(\widehat{\mu}) - f_{C^*}(\mu) \right\}\\        
        &= \sum_{j=1}^k \left( \Pb\left[ \mathbbm{1}(\widehat{d} = j)\left\{f_{c^*_j}(\widehat{\mu}) - f_{c^*_j}(\mu) \right\} \right] +  \Pb\left[ \left\{\mathbbm{1}(\widehat{d} = j) - \mathbbm{1}(d = j) \right\}f_{c^*_j}(\mu) \right]\right),    
    \end{align*}
    the only difference is to replace \eqref{eqn:app-phi_bias_1} with
    \begin{align*}
        \sum_{j=1}^k \left\vert\Pb\left[ \mathbbm{1}(\widehat{d} = j)\left\{f_{c^*_j}(\widehat{\mu}) - f_{c^*_j}(\mu) \right\} \right] \right\vert
        \lesssim \max_a \left\Vert  \widehat{\mu}_a - {\mu}_a \right\Vert_{\Pb,1},
    \end{align*}
    which gives the result.        
\end{remark}

Using the same logic as in the proof of Lemma \ref{lem:phi-bias-alphap1}, we may obtain the following uniform bound.
\begin{lemma} \label{lem:phi_C-supremum-bias}
    For any $C \in \mathcal{C}_k$, under Assumptions \ref{assumption:A1-boundedness}, \ref{assumption:A4-pihat-boundedness}, we have
    \begin{align*}
        \underset{C \in \mathcal{C}_k}{\sup} \left\vert \Pb\left\{ \varphi_{C}(\widehat{\eta}) - \varphi_{C}(\eta) \right\} \right\vert &\lesssim \max_a \Vert \widehat{\mu}_a - \mu_a \Vert_{\Pb,1} + \max_a \left\Vert  \widehat{\mu}_a - {\mu}_a \right\Vert\left( \left\Vert \widehat{\mu}_a - {\mu}_a \right\Vert + \left\Vert \widehat{\pi}_a - {\pi}_a \right\Vert \right).        
    \end{align*}
\end{lemma}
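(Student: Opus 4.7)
The plan is to mimic the decomposition used in the proof of Lemma \ref{lem:phi-bias-alphap1}, but to replace the two margin-based displays \eqref{eqn:app-phi_bias_2} and \eqref{eqn:app-phi_bias_2-2} by a single crude bound that holds uniformly over $C\in\mathcal{C}_k$ without requiring any local control around $\partial C$. First, for an arbitrary fixed $C\in\mathcal{C}_k$, I would set $d\equiv\argmin_j f_{c_j}(\mu)$ and $\hat{d}\equiv\argmin_j f_{c_j}(\hat{\mu})$ and, using $\E\{\varphi_{c_j}(\eta)\mid X\}=f_{c_j}(\mu)$, reproduce the decomposition
\begin{align*}
\Pb\{\varphi_{C}(\widehat{\eta})-\varphi_{C}(\eta)\}
&=\sum_{j=1}^{k}\Pb\bigl[\mathbbm{1}(\hat{d}=j)\{\varphi_{c_j}(\widehat{\eta})-\varphi_{c_j}(\eta)\}\bigr]\\
&\quad+\sum_{j=1}^{k}\Pb\bigl[\{\mathbbm{1}(\hat{d}=j)-\mathbbm{1}(d=j)\}\,f_{c_j}(\mu)\bigr],
\end{align*}
now at a generic $C$ rather than at $C^{*}$.

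For the first sum, Lemma \ref{lem:E-bias-psi-2} and Remark \ref{rmk:E-bias-psi-1} yield $|\Pb\{\varphi_{c_j}(\widehat{\eta})-\varphi_{c_j}(\eta)\}|\lesssim \sum_a \|\widehat{\mu}_a-\mu_a\|(\|\widehat{\mu}_a-\mu_a\|+\|\widehat{\pi}_a-\pi_a\|)$, where the boundedness of $c_{j,a}$ (which follows from Assumption \ref{assumption:A1-boundedness} together with $C\in\mathcal{C}_k$) absorbs the constant that multiplies the $\varphi_{1,a}$ bias. Dropping $\mathbbm{1}(\hat{d}=j)\le 1$ and summing over $j$ produces the second term of the stated upper bound, uniformly in $C$.

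The second sum is precisely where the margin condition was previously used, and this is where I would replace it by the telescoping identity
\begin{align*}
\sum_{j=1}^{k}\{\mathbbm{1}(\hat{d}=j)-\mathbbm{1}(d=j)\}\,f_{c_j}(\mu)=f_{c_{\hat{d}}}(\mu)-f_{c_d}(\mu)\ge 0,
\end{align*}
nonnegativity being a direct consequence of the definition of $d$. Adding and subtracting $f_{c_{\hat{d}}}(\widehat{\mu})$ and $f_{c_d}(\widehat{\mu})$ and invoking the optimality $f_{c_{\hat{d}}}(\widehat{\mu})-f_{c_d}(\widehat{\mu})\le 0$ collapses the expression to $f_{c_{\hat{d}}}(\mu)-f_{c_d}(\mu)\le 2\max_j |f_{c_j}(\widehat{\mu})-f_{c_j}(\mu)|$. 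Under Assumption \ref{assumption:A1-boundedness}, the map $\mu\mapsto\|\mu-c_j\|_2^{2}$ is locally Lipschitz with a constant bounded uniformly in $c_j$ (essentially the computation already carried out in \eqref{eqn:phi-mu-hat-bias-1}), so taking $\Pb$-expectation gives a bound of order $\max_a \|\widehat{\mu}_a-\mu_a\|_{\Pb,1}$, again uniformly in $C$. Combining the two sums and taking $\sup_{C\in\mathcal{C}_k}$ completes the proof.

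The main obstacle is purely conceptual rather than computational: one must recognize that, without the margin condition, the sharp $\|\widehat{\mu}-\mu\|_{\infty}^{\alpha+1}$ rate from the $C^{*}$-specific proof of Lemma \ref{lem:phi-bias-alphap1} cannot persist, and that the best uniform-in-$C$ control available for the projection-shift term is the coarse $L_{1}$ bound delivered by the telescoping identity above. Everything else reduces to the already-established nuisance-bias lemmas, and because none of the resulting constants depend on $C$, uniformity over $\mathcal{C}_k$ is automatic.
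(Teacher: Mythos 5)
Your proposal is correct and follows essentially the same route as the paper: the identical decomposition at a generic $C$ into the nuisance-bias term (handled via Lemma \ref{lem:E-bias-psi-2} and Remark \ref{rmk:E-bias-psi-1}) and the projection-shift term, with the latter bounded by the telescoping/optimality argument $0\le f_{c_{\widehat{d}}}(\mu)-f_{c_d}(\mu)\le f_{c_{\widehat{d}}}(\mu)-f_{c_d}(\mu)+f_{c_d}(\widehat{\mu})-f_{c_{\widehat{d}}}(\widehat{\mu})\le 2\max_j\vert f_{c_j}(\widehat{\mu})-f_{c_j}(\mu)\vert$ and Lipschitz continuity, yielding the $L_1$ rate uniformly in $C$. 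Your observation that the margin-based $\Vert\widehat{\mu}-\mu\Vert_\infty^{\alpha+1}$ refinement must be dropped in favor of the coarse $L_1$ bound is exactly the point of the paper's proof as well.
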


\begin{proof}
    Notice that \eqref{eqn:phi_Cstar-bias-decomp} and \eqref{eqn:app-phi_bias_1} hold for any $C \in \mathcal{C}_k$, i.e.,
     \begin{align*}
        & \Pb\left\{ \varphi_{C}(\widehat{\eta}) - \varphi_{C}(\eta) \right\} \\        
        &= \sum_{j=1}^k \left( \Pb\left[ \mathbbm{1}(\widehat{d} = j)\left\{\varphi_{c_j}(\widehat{\eta}) - \varphi_{c_j}(\eta) \right\} \right] +  \Pb\left[ \left\{\mathbbm{1}(\widehat{d} = j) - \mathbbm{1}(d = j) \right\}f_{c_j}(\mu) \right]\right), 
    \end{align*}
    where 
    \begin{align*} 
        \sum_{j=1}^k \left\vert\Pb\left[ \mathbbm{1}(\widehat{d} = j)\left\{\varphi_{c_j}(\widehat{\eta}) - \varphi_{c_j}(\eta) \right\} \right] \right\vert
        \lesssim \max_a \left\Vert  \widehat{\mu}_a - {\mu}_a \right\Vert\left( \left\Vert \widehat{\mu}_a - {\mu}_a \right\Vert + \left\Vert \widehat{\pi}_a - {\pi}_a \right\Vert \right).
    \end{align*}
    Further, proceeding similarly to \eqref{eqn:app-phi_bias_2}, we may get
    \begin{align*}
        & \left\vert \sum_j \Pb\left[ \left\{\mathbbm{1}(\widehat{d} = j) - \mathbbm{1}(d = j) \right\}f_{c_j}(\mu) \right] \right\vert\\
        &= \Pb\left[ \mathbbm{1}\left\{ f_{c_d}(\mu) < f_{c_{\widehat{d}}}(\mu) \right\} \left\{f_{c_{\widehat{d}}}(\mu) - f_{c_d}(\mu) + f_{c_d}(\widehat{\mu}) - f_{c_{\widehat{d}}}(\widehat{\mu}) \right\} \right] \\
        & \leq 2\max_j \Vert f_{c_j}(\widehat{\mu}) - f_{c_j}(\mu) \Vert_{\Pb,1} \\        
        & \lesssim \max_a \Vert \widehat{\mu}_a - \mu_a \Vert_{\Pb,1},
    \end{align*}
    which follows by Hölder's inequality and local Lipchitz continuity of $f_{c_j}$ at $\mu$.
    Hence, we conclude that for any $C \in \mathcal{C}_k$,
    \begin{align*}
        \left\vert \Pb\left\{ \varphi_{C}(\widehat{\eta}) - \varphi_{C}(\eta) \right\} \right\vert &\lesssim \max_a \Vert \widehat{\mu}_a - \mu_a \Vert_{\Pb,1} + \max_a \left\Vert  \widehat{\mu}_a - {\mu}_a \right\Vert\left( \left\Vert \widehat{\mu}_a - {\mu}_a \right\Vert + \left\Vert \widehat{\pi}_a - {\pi}_a \right\Vert \right).
    \end{align*}
    The result arises from the fact that the RHS is independent of $C$.
\end{proof}

\begin{remark} [Proof of Lemma \ref{lem:f_C-supremum-bias}] \label{rmk:proof-of-f_C-supremum-bias}
    The proof of Lemma \ref{lem:f_C-supremum-bias} parallels that of Lemma \ref{lem:phi_C-supremum-bias} given above. Indeed, for $\left\vert \Pb\left\{ f_{C}(\widehat{\mu}) - f_{C}(\mu) \right\} \right\vert$, both
    \begin{align*}
    & \left\vert\Pb\left[ \left\{\mathbbm{1}(\widehat{d} = j) - \mathbbm{1}(d = j) \right\}f_{c_j}(\mu) \right]\right\vert,\\
    & \left\vert\Pb\left[ \mathbbm{1}(\widehat{d} = j)\left\{f_{c_j}(\widehat{\eta}) - f_{c_j}(\eta) \right\} \right] \right\vert
    \end{align*}    
    are $O\left(\max_a \Vert \widehat{\mu}_a - \mu_a \Vert_{\Pb,1}\right)$.
\end{remark}

We are now in a position to prove Lemma~\ref{lem:eif-of-risk}.

\begin{proof}[Proof of Lemma \ref{lem:eif-of-risk}]
Recall that $\phi_{C^*}(z;\Pb) = \varphi_{C^*}(z;\Pb) - \int \varphi_{C^*}(z;\Pb)d\Pb$ and $R(C^*) = \Pb\{\varphi_{C^*}(\eta)\}   \equiv \int \varphi_{C^*}(z;\Pb) d\Pb$. For two distributions $\bar{\Pb}, \Pb$, the second-order remainder term in the von Mises expansion is given by
\begin{equation} \label{eqn:von-mises-expansion-1}
\begin{aligned}
    R_2(\bar{\Pb}, \Pb) &= \bar{R}(C^*) - R(C^*) + \int f_{C^*}(z;\bar{\Pb}) d\Pb \\
    &= \int \left\{\varphi_{C^*}(z;\Pb) - \varphi_{C^*}(z;\bar{\Pb}) \right\} d\Pb.
\end{aligned}
\end{equation}
By Lemma \ref{lem:phi-bias-alphap1},
the last term in \eqref{eqn:von-mises-expansion-1} is further bounded as
\begin{equation*} 
    \begin{aligned}
    \left\vert \Pb\left\{ \varphi_{C^*}(\bar{\eta}) - \varphi_{C^*}(\eta) \right\} \right\vert 
    & \lesssim \max_a \left\Vert  \bar{\mu}_a - {\mu}_a \right\Vert\left( \left\Vert \bar{\mu}_a - {\mu}_a \right\Vert + \left\Vert \bar{\pi}_a - {\pi}_a \right\Vert \right) + \max_a \Vert \bar{\mu}_a - \mu_a \Vert_{\infty}^{\alpha + 1} \\
        & \quad + \frac{1}{\kappa}\max_a \left(\Vert \bar{\mu}_a - \mu_a \Vert_{\infty} \left\Vert \bar{\mu}_a - \mu_a \right\Vert_{\Pb,1}\right).
    \end{aligned}
\end{equation*}

Hence for a submodel $\Pb_\varepsilon$, we have
\[
    \frac{d}{d\varepsilon}R_2(\Pb, \Pb_\varepsilon) \Bigm\vert_{\varepsilon=0} = 0,
\]
by virtue of the fact that the remainder $R_2(\Pb, \Pb_\varepsilon)$ essentially consists of only second-order products of errors between $\Pb, \Pb_\varepsilon$. Since there is at most one efficient influence function in nonparametric models, now we can apply Lemma 2 of \citet[][]{kennedy2023semiparametric} and conclude that $\phi_{C^*}$ is the efficient influence function.
\end{proof}
\vspace*{.2in}

\subsection{Proof of Lemma \ref{lem:root-n-CAN-Rhat}}

\begin{proof}[Proof of Lemma \ref{lem:root-n-CAN-Rhat}]
For any $C^* \in \mathcal{C}_k^*$, one may write
\begin{align*} 
    \widehat{R}(C^*) &= \sum_{b=1}^K \Pn\left\{ \varphi_{C^*}(\widehat{\eta}_{-b}) \mathbbm{1}(B=b)\right\} \nonumber \\
    R(C^*) &= \E(\varphi_{C^*}) = \sum_{b=1}^K \Pb\left\{ \varphi_{C^*}(\eta)\mathbbm{1}(B=b) \right\},
\end{align*}
where we drop the dependence on $Z$ in $\varphi_{C^*}$ for simplicity. Then consider the following decomposition:
\begin{align*}
    \sqrt{n}\left\{ \widehat{R}(C^*) - R(C^*) \right\} &= \sum_{b=1}^K \underbrace{\Gn\left[\left\{\varphi_{C^*}(\widehat{\eta}_{-b}) - \varphi_{C^*}(\eta)  \right\} \mathbbm{1}(B=b) \right] }_\text{\clap{(i)~}} \\
    & \quad + \sqrt{n}\sum_{b=1}^K\underbrace{\Pb\left[ \left\{ \varphi_{C^*}(\widehat{\eta}_{-b}) - \varphi_{C^*}(\eta) \right\}\mathbbm{1}(B=b) \right]}_\text{\clap{(ii)~}} \\
    & \quad + \Gn\left\{ \varphi_{C^*}(\eta) \right\}.
\end{align*}

It suffices to show that the terms $(i)$ and $(ii)$ are negligible, as the last term converges to $N\left(0, \var\left(\varphi_{C^*} \right) \right)$ by the central limit theorem.

\textbf{(i)} Noting $n \lesssim n/K$ with fixed $K$, we have
\begin{align*}
    & \left\Vert \left\{ \varphi_{C^*}(\widehat{\eta}_{-b}) - \varphi_{C^*}(\eta) \right\} \mathbbm{1}(B=b)\right\Vert 
    \lesssim \left\Vert \varphi_{C^*}(\widehat{\eta}) - \varphi_{C^*}(\eta) \right\Vert \\
    & \quad \leq \sum_a \left\Vert \varphi_{2,a}(\widehat{\eta}) - \varphi_{2,a}(\eta) + 2\Pi_{C^*,a}(\mu)(\varphi_{1,a}(\eta) - \varphi_{1,a}(\widehat{\eta})) + \left\{\widehat{\pi}_a + {\pi}_a - 2\varphi_{1,a}(\widehat{\eta})\right\}(\widehat{\pi}_a - {\pi}_a) \right\Vert \\
    & \quad \lesssim \sum_a\left( \left\Vert \varphi_{2,a}(\widehat{\eta}) - \varphi_{2,a}(\eta) \right\Vert + \left\Vert \varphi_{1,a}(\widehat{\eta}) - \varphi_{1,a}(\eta) \right\Vert + \left\Vert \Pi_{C^*,a}(\widehat{\mu}) - \Pi_{C^*,a}(\mu) \right\Vert \right).
\end{align*}

By adding and subtracting terms, it is straightforward to show
\begin{align*}
    &\left\Vert \varphi_{2,a}(\widehat{\eta}) - \varphi_{2,a}(\eta) \right\Vert \\ 
    & \quad \leq \left\Vert \widehat{\mu}_a\frac{\mathbbm{1}(A=a)}{\widehat{\pi}_a}\left(\mu_A- \widehat{\mu}_A\right) + \mathbbm{1}(A=a)(Y-\mu_A)\left( \frac{\widehat{\mu}_a}{\widehat{\pi}_a} - \frac{\mu_a}{\widehat{\pi}_a} + \frac{\mu_a}{\widehat{\pi}_a} - \frac{\mu_a}{\pi_a} \right) + (\widehat{\mu}_a - \mu_a)(\widehat{\pi}_a - \pi_a) \right\Vert \\
    & \quad \lesssim \left\Vert \widehat{\mu}_a - \mu_a \right\Vert + \left\Vert \widehat{\pi}_a - \pi_a \right\Vert.
\end{align*}
Similarly, one may get
\[
\left\Vert \varphi_{1,a}(\widehat{\eta}) - \varphi_{1,a}(\eta) \right\Vert \lesssim \left\Vert \widehat{\mu}_a - \mu_a \right\Vert + \left\Vert \widehat{\pi}_a - \pi_a \right\Vert.
\]

Further, we showed in \eqref{eqn:Pi_Cstar-muhat-sqr-bias} that $\left\Vert \Pi_{C^*,a}(\widehat{\mu}) - \Pi_{C^*,a}(\mu) \right\Vert = o_\Pb(1)$ if $\max_a \Vert \widehat{\mu}_a - \mu_a \Vert_{\infty} = o_\Pb(1)$.

Putting the three pieces together, we conclude that $\left\Vert \varphi_{C^*}(\widehat{\eta}) - \varphi_{C^*}(\eta) \right\Vert =  o_\Pb(1)$ under the consistency condition in Assumption \ref{assumption:A5-np-consistency-condition}.
Hence, we conclude 
\begin{align*}
    \Gn\left[\left\{\varphi_{C^*}(\widehat{\eta}_{-b}) - \varphi_{C^*}(\eta)  \right\} \mathbbm{1}(B=b) \right] = o_\Pb\left(\frac{1}{\sqrt{n}}\right),
\end{align*}
which follows by the sample splitting lemma \cite[][Lemma 2]{kennedy2018sharp}.

\textbf{(ii)} Noting that
\begin{equation*} 
    \begin{aligned}
    \left\vert \Pb\left[ \left\{ \varphi_{C^*}(\widehat{\eta}_{-b}) - \varphi_{C^*}(\eta) \right\}\mathbbm{1}(B=b) \right] \right\vert &\lesssim \left\vert \Pb\left\{ \varphi_{C^*}(\widehat{\eta}) - \varphi_{C^*}(\eta) \right\} \right\vert,
    \end{aligned}
\end{equation*}
by Lemma \ref{lem:phi-bias-alphap1} we get
\begin{align*}
    \left\vert \Pb\left\{ \varphi_{C^*}(\widehat{\eta}) - \varphi_{C^*}(\eta) \right\} \right\vert & \lesssim \max_a \left\Vert  \widehat{\mu}_a - {\mu}_a \right\Vert\left( \left\Vert \widehat{\mu}_a - {\mu}_a \right\Vert + \left\Vert \widehat{\pi}_a - {\pi}_a \right\Vert \right) + \max_a \Vert \widehat{\mu}_a - \mu_a \Vert_{\infty}^{\alpha + 1} \\
    & \quad + \frac{1}{\kappa}\max_a \left(\Vert \widehat{\mu}_a - \mu_a \Vert_{\infty} \left\Vert \widehat{\mu}_a - \mu_a \right\Vert_{\Pb,1}\right).
\end{align*}
which is $o_\Pb(\frac{1}{\sqrt{n}})$ by the given nonparametric condition $R_{2,n} = o_\Pb(n^{-1/2})$. 

Finally, the desired result follows by Slutsky's theorem.
\end{proof}

\vspace*{.2in}

\subsection{Proof of Corollary \ref{cor:consistency-of-Chat-eif-estimator}}
\begin{proof}[Proof of Corollary \ref{cor:consistency-of-Chat-eif-estimator}]
    The proof follows the exact same logic as that of Theorem \ref{thm:plug-in-consistency}. It boils down to show $\underset{C\in\mathcal{C}_{k}}{\sup} \left\vert  \Pb \left\{ \varphi_{C}(\widehat{\eta}) - \varphi_{C}(\eta) \right\} \right\vert = o_\Pb(1)$. This follows under the consistency condition in Assumption \ref{assumption:A5-np-consistency-condition} since    
    \begin{align*}
        \underset{C \in \mathcal{C}_k}{\sup} \left\vert \Pb\left\{ \varphi_{C}(\widehat{\eta}) - \varphi_{C}(\eta) \right\} \right\vert \lesssim \max_a \Vert \widehat{\mu}_a - \mu_a \Vert_{\infty}
    \end{align*}
    due to Lemma \ref{lem:phi_C-supremum-bias}. 
\end{proof} 
\vspace*{.2in}

\subsection{Proof of Theorem \ref{thm:Chat-root-nCAN}} \label{app-sec:proof-thm-Chat-root-nCAN}

First, we introduce technical lemmas showing how sample perturbations translate into stability of the empirical objective. 

\begin{lemma}[Primary-sample perturbation]
\label{lem:primary-perturbation-eif-objective}
Suppose that the nuisance estimator $\widehat\eta$ is fitted on an auxiliary block
$Z_{n+1:N}=\{Z_{n+1},\dots,Z_N\}$, independent of the primary sample
$Z_{1:n}=\{Z_1,\dots,Z_n\}$. For each $r \in \{1,\ldots,n\}$, let
\[
Z_i^{(r)} =
\begin{cases}
Z_i, & i \neq r,\\
Z_r', & i=r,
\end{cases}
\]
where $Z_r'$ is an independent copy of $Z_r$, and let
\[
\mathbb P_n^{(r)} f = \frac{1}{n}\sum_{i=1}^n f\!\left(Z_i^{(r)}\right)
\]
denote the empirical measure based on the modified primary sample
$\{Z_1,\ldots,Z_{r-1},Z_r',Z_{r+1},\ldots,Z_n\}$. Define
\[
\widehat R_n(C;\widehat\eta)
=
\mathbb P_n \varphi_C(\,\cdot\,;\widehat\eta),
\qquad
\widehat R_n^{(r)}(C;\widehat\eta)
=
\mathbb P_n^{(r)} \varphi_C(\,\cdot\,;\widehat\eta).
\]
Then, under Assumptions
\ref{assumption:A1-boundedness}
and \ref{assumption:A4-pihat-boundedness},
\[
\sup_{C \in \mathcal C_k}
\left|
\widehat R_n(C;\widehat\eta)
-
\widehat R_n^{(r)}(C;\widehat\eta)
\right|
=
O_\Pb(n^{-1}),
\]
uniformly in $r \in \{1,\ldots,n\}$.
\end{lemma}

\begin{proof}
Since $\widehat\eta$ is fitted on the auxiliary block, replacing $Z_r$ with
$r \le n$ does not affect $\widehat\eta$. Hence, for any $C \in \mathcal C_k$,
\[
\widehat R_n(C;\widehat\eta)
-
\widehat R_n^{(r)}(C;\widehat\eta)
=
\frac{1}{n}
\left\{
\varphi_C(Z_r;\widehat\eta)
-
\varphi_C(Z_r';\widehat\eta)
\right\}.
\]
Therefore,
\[
\sup_{C \in \mathcal C_k}
\left|
\widehat R_n(C;\widehat\eta)
-
\widehat R_n^{(r)}(C;\widehat\eta)
\right|
\le
\frac{1}{n}
\sup_{C \in \mathcal C_k}
\left|
\varphi_C(Z_r;\widehat\eta)
-
\varphi_C(Z_r';\widehat\eta)
\right|.
\]
Under Assumptions \ref{assumption:A1-boundedness} and
\ref{assumption:A4-pihat-boundedness}, the criterion
$\varphi_C(z;\eta)$ is uniformly bounded over admissible $z$, $\eta$, and
$C \in \mathcal C_k$. Hence the right-hand side is $O_\Pb(n^{-1})$,
uniformly in $r$.
\end{proof}

Next, we formalize a simple geometric fact: a change in Voronoi label can occur only near the Voronoi boundary.

\begin{lemma}[Voronoi label stability under small codebook perturbations]
\label{lem:voronoi-label-stability}
Let $C=\{c_1,\ldots,c_k\}$ and $C'=\{c_1',\ldots,c_k'\}$ be two codebooks in $\R^p$
such that
\[
\max_{1\le j\le k}\|c_j-c_j'\|_2 \le \delta.
\]
Then for any $u \in \R^p$, if
\[
\ell_C(u)\neq \ell_{C'}(u),
\]
it must hold that
\[
\operatorname{dist}(u,\partial C)\le 2\delta.
\]
\end{lemma}

\begin{proof}
Suppose $\operatorname{dist}(u,\partial C)>2\delta$, and let $j=\ell_C(u)$.
Then for every $m\neq j$,
\[
\|u-c_m\|_2-\|u-c_j\|_2 > 2\delta.
\]
Using the triangle inequality and $\|c_j-c_j'\|_2,\|c_m-c_m'\|_2\le\delta$,
\[
\|u-c_m'\|_2
\ge \|u-c_m\|_2-\delta
>
\|u-c_j\|_2+\delta
\ge
\|u-c_j'\|_2.
\]
Hence $\ell_{C'}(u)=j=\ell_C(u)$, a contradiction.
\end{proof}

For the subsequent proof, we adopt a sample-splitting representation of cross-fitting and restate Assumption~\ref{assumption:A7-hard-fitted-margin} in an equivalent but more transparent form. Specifically, we condition on one training fold, so that the corresponding validation fold serves as the primary sample and its complement serves as the auxiliary sample, as stated below.

\begin{assumptionp}{A7$^\prime$}
\label{assumption:A7'-hard-fitted-margin}
Let $\Pn$ be the empirical measure based on the primary sample
$Z_{1:n}=\{Z_1,\dots,Z_n\}$, and let $\Pb$ denote the population distribution.
Suppose that the nuisance estimator $\widehat\eta$ is fitted on an auxiliary block
$Z_{n+1:N}=\{Z_{n+1},\dots,Z_N\}$, independent of $Z_{1:n}$. Then there exists a sequence
$\rho_n > 0$ such that $\rho_n^{-1}=o(n)$ and, with probability tending to one, for every
codebook $C$ lying on the line segment between $\widehat C$ and a leave-one-out perturbation
of $\widehat C$ obtained by replacing a single primary-sample observation, and for every
$i \in \{1,\dots,n\}$,
\[
\operatorname{dist}\!\bigl(\widehat\mu(X_i),\partial C\bigr)\ge \rho_n.
\]
Here $\partial C$ denotes the union of the Voronoi boundaries induced by $C$.
\end{assumptionp}

In what follows, we use Assumption~\ref{assumption:A7'-hard-fitted-margin} as the proof specific version of Assumption~\ref{assumption:A7-hard-fitted-margin}, as they are essentially equivalent to each other. The next lemma shows that the estimated codebook is stable under leave one out perturbations of the primary sample under this segmentwise separation condition.

\begin{lemma}
\label{lem:primary-codebook-stability}
Let $\Pn$ be the empirical measure based on the primary sample
$Z_{1:n}=\{Z_1,\dots,Z_n\}$, and let the nuisance estimator $\widehat\eta$ be fitted on an auxiliary block
$Z_{n+1:N}=\{Z_{n+1},\dots,Z_N\}$, independent of $Z_{1:n}$, with $N \asymp n$.
For each $r \in \{1,\ldots,n\}$, let $Z_r'$ be an independent copy of $Z_r$, let
$\Pn^{(r)}$ denote the empirical measure obtained by replacing $Z_r$ by $Z_r'$, and let
$\widehat C^{(r)}$ denote the empirical minimizer recomputed from the perturbed primary sample,
keeping the auxiliary block fixed. Assume Assumptions
\ref{assumption:A1-boundedness},
\ref{assumption:A3-uniqueness-of-codebook},
\ref{assumption:A4-pihat-boundedness},
\ref{assumption:A5-np-consistency-condition}, and
\ref{assumption:A7-hard-fitted-margin}. Then
\[
\widetilde\delta_n
:=
\max_{1 \le r \le n}
\min_{\sigma \in \mathfrak S_k}
\max_{1 \le j \le k}
\bigl\|
\widehat c_j-\widehat c_{\sigma(j)}^{(r)}
\bigr\|_2
=
O_\Pb(n^{-1})
=
O_\Pb(N^{-1}).
\]
\end{lemma}

\begin{proof}
Since $\widehat\eta$ is fitted on the auxiliary block, it is unchanged under replacement of
a single primary-sample observation. Hence, by Lemma
\ref{lem:primary-perturbation-eif-objective}, and by the same argument with $\varphi_C$
replaced by $\upvarphi_C$,
\[
\sup_{C \in \mathcal C_k}
\left|
\widehat R_n(C;\widehat\eta)-\widehat R_n^{(r)}(C;\widehat\eta)
\right|
=
O_\Pb(n^{-1}),
\quad
\sup_{C \in \mathcal C_k}
\left\|
\Psi_n(C;\widehat\eta)-\Psi_n^{(r)}(C;\widehat\eta)
\right\|_2
=
O_\Pb(n^{-1}),
\]
uniformly in $r \in \{1,\ldots,n\}$, where
\[
\Psi_n(C;\bar\eta)=\Pn\upvarphi_C(\,\cdot\,;\bar\eta),
\qquad
\Psi_n^{(r)}(C;\bar\eta)=\Pn^{(r)}\upvarphi_C(\,\cdot\,;\bar\eta).
\]

Since $\widehat C$ and $\widehat C^{(r)}$ are local minimizers of their respective
empirical criteria, they satisfy the approximate first-order empirical moment conditions
\[
\Psi_n(\widehat C;\widehat\eta)=o_\Pb(n^{-1/2}),
\qquad
\Psi_n^{(r)}(\widehat C^{(r)};\widehat\eta)=o_\Pb(n^{-1/2}).
\]
Subtracting these two displays gives
\[
\Psi_n(\widehat C;\widehat\eta)-\Psi_n^{(r)}(\widehat C^{(r)};\widehat\eta)
=
o_\Pb(n^{-1/2}).
\]

By Corollary \ref{cor:consistency-of-Chat-eif-estimator}, $\widehat C \xrightarrow[]{p} C^*$.
The same argument applies to each $\widehat C^{(r)}$, since the perturbed empirical
criterion differs from the original one by only a one-observation perturbation. Hence
\[
\max_{1\le r\le n}
\min_{\sigma\in\mathfrak S_k}
\|\widehat C^{(r)}-\sigma(C^*)\|_1
=
o_\Pb(1).
\]

By assumption, with probability tending to one uniformly in $r$, no fitted point
$\widehat\mu(X_i)$ lies on a Voronoi boundary for any codebook along the segment joining
$\widehat C$ and $\widehat C^{(r)}$. On this event, the Voronoi labels remain fixed along
that segment, and therefore, by \eqref{eqn:varphi-derivative}, the map
\[
C \mapsto \Psi_n(C;\widehat\eta)
\]
is affine on the segment. Thus
\[
\Psi_n(\widehat C;\widehat\eta)-\Psi_n(\widehat C^{(r)};\widehat\eta)
=
\widehat M_{n,r}\,(\widehat C-\widehat C^{(r)}),
\]
where $\widehat M_{n,r}$ is the corresponding block-diagonal empirical derivative matrix.

Moreover, since $M(C^*,\eta)$ is nonsingular, with eigenvalues bounded away from zero by
the condition $p_j^*>0$, and since $\widehat C$ and $\widehat C^{(r)}$ both converge to
$C^*$ uniformly in $r$, while the corresponding empirical cell proportions converge
uniformly to $p_j^*$, we have
\[
\sup_{1\le r\le n}\|\widehat M_{n,r}-M(C^*,\eta)\| \xrightarrow[]{p} 0.
\]
Therefore, the smallest eigenvalues of $\widehat M_{n,r}$ are bounded away from zero with
probability tending to one, and so
\[
\|\widehat M_{n,r}^{-1}\|=O_\Pb(1)
\]
uniformly in $r$.

Hence, on this event, we get the first-order expansion
\[
\widehat C-\widehat C^{(r)}
=
-\widehat M_{n,r}^{-1}
\Bigl[
\Psi_n(\widehat C^{(r)};\widehat\eta)
-
\Psi_n^{(r)}(\widehat C^{(r)};\widehat\eta)
\Bigr]
+
o_\Pb(n^{-1}),
\]
and therefore
\[
\max_{1\le r\le n}
\|\widehat C-\widehat C^{(r)}\|_2
=
O_\Pb(n^{-1}).
\]
Since $N \asymp n$, we conclude that
\[
\widetilde\delta_n
=
\max_{1 \le r \le n}
\min_{\sigma \in \mathfrak S_k}
\max_{1 \le j \le k}
\bigl\|
\widehat c_j-\widehat c_{\sigma(j)}^{(r)}
\bigr\|_2
=
O_\Pb(n^{-1})
=
O_\Pb(N^{-1}).
\]
\end{proof}

Similarly, we adopt a sample-splitting representation of cross-fitting and restate Assumption~\ref{assumption:A8-fitted-boundary-margin} in an equivalent form.

\begin{assumptionp}{A8$^\prime$}
\label{assumption:A8'-fitted-boundary-margin}
Let $\Pn$ be the empirical measure based on the primary sample
$Z_{1:n}=\{Z_1,\dots,Z_n\}$, and let $\Pb$ denote the population distribution.
Suppose that the nuisance estimator $\widehat\eta$ is fitted on an auxiliary block
$Z_{n+1:N}=\{Z_{n+1},\dots,Z_N\}$, independent of $Z_{1:n}$. Then there exist constants
$\kappa_{\mathrm{fit}} > 0$, $\beta > 0$, and $L < \infty$, and a random neighborhood
$\mathcal N_n$ of $C^*$ such that, with probability tending to one,

(i) $\widehat C \in \mathcal N_n$ and $\widehat C^{(r)} \in \mathcal N_n$ for all
$1 \le r \le n$, where $\widehat C^{(r)}$ denotes the codebook estimator recomputed after
replacing the $r$th primary sample observation $Z_r$ by an independent copy $Z_r'$; and

(ii) for every $0 < t \le \kappa_{\mathrm{fit}}$,
\[
\sup_{C \in \mathcal N_n}
\Pb\!\left\{
\operatorname{dist}\!\bigl(\widehat\mu(X),\partial C\bigr) \le t
\,\middle|\,
Z_{n+1:N}
\right\}
\le
L t^\beta.
\]
Here $\partial C$ denotes the union of the Voronoi boundaries induced by $C$.
\end{assumptionp}

We then prove the following lemma, a main ingredient to prove Theorem \ref{thm:Chat-root-nCAN}.
In what follows, we use Assumption~\ref{assumption:A8'-fitted-boundary-margin} as the proof specific version of Assumption~\ref{assumption:A8-fitted-boundary-margin}. Together with the leave one out stability established under Assumption~\ref{assumption:A7-hard-fitted-margin} in Lemma~\ref{lem:primary-codebook-stability}, this yields control of the empirical process cross term, as formalized in the next lemma.

\begin{lemma} \label{lem:cross-term}
Let $\Pn$ be the empirical measure based on the primary sample
$Z_{1:n}=\{Z_1,\dots,Z_n\}$, and let $\Pb$ denote the population distribution.
The nuisance estimator $\widehat\eta$ is fitted on an auxiliary block
$Z_{n+1:N}=\{Z_{n+1},\dots,Z_N\}$, independent of $Z_{1:n}$, with $N \asymp n$,
and the codebook estimator $\widehat C$ is computed using the full sample of size $N$.
Assume Assumptions \ref{assumption:A1-boundedness},
\ref{assumption:A3-uniqueness-of-codebook},
\ref{assumption:A4-pihat-boundedness},
\ref{assumption:A5-np-consistency-condition},
\ref{assumption:A7-hard-fitted-margin}, and
\ref{assumption:A8-fitted-boundary-margin}. Then
\[
\left\|
(\Pn-\Pb)\Bigl\{
\upvarphi_{\widehat C}(Z;\widehat\eta)
-
\upvarphi_{C^*}(Z;\widehat\eta)
\Bigr\}
\right\|_2
=
O_\Pb\!\left(n^{-\min\{\beta/2,1\}}\right).
\]
In particular, if $\beta \ge 1$, then
\[
\left\|
(\Pn-\Pb)\Bigl\{
\upvarphi_{\widehat C}(Z;\widehat\eta)
-
\upvarphi_{C^*}(Z;\widehat\eta)
\Bigr\}
\right\|_2
=
O_\Pb\!\left(n^{-1/2}\right).
\]
\end{lemma}

\begin{proof}
Throughout, condition on the auxiliary block $Z_{n+1:N}$. Under this conditioning,
$\widehat\eta$ is fixed, and
\[
\Delta_n
:=
(\Pn-\Pb)\Bigl\{
\upvarphi_{\widehat C}(Z;\widehat\eta)
-
\upvarphi_{C^*}(Z;\widehat\eta)
\Bigr\}
\]
is a functional only of the primary sample $Z_{1:n}$.

For each $r \in \{1,\ldots,n\}$, let $Z_r'$ be an independent copy of $Z_r$, and let
$\widehat C^{(r)}$ denote the empirical minimizer recomputed after replacing $Z_r$ by
$Z_r'$, keeping the auxiliary block fixed. Since codebooks are identified only up to
permutation, define
\[
\widetilde\delta_n
=
\max_{1 \le r \le n}
\min_{\sigma \in \mathfrak S_k}
\max_{1 \le j \le k}
\bigl\|
\widehat c_j-\widehat c_{\sigma(j)}^{(r)}
\bigr\|_2,
\]
where $\mathfrak S_k$ denotes the set of all permutations of $\{1,\ldots,k\}$.

By Lemma \ref{lem:primary-codebook-stability},
\[
\widetilde\delta_n
=
O_\Pb(n^{-1})
=
O_\Pb(N^{-1}),
\]
as $N \asymp n$.

Since the dimension $kp$ is fixed, it suffices to bound an arbitrary coordinate of
$\Delta_n$. Fix $\ell \in \{1,\dots,kp\}$, and define
\[
h(z)
=
\bigl[\upvarphi_{\widehat C}(z;\widehat\eta)
-
\upvarphi_{C^*}(z;\widehat\eta)\bigr]_\ell,
\quad
h^{(r)}(z)
=
\bigl[\upvarphi_{\widehat C^{(r)}}(z;\widehat\eta)
-
\upvarphi_{C^*}(z;\widehat\eta)\bigr]_\ell.
\]
Thus for $\Delta_n$, we write
\[
\Delta_{n,\ell}
=
(\Pn-\Pb)\{h(Z)\}.
\]

Let $\ell_C(x)$ denote the Voronoi label assigned by codebook $C$ to the fitted feature
vector $\widehat\mu(x)$, and define
\[
\mathfrak A_r(z)
=
\mathbbm 1\bigl\{
\ell_{\widehat C}(x)\neq \ell_{\widehat C^{(r)}}(x)
\bigr\},
\qquad z=(x,a,y).
\]
By Lemma \ref{lem:voronoi-label-stability}, on the event
$\{\widehat C,\widehat C^{(r)}\in\mathcal N_n\}$,
\[
\mathfrak A_r(z)=1
\quad\Longrightarrow\quad
\operatorname{dist}\!\bigl(\widehat\mu(x),\partial\widehat C\bigr)\le 2\widetilde\delta_n.
\]
Therefore, on the event
$\{\widehat C,\widehat C^{(r)}\in\mathcal N_n,\ \widetilde\delta_n<\kappa_{\mathrm{fit}}/2\}$,
Assumption \ref{assumption:A8-fitted-boundary-margin} yields
\[
\Pb\bigl\{\mathfrak A_r(Z)=1 \,\big|\, Z_{n+1:N}\bigr\}
\le
L(2\widetilde\delta_n)^\beta
=
O_\Pb(n^{-\beta}).
\]
Since $\widehat C,\widehat C^{(r)}\in\mathcal N_n$ with probability tending to one and
$\widetilde\delta_n=O_\Pb(n^{-1})$, we conclude that
\begin{equation}
\label{eq:Ar-bound-revised}
\Pb\bigl\{\mathfrak A_r(Z)=1 \,\big|\, Z_{n+1:N}\bigr\}
=
O_\Pb\!\bigl(n^{-\beta}\bigr).
\end{equation}

Next we bound $|h(z)-h^{(r)}(z)|$.
On the event $\{\mathfrak A_r(z)=0\}$, the same Voronoi cell is active under
$\widehat C$ and $\widehat C^{(r)}$. By the explicit form of $\upvarphi_C$ in
\eqref{eqn:varphi-derivative}, the difference between
$\upvarphi_{\widehat C}(z;\widehat\eta)$ and
$\upvarphi_{\widehat C^{(r)}}(z;\widehat\eta)$ is then due only to the shift in the
active center. Therefore,
\[
|h(z)-h^{(r)}(z)| \le 2\widetilde\delta_n
\quad\text{on}\quad\{\mathfrak A_r(z)=0\}.
\]
On the event $\{\mathfrak A_r(z)=1\}$, boundedness of $\upvarphi_C$ under
Assumptions \ref{assumption:A1-boundedness} and
\ref{assumption:A4-pihat-boundedness} implies that there exists $M<\infty$ such that
\[
|h(z)-h^{(r)}(z)| \le 2M.
\]
Putting these together, it follows that
\begin{equation}
\label{eq:h-diff-bound-revised}
|h(z)-h^{(r)}(z)|
\le
2M\,\mathfrak A_r(z) + 2\widetilde\delta_n.
\end{equation}

Using $(a+b)^2 \le 2a^2+2b^2$, together with \eqref{eq:Ar-bound-revised} and
$\widetilde\delta_n^2=O_\Pb(n^{-2})$, we obtain
\begin{equation}
\label{eq:h-second-moment-revised}
\E\bigl[(h(Z)-h^{(r)}(Z))^2 \,\big|\, Z_{n+1:N}\bigr]
=
O_\Pb\!\bigl(n^{-\beta}\bigr).
\end{equation}

Let $\Delta_{n,\ell}^{(r)}$ denote the quantity obtained from $\Delta_{n,\ell}$ after
replacing $Z_r$ by $Z_r'$. Since only the primary sample is perturbed,
\[
\Delta_{n,\ell}-\Delta_{n,\ell}^{(r)}
=
\frac{1}{n}\sum_{i=1}^n
\Bigl\{
h(Z_i)-h^{(r)}(Z_i)
\Bigr\}
-
\E\bigl[h(Z)-h^{(r)}(Z)\,\big|\, Z_{n+1:N}\bigr].
\]
Therefore,
\[
\E\bigl[(\Delta_{n,\ell}-\Delta_{n,\ell}^{(r)})^2 \,\big|\, Z_{n+1:N}\bigr]
\lesssim
\frac{1}{n}
\E\bigl[(h(Z)-h^{(r)}(Z))^2 \,\big|\, Z_{n+1:N}\bigr]
=
O_\Pb\!\bigl(n^{-1-\beta}\bigr).
\]

Applying the (conditional) Efron--Stein inequality to the scalar functional
$\Delta_{n,\ell}$ of the primary sample $Z_{1:n}$ yields
\[
\Var(\Delta_{n,\ell} \mid Z_{n+1:N})
\le
\frac12\sum_{r=1}^n
\E\bigl[(\Delta_{n,\ell}-\Delta_{n,\ell}^{(r)})^2 \,\big|\, Z_{n+1:N}\bigr]
=
O_\Pb\!\bigl(n^{-\beta}\bigr).
\]
Hence
\[
\Delta_{n,\ell}-\E(\Delta_{n,\ell} \mid Z_{n+1:N})
=
O_\Pb\!\bigl(n^{-\beta/2}\bigr).
\]

It remains to bound the conditional bias.
By the usual leave-one-out identity,
\[
\E(\Delta_{n,\ell} \mid Z_{n+1:N})
=
\frac1n\sum_{i=1}^{n}
\E\bigl\{h(Z_i)-h^{(i)}(Z_i)\,\big|\, Z_{n+1:N}\bigr\},
\]
since, conditional on the auxiliary block, \(Z_i\) is independent of the leave-one-out estimator \(\widehat C^{(i)}\).
Using \eqref{eq:h-diff-bound-revised} and \eqref{eq:Ar-bound-revised},
\[
\bigl|\E(\Delta_{n,\ell} \mid Z_{n+1:N})\bigr|
\lesssim
n^{-\beta} + n^{-1}
=
O_\Pb\!\bigl(n^{-\min\{\beta,1\}}\bigr).
\]

Using the decomposition
\[
\Delta_{n,\ell}
=
\Bigl(\Delta_{n,\ell}-\E(\Delta_{n,\ell}\mid Z_{n+1:N})\Bigr)
+
\E(\Delta_{n,\ell}\mid Z_{n+1:N}),
\]
together with the fluctuation and conditional bias bounds derived above, we obtain
\[
\Delta_{n,\ell}
=
O_\Pb\!\bigl(n^{-\beta/2}\bigr)
+
O_\Pb\!\bigl(n^{-\min\{\beta,1\}}\bigr)
=
O_\Pb\!\bigl(n^{-\min\{\beta/2,1\}}\bigr).
\]

Because the coordinate index $\ell$ was arbitrary and the dimension $kp$ is fixed,
the same rate holds for the full Euclidean norm:
\[
\left\|
(\Pn-\Pb)\Bigl\{
\upvarphi_{\widehat C}(Z;\widehat\eta)
-
\upvarphi_{C^*}(Z;\widehat\eta)
\Bigr\}
\right\|_2
=
O_\Pb\!\bigl(n^{-\min\{\beta/2,1\}}\bigr).
\]
In particular, if $\beta \ge 1$, then
\[
\left\|
(\Pn-\Pb)\Bigl\{
\upvarphi_{\widehat C}(Z;\widehat\eta)
-
\upvarphi_{C^*}(Z;\widehat\eta)
\Bigr\}
\right\|_2
=
O_\Pb\!\left(n^{-1/2}\right).
\]
This completes the proof.
\end{proof}

Lemma~\ref{lem:cross-term} enables control of the cross-term without invoking strong empirical process conditions. We are now prepared to prove Theorem~\ref{thm:Chat-root-nCAN}.

\begin{proof}[Proof of Theorem~\ref{thm:Chat-root-nCAN}]
The population first-order condition for the optimal codebook is
\[
\Pb\left\{\nabla \varphi_{C^*}(Z;\eta)\right\}
=
\Pb\left\{\upvarphi_{C^*}(Z;\eta)\right\}
=
0,
\]
where $\upvarphi_C$ is defined in \eqref{eqn:varphi-derivative}. Also,
\eqref{eqn:eif-codebook-estimator} is equivalent to minimizing $\widehat R(C)$ with
\begin{equation} \label{eqn:eif-of-risk-alternative}
\varphi_C(Z;\eta)
=
\sum_{a\in\mathcal A}
\left\{
\varphi_{1,a}^2(Z;\eta)
-
2\varphi_{1,a}(Z;\eta)\bigl[\Pi_C(\mu)\bigr]_a
+
\bigl[\Pi_C(\mu)\bigr]_a^2
\right\}.
\end{equation}
We proceed using \eqref{eqn:eif-of-risk-alternative}.

The argument parallels the proof of Theorem 3 of \citet{kennedy2023semiparametric}. By abuse of notation, we rewrite the empirical moment condition as
\begin{align}
o_\Pb\!\left(n^{-1/2}\right)
&=
\Pn\left\{\upvarphi_{\widehat C}(Z;\widehat\eta)\right\}
-
\Pb\left\{\upvarphi_{C^*}(Z;\eta)\right\}
\label{eqn:clt-proof-moment-condition}\\
&=
(\Pn-\Pb)\left\{\upvarphi_{C^*}(Z;\eta)\right\}
+
(\Pn-\Pb)\left\{
\upvarphi_{\widehat C}(Z;\widehat\eta)-\upvarphi_{C^*}(Z;\widehat\eta)
\right\}
\label{eqn:clt-proof-1}\\
&\quad+
(\Pn-\Pb)\left\{
\upvarphi_{C^*}(Z;\widehat\eta)-\upvarphi_{C^*}(Z;\eta)
\right\}
\label{eqn:clt-proof-2}\\
&\quad+
\Pb\left\{
\upvarphi_{\widehat C}(Z;\widehat\eta)-\upvarphi_{C^*}(Z;\widehat\eta)
\right\}
+
\Pb\left\{
\upvarphi_{C^*}(Z;\widehat\eta)-\upvarphi_{C^*}(Z;\eta)
\right\},
\label{eqn:clt-proof-3}
\end{align}
which is obtained by simply adding and subtracting terms. This is a system of \(kp\) equations. We omit the fold indicator \(\mathbbm{1}(B=b)\) for simplicity.

The first term in \eqref{eqn:clt-proof-1} is asymptotically multivariate Gaussian by the multivariate central limit theorem, and hence is \(O_\Pb(n^{-1/2})\). Also, by Lemma \ref{lem:cross-term},
\begin{equation}
(\Pn-\Pb)\left\{
\upvarphi_{\widehat C}(Z;\widehat\eta)-\upvarphi_{C^*}(Z;\widehat\eta)
\right\}
=
O_\Pb\!\left(n^{-\min\{\beta,1\}/2}\right).
\label{eqn:upvarphi-empirical-process}
\end{equation}
Therefore, the two terms in \eqref{eqn:clt-proof-1} together are
\[
O_\Pb\!\left(n^{-\min\{\beta,1\}/2}\right).
\]

Under Assumption \ref{assumption:A5-np-consistency-condition}, the term in \eqref{eqn:clt-proof-2} is \(o_\Pb(n^{-1/2})\) by \citet[][Lemma 2]{kennedy2018sharp}.

We next consider the second term in \eqref{eqn:clt-proof-3}. It suffices to study the \(j\)-th block of \(\upvarphi_C\). Adding and subtracting terms gives
\begin{align*}
&\Pb\left[
\left(c_j-\upvarphi_1(Z;\widehat\eta)\right)\mathbbm{1}\{j=d(\widehat\mu,C^*)\}
-
\left(c_j-\upvarphi_1(Z;\eta)\right)\mathbbm{1}\{j=d(\mu,C^*)\}
\right]\\
&=
\Pb\left[
\left\{\upvarphi_1(Z;\widehat\eta)-\upvarphi_1(Z;\eta)\right\}
\mathbbm{1}\{j=d(\widehat\mu,C^*)\}
\right]\\
&\quad+
\Pb\left[
\left\{c_j-\upvarphi_1(Z;\eta)\right\}
\left(
\mathbbm{1}\{j=d(\widehat\mu,C^*)\}
-
\mathbbm{1}\{j=d(\mu,C^*)\}
\right)
\right].
\end{align*}
The first term is bounded by
\[
\left|
\Pb\left[
\left\{\upvarphi_1(Z;\widehat\eta)-\upvarphi_1(Z;\eta)\right\}
\mathbbm{1}\{j=d(\widehat\mu,C^*)\}
\right]
\right|
\lesssim
\max_a
\|\widehat\mu_a-\mu_a\|
\,
\|\widehat\pi_a-\pi_a\|
\,\bm 1_{(p)},
\]
see Remark \ref{rmk:E-bias-psi-1}.

For the second term, note that
\begin{align*}
   & \Pb\left[
\left\{c_j-\upvarphi_1(Z;\eta)\right\}
\left(
\mathbbm{1}\{j=d(\widehat\mu,C^*)\}
-
\mathbbm{1}\{j=d(\mu,C^*)\}
\right)
\right]\\
&=
\Pb\left[
(c_j-\mu)
\left(
\mathbbm{1}\{j=d(\widehat\mu,C^*)\}
-
\mathbbm{1}\{j=d(\mu,C^*)\}
\right)
\right].
\end{align*}
Let \(d=d(\mu,C^*)\) and \(\widehat d=d(\widehat\mu,C^*)\). Then
\begin{align*}
&\left|
\Pb\left[
(c_j^*-\mu)
\left(
\mathbbm{1}\{j=\widehat d\}-\mathbbm{1}\{j=d\}
\right)
\right]
\right|\\
&\le
\bm 1_{(p)}
\Pb\left[
\mathbbm{1}\left\{
\sqrt{f_{c_d^*}(\mu)}<\sqrt{f_{c_{\widehat d}^*}(\mu)}
\right\}
\left\{
\sqrt{f_{c_{\widehat d}^*}(\mu)}-\sqrt{f_{c_d^*}(\mu)}
\right\}
\right],
\end{align*}
where \(f_{c_j^*}(\mu)=\|\mu-c_j^*\|_2^2\). By the same argument used to derive
\eqref{eqn:app-phi_bias_2} and \eqref{eqn:app-phi_bias_2-2} in the proof of Lemma
\ref{lem:phi-bias-alphap1}, this is bounded by
\[
\max_a \|\widehat\mu_a-\mu_a\|_\infty^{\alpha+1}
+
\frac{1}{\kappa}
\max_a \|\widehat\mu_a-\mu_a\|_\infty
\|\widehat\mu_a-\mu_a\|_{\Pb,1}.
\]
Therefore,
\begin{align*}
\Pb\left\{
\upvarphi_{C^*}(Z;\widehat\eta)-\upvarphi_{C^*}(Z;\eta)
\right\}
\lesssim
\Bigg(
&\max_a \|\widehat\mu_a-\mu_a\|\,\|\widehat\pi_a-\pi_a\|\\
&+
\max_a \|\widehat\mu_a-\mu_a\|_\infty^{\alpha+1}\\
&+
\frac{1}{\kappa}
\max_a \|\widehat\mu_a-\mu_a\|_\infty
\|\widehat\mu_a-\mu_a\|_{\Pb,1}
\Bigg)\bm 1_{(p)}.
\end{align*}

Finally, consider the first term in \eqref{eqn:clt-proof-3}. The derivative of
\(C \mapsto \Pb\{\upvarphi_C(Z;\eta)\}\) at \(C=C^*\) is
\[
\frac{\partial}{\partial C}\Pb\left\{\upvarphi_C(Z;\eta)\right\}\Big|_{C=C^*}
=
2\operatorname{diag}\bigl(\bm 1_{(p)}p_1^*,\ldots,\bm 1_{(p)}p_k^*\bigr)
\equiv
M(C^*,\eta),
\]
where \(p_j^*=\Pb\{j=d(\mu,C^*)\}\). Since each \(p_j^*>0\), the matrix \(M(C^*,\eta)\) is nonsingular.

Also, \(\widehat C \to_\Pb C^*\) by Corollary \ref{cor:consistency-of-Chat-eif-estimator}. Therefore, by a mean value expansion in \(C\),
\[
\Pb\left\{
\upvarphi_{\widehat C}(Z;\widehat\eta)-\upvarphi_{C^*}(Z;\widehat\eta)
\right\}
=
M(C^*,\widehat\eta)(\widehat C-C^*)
+
o_\Pb\!\left(\|\widehat C-C^*\|_1\right).
\]
Moreover, \(M(C^*,\widehat\eta)=M(C^*,\eta)+o_\Pb(1)\) under Assumption
\ref{assumption:A5-np-consistency-condition}, so
\[
\Pb\left\{
\upvarphi_{\widehat C}(Z;\widehat\eta)-\upvarphi_{C^*}(Z;\widehat\eta)
\right\}
=
M(C^*,\eta)(\widehat C-C^*)
+
o_\Pb\!\left(\|\widehat C-C^*\|_1\right).
\]

Substituting the preceding bounds into \eqref{eqn:clt-proof-moment-condition}, we obtain
\begin{align*}
o_\Pb\!\left(n^{-1/2}\right)
&=
(\Pn-\Pb)\left\{\upvarphi_{C^*}(Z;\eta)\right\}
+
M(C^*,\eta)(\widehat C-C^*)
+
R_{2,n}\bm 1_{(p)}\\
&\quad+
O_\Pb\!\left(n^{-\min\{\beta,1\}/2}\right)
+
o_\Pb\!\left(\|\widehat C-C^*\|_1\right).
\end{align*}
Equivalently,
\begin{align*}
\widehat C-C^*
&=
-M(C^*,\eta)^{-1}(\Pn-\Pb)\left\{\upvarphi_{C^*}(Z;\eta)\right\}\\
&\quad+
O_\Pb(R_{2,n})
+
O_\Pb\!\left(n^{-\min\{\beta,1\}/2}\right)
+
o_\Pb\!\left(\|\widehat C-C^*\|_1\right).
\end{align*}
Since \(M(C^*,\eta)\) is nonsingular, this implies
\[
\|\widehat C-C^*\|_1
=
O_\Pb\!\left(R_{2,n}+n^{-\min\{\beta,1\}/2}\right).
\]

Finally, by \citet[][Lemma A]{pollard1982central}, under Assumption
\ref{assumption:A1-boundedness}, the map \(C \mapsto \Pb\{f_C(\mu)\}\) is differentiable at
\(C=C^*\). Hence
\begin{align*}
R(\widehat C)-R(C^*)
&=
\Pb\left\{f_{\widehat C}(\mu)-f_{C^*}(\mu)\right\}\\
&=
(\widehat C-C^*)^\top \gamma_{C^*}(\mu)
+
o_\Pb\!\left(\|\widehat C-C^*\|_1\right).
\end{align*}
Because \(C^*\) is a stationary point of \(R(C)\), the linear term vanishes, and therefore
\[
R(\widehat C)-R(C^*)
=
o_\Pb\!\left(\|\widehat C-C^*\|_1\right)
=
o_\Pb\!\left(R_{2,n}+n^{-\min\{\beta,1\}/2}\right).
\]
This completes the proof.
\end{proof}

\subsection{Proof of Corollary \ref{thm:clt-new}}

\begin{proof}[Proof of Corollary \ref{thm:clt-new}]
    The result immediately follows by noticing
    \begin{align*}
        (\Pn - \Pb)\left\{\upvarphi_{\widehat{C}}(Z;\widehat{\eta}) - \upvarphi_{C^*}(Z;\widehat{\eta}) \right\} = O_{\Pb}\left(n^{-1/2}\right),
    \end{align*}
    when $\alpha > 1$. Applying this to the original moment condition \eqref{eqn:clt-proof-moment-condition}, along with the other results in Section \ref{app-sec:proof-thm-Chat-root-nCAN}, we obtain
    \begin{align} \label{eqn:Chat-C-expansion}
        \widehat{C} - C^* = -M(C^*, \eta)^{-1}(\Pn - \Pb)\left\{ \upvarphi_{C^*}(Z;\eta) \right\} + O_\Pb(R_{2,n})+ o_\Pb\left(\Vert \widehat{C} - C^*\Vert_1 \right)+o_\Pb\left( \frac{1}{\sqrt{n}} \right).
    \end{align}
   Substituting the result of Theorem \ref{thm:Chat-root-nCAN} into \eqref{eqn:Chat-C-expansion} gives
    \begin{align*}
        \widehat{C} - C^* = -M(C^*, \eta)^{-1}(\Pn - \Pb)\left\{ \upvarphi_{C^*}(Z;\eta) \right\} +O_\Pb(R_{2,n})+o_\Pb\left( \frac{1}{\sqrt{n}} \right).
    \end{align*}
\end{proof}

The next remark outlines an alternative, more restrictive route to the empirical process bound based on an empty-margin condition.

\begin{remark}\label{app:rmk:empty-margin-donsker}
Fix $\bar\eta$, and let $\bar\mu$ denote its regression component. Suppose there exist $\kappa>0$ and a neighborhood $\mathcal N$ of $C^*$ such that
\[
\Pb\{\bar\mu(X)\in N_C(\kappa)\}=0
\qquad \text{for all } C\in\mathcal N.
\]
Then, for every $C\in\mathcal N$, the Voronoi label map
\[
x \mapsto d(\bar\mu(x),C)
\]
is locally constant almost surely, since no population mass lies in a $\kappa$-neighborhood of the relevant Voronoi boundaries. Hence the indicators
\[
\mathbbm 1\{j=d(\bar\mu,C)\}, \qquad j=1,\ldots,k,
\]
do not fluctuate locally in $C$. Consequently, on $\mathcal N$ the class
\[
\mathcal F_{\mathcal N}
:=
\{\upvarphi_C(\cdot;\bar\eta): C\in\mathcal N\}
\]
reduces to a finite union of piecewise linear, hence locally Lipschitz, parametric subclasses indexed by $C$. Since the parameter space is finite dimensional and can be taken compact locally around $C^*$, each such subclass has finite bracketing entropy, and therefore $\mathcal F_{\mathcal N}$ is Donsker; see, e.g., Lemma 19.24 of \citet{van2000asymptotic}. It follows that, whenever $\widehat C \in \mathcal N$ with probability tending to one,
\[
(\Pn-\Pb)\{\upvarphi_{\widehat C}(Z;\bar\eta)-\upvarphi_{C^*}(Z;\bar\eta)\}
=
O_\Pb(n^{-1/2}).
\]
\end{remark}


\end{document}